\newtheorem{theorem}{Theorem}
\newtheorem*{theorem*}{Theorem}
\newtheorem{lemma}[theorem]{Lemma}
\newtheorem{prop}[theorem]{Proposition}
\newtheorem{corollary}[theorem]{Corollary}
\newtheorem{defn}[theorem]{Definition}
\def\squareforqed{\hbox{\rule{2.5mm}{2.5mm}}}
\def\p{\mathbf{p}}
\def\q{\mathbf{q}}
\def\f{\mathbf{f}}
\def\x{\mathbf{x}}
\def\X{\mathbf{X}}
\def\y{\mathbf{y}}
\def\1{\mathbf{1}}
\def\z{\mathbf{z}}
\def\w{\mathbf{w}}
\def\r{\mathbf{r}}
\def\Snm{\mathcal{S}_{n,m}}
\def\QED{\ifmmode\squareforqed 
  \else{\nobreak\hfil   
    \penalty50                 
    \hskip1em                  
    \null                      
    \nobreak                   
    \hfil                      
    \squareforqed              
    \parfillskip=0pt           
    \finalhyphendemerits=0     
    \endgraf}                  
  \fi}
\newcommand{\R}{\ensuremath{\mathbb R}}
\newcommand{\EE}{\ensuremath{\mathbb{E}}}
\newcommand{\PP}{\ensuremath{\mathbb{P}}}
\begin{document}

\title{Pricing and Optimization in Shared Vehicle Systems:\\ An Approximation Framework}

\author{
Siddhartha Banerjee \thanks{Cornell University, \texttt{sbanerjee@cornell.edu}. 
}
\and Daniel Freund \thanks{MIT, \texttt{ dfreund@mit.edu}.
}
\and Thodoris Lykouris \thanks{Microsoft Research NYC, \texttt{thlykour@microsoft.com}. 
}}

\date{First version: August 2016\\Current version: May 2021%
\footnote{The current version represents the content that will appear in Operations Research. A one-page abstract of the paper appeared at the 18th ACM Conference on Economics and Computation (EC 2017).}}


\maketitle
\begin{abstract}
Optimizing shared vehicle systems (bike/scooter/car/ride-sharing) is more challenging compared to traditional resource allocation settings due to the presence of \emph{complex network externalities} -- changes in the demand/supply at any location affect future supply throughout the system within short timescales. These externalities are well captured by steady-state Markovian models, which are therefore widely used to analyze such systems. However, using such models to design pricing and other control policies is computationally difficult since the resulting optimization problems are high-dimensional and non-convex.

To this end, we develop a \emph{rigorous approximation framework} for shared vehicle systems, providing a unified approach for a wide range of controls (pricing, matching, rebalancing), objective functions (throughput, revenue, welfare), and system constraints (travel-times, welfare benchmarks, posted-price constraints).
Our approach is based on the analysis of natural convex relaxations, and obtains as special cases existing approximate-optimal policies for limited settings, asymptotic-optimality results, and heuristic policies.
The resulting guarantees are non-asymptotic and parametric, and provide operational insights into the design of real-world systems. In particular, for any shared vehicle system with $n$ stations and $m$ vehicles, our framework obtains an approximation ratio of $1+(n-1)/m$, which is particularly meaningful when $m/n$, the average number of vehicles per station, is large, as is often the case in practice.
\end{abstract}
\thispagestyle{empty}
\newpage

\setcounter{page}{1}

\setcounter{footnote}{0}

\section{Introduction}
\label{sec:introduction}

Shared vehicle systems, such as those for bike-sharing (Citi Bike in NYC, V\'{e}lib in Paris), scooter-sharing (Lime, Bird) car-sharing (car2go, Zipcar), and ride-sharing (Uber, Lyft), are fast becoming essential components of the urban transit infrastructure. In such systems, customers have access to a collection of personal transportation vehicles, which can be engaged anytime (subject to vehicle availability) and between a large number of source-destination locations. In bike/scooter/car-sharing, the vehicles are operated by the customers themselves, while in ride-sharing they are operated by independent drivers.

All vehicle sharing systems experience inefficiencies due to \emph{limited supply} (vehicles) and \emph{demand heterogeneity} across time and space. 
These inefficiencies, however, can often be greatly reduced by \emph{rebalancing} the demand and/or supply. 
Pricing has traditionally been the main tool to balance demand and supply in settings with limited supply and heterogeneous demand: for instance, in limited-item auctions or airplane/hotel reservations~\cite{gallego1994optimal,hartline2013}. Ride-sharing platforms have also long utilized dynamic pricing, and today, many bike-sharing platforms are experimenting with location-based incentive schemes~\cite{chung2018bike}. Shared vehicle platforms also enable other means of rebalancing, such as repositioning empty vehicles, or redirecting customers to nearby stations, though the precise set of deployed rebalancing tools depends on the kind of system: product design, regulatory constraints, or operational capabilities may limit or constrain the set of available rebalancing tools.
Moreover, these tools can be used towards achieving different objectives -- examples include revenue/welfare/throughput maximization, multi-objective settings, etc.

In contrast to other limited supply settings, however, shared vehicle settings are more challenging due to two unique features. The first is the presence of \emph{spatial and temporal supply externalities}: whenever a customer engages a vehicle, this not only decreases the instantaneous availability at the source location, but also affects the future availability at all other locations in the system. 
The other distinguishing feature is the high frequency of events (passenger arrivals/rides) in such settings. This necessitates treating the problem as an infinite-horizon control problem, as opposed to finite-horizon dynamic programming approaches used in traditional transportation and revenue management settings~\cite{gallego1994optimal,adelman2007price}. 
The high frequency of events tends to drive such systems into operating under a dynamic equilibrium state, jointly determined by the demand and supply characteristics as well as the chosen controls; the performance of any pricing/control policy is determined by this equilibrium.

All the above features (limited supply, demand heterogeneity, supply externalities and fast operational timescales) are well captured by \emph{closed queueing network models}~\cite{IntroStochNetworks,kelly2011}, which are thus widely adopted in the literature on shared vehicle systems~\cite{george2012stochastic,Waserhole2014,zhang2016control,bimpikis2016spatial,kanoria2019near,balseiro2020dynamic}. 
These models use a Markov chain to track the number of vehicles across locations. 
Each location experiences a stream of arriving customers, who engage available vehicles and take them to their desired destination.
Increasing the price for a ride between a pair of stations decreases the number of customers willing to take that ride, which over time affects the distribution of vehicles across all stations.
Even though such models can be well-calibrated, based on demand rates and price elasticities estimated from historical data, the problem of designing good pricing/control policies under such a model is complex due to a combination of high dimensionality and intrinsic non-concavity of the optimization problem (see Section~\ref{ssec:queuemodel}). 
Consequently, previous work has focused only on a narrow set of objectives (typically, weighted throughput) and is largely based on heuristics or simulation/numerical techniques, with few finite-system guarantees (see Section \ref{ssec:relwork} for a discussion). 
In contrast, finite guarantees for settings with travel-times, more complex objectives or constraints, were missing from the literature. 

In this context, we develop a \emph{rigorous approximation framework} for vehicle-sharing systems, providing a unified approach for a wide range of controls (pricing, matching, rebalancing), objective functions (throughput, revenue, welfare), and system constraints (travel-times, welfare benchmarks, posted-price constraints, limited empty-car movement). In all settings, we obtain parameterized performance guarantees, which improve with the number of vehicles in the system, and are near-optimal in the parameter regimes of real systems.
Further, we recover as special cases existing approximate-optimal policies for limited settings \cite{Waserhole2014} and asymptotic-optimality results ~\cite{george2012stochastic, braverman2016empty}. For the latter our guarantees provide an elementary proof of optimality for the so-called large-market scaling without necessitating the derivation of the associated fluid limits.
Our approach leverages techniques from convex optimization and approximation algorithms, and combines these with an infinite projection and pullback technique. Given the widespread use of closed queueing models for a variety of other applications, this technique may prove useful in other areas as well.

\subsection{Outline of our Contributions}
\label{ssec:outline}

We model a shared vehicle system with $m$ vehicles and $n$ stations as a continuous-time finite-state Markov chain that tracks the number of vehicles (units) at each station (node), and use this to study a variety of pricing and control problems. 
In Section \ref{sec:pricing} we illustrate our methodology for the vanilla setting of maximizing a single objective (e.g., throughput, revenue, or welfare) using prices, without travel times. In Sections~\ref{ssec:delay} and~\ref{sec:applications} we apply our framework to a broad spectrum of controls and constraints. 

A brief description of the vanilla model is as follows (see Section \ref{sec:model} for details): Each station in the system observes a Poisson arrival process of customers. Arriving customers draw a ride-valuation and destination from some known distribution. 
Upon arrival at a station, the customer is quoted a price and one of three scenarios occurs:\\ i) The customer is not willing to pay the price, i.e. the price exceeds her value.\\
ii) The customer is willing to pay the price but no unit is available at the node.\\
iii) The customer is willing to pay the price, and a vehicle is available. \\
In the first two cases, the customer immediately exits the system at her source station; in the final case, she engages a vehicle to take her to her destination station, and then leaves the system. Thus, in the first two cases, the system state remains unchanged; in the third, the number of vehicles at the origin is decremented by one, and at the destination is incremented by one, instantaneously. This describes the basic dynamics under which we aim to maximize the long-run average performance.
The prices can depend on the current state of the system.
Consequently, the resulting optimization problem is high-dimensional, and moreover, is not convex (see Section \ref{ssec:queuemodel}).

In Section~\ref{sec:pricing} we introduce our approximation framework for the vanilla setting described above. Our framework is based on a three-step analysis of a convex relaxation, which we refer to as the \emph{elevated flow relaxation} (see Section \ref{ssec:technical_contributions}). 
The relaxation is based on a concave point-wise upper bound which we call \emph{elevated objective}, combined with flow-conservation constraints to capture the network externalities.
The solution to the relaxation provides a state-independent pricing policy, which we prove is within a factor of $1+(n-1)/m$ with respect to the optimal state-dependent policy. This offers an alternate proof for the main result of Waserhole and Jost \cite{Waserhole2014}.

In Section \ref{ssec:delay} we relax the assumption that rides occur instantaneously to incorporate travel times between stations. We show how our approach provides guarantees that smoothly transition as travel-times increase, from a $O(1/m)$ gap with respect to the elevated flow relaxation, to a~$O(1/\sqrt{m})$ gap in so-called `heavy-traffic' regimes.
Our guarantees help quantify the parameters that control the scaling behavior in such systems. We note that these are the first (and only) finite-system results in the presence of travel-times.

In Section~\ref{sec:applications} we demonstrate the versatility of our framework by applying it to different settings:
\begin{itemize}
\item In Section \ref{ssec:rerouting} we study two other rebalancing controls considered in the literature, and get $1+(n-1)/m$ approximation guarantees for the respective optimization problems. In the first, units can be repositioned to a new location after ending a trip \cite{braverman2016empty}; in the second, customers can be matched to units at neighboring nodes \cite{ozkan2016dynamic}. In particular, we obtain an elementary proof of the large-market optimality of our algorithms, recovering the results of \cite{braverman2016empty}, whilst providing a complete analytical treatment of their empirical scaling observations (see Section \ref{ssec:delay}).
\item In Section \ref{ssec:point_pricing}, we consider a special case of the pricing problem where the platform's price multiplier is based only on the origin of a trip, and not on the destination. This model is motivated by contemporary schemes like Uber's early surge pricing as well as Lime's and Lyft's scooter operations. We show that in this case the optimization problem collapses to a one-dimensional concave maximization, allowing us to incorporate constraints that limit the permissible prices to a small set, mirroring real regulatory and operational constraints, e.g., caps discussed by \cite{zhong2020queueing}.
\item In Section \ref{ssec:multi_objective} we turn our attention to more complicated constraints such as Ramsey pricing~\cite{ramsey1927contribution}, where the goal is to maximize revenue subject to a lower bound on social welfare. 
For many shared vehicle systems this is the most relevant objective, since they are operated by private companies in close partnership with city governments. For instance, the Citi Bike system in New York City is run by Lyft under service-level agreements with the NYC Department of Transportation. Note that the complementary problem (maximizing welfare subject to revenue constraints) is of interest when such systems are managed by non-profit organizations and is considered in other paradigms such as the FCC spectrum auction \cite{milgrom}. 
In this context we demonstrate how our approach can be used to obtain a $(\gamma,\gamma)$ bicriteria approximation guarantee with $\gamma = (1+\frac{n-1}{m})$. Our results also extend to constraints that limit \emph{empty miles} driven for repositioning; this may be particularly relevant in the context of city governments in the future regulating congestion due to ``zombie cars'' (autonomous vehicles driving empty to reposition)~\cite{zombiecars}. 
\end{itemize}
Overall, our results recover and unify existing results in this area, and provide a general framework for deriving approximation algorithms for many other settings. 
Though we treat them separately, the settings above can be arbitrarily combined depending on the situation, e.g., allowing restricted point-pricing with repositioning in a scooter-sharing system, or only matching and repositioning in a ride-sharing system in which regulations forbid dynamic pricing. Moreover, the guarantees we obtain are close to 1 for realistic system parameters. For instance, for the parameters ($m=10000, n=600$) of New York City's Citi Bike system, we obtain an approximation ratio of $1.06$.

\subsection{Technical Contributions}\label{ssec:technical_contributions}  
We now briefly highlight the technical ingredients of our three-step approximation framework (see Section~\ref{sec:pricing} for a detailed description).
\begin{enumerate}
\item First, we derive an efficiently computable upper bound for the performance of any control policy.  
In particular, we observe that the solution to the elevated flow relaxation program is an upper bound on the performance of any state-dependent policy in the $m$-unit system, and hence, an upper bound for the optimal state-dependent policy in the $m$-unit system. The main idea here is to construct bounds which encode essential `conservation laws' of the system, yet are easily computable.
\item Next, we show that for a particular class of control policies in an appropriate ``large-population'' limit, the achievable objective values match the set of achievable upper bounds.
In particular, we consider a restricted subset of \emph{state-independent pricing policies}, under which the resulting Markov chain has a product-form stationary measure (in particular, corresponding to a \emph{closed BCMP network}). 
We prove that as the number of units grows to infinity, the set of `flows' achievable under such policies exactly matches the elevated flow relaxation polytope, and moreover for these policies, the elevated objective collapses to the original objective. 
\item Finally, we use the above restricted subset of control policies to derive an approximation guarantee for the performance of the policy provided by our upper bound in the finite-population setting.
Our bounds are based on showing, for all our settings, that the ratio of the performance in the $m$-unit system and the performance in the infinite-supply setting is equal to the maximum availability (steady-state probability of a vehicle being available) across all stations, and bounding that availability by $1-\frac{(n-1)}{m+n-1}$ (in the absence of travel-times, $1-O\left(\frac{1}{\sqrt{m}}\right)$ with travel-times).  
Though the intuition behind this pullback step can be observed via stochastic coupling arguments (in Appendix~\ref{app:zahorjan_proof} we include a proof based on coupling arguments from~\cite{zahorjan1982balanced}), we provide a purely combinatorial proof based on constructing a weighted bipartite graph relating the state spaces of the $m$- and $(m-1)$-unit systems. Our proof, which is based on characterizing the value of an additional unit, may be of independent interest; in particular, it is similar in spirit to the celebrated Bulow-Klemperer theorem in auction theory~\cite{bulow2009sellers} and resource augmentation arguments in competitive analysis for caching~\cite{sleator1985amortized}.
\end{enumerate}
While the first two steps are adaptations of common approaches in the revenue management literature, it is much rarer that the last can be successfully executed. In our case, this is enabled by structural results for BCMP networks, and the recognition that for all of the (state-independent) policies we design the resulting stochastic systems continue to be BCMP networks.

An interesting feature of our technique is that it demonstrates the optimality of our control policies in the so-called ``large market'' regimes~\cite{george2012stochastic,braverman2016empty}, where both number of vehicles and arrival rates of customers jointly scale to infinity. Compared to other works, however, our proof is elementary in that we do not need to characterize the limiting processes.

\subsection{Related Work}
\label{ssec:relwork}
There is a large literature on open and closed queueing network models, building on seminal work of Jackson \cite{jackson1963}, Gordon and Newell ~\cite{gordon1967closed}, and Baskett et al. \cite{baskett1975}; the books by Kelly~\cite{kelly2011} and Serfozo~\cite{IntroStochNetworks} provide an excellent summary. 
Optimal resource allocation in open queueing networks also has a long history, going back to the work of Whittle~\cite{Whittle1985}. 
However, there is much less work for closed networks, in part due to the presence of a normalization constant for which there is no closed-form (though it is computable in $O(nm)$ time via iterative techniques~\cite{Buzen,MVA}). Most existing work on optimizing closed queueing networks use heuristics, with limited or no inite guarantees. In contrast, our work focuses on obtaining algorithms with provable guarantees for a wide range of problems.

Three popular approaches for closed queueing network control in the literature are: $(i)$ using open queueing network approximations, $(ii)$ heuristically imposing a `fairness' property, which we refer to as the balanced demand constraint (see Section \ref{ssec:dcirc}), and $(iii)$ characterizing the fluid limits of closed queueing networks, and obtaining solutions that are optimal in these scaling regimes. 
We now briefly describe each approach. 

The first approach was formalized by Whitt \cite{whitt1984}, via the fixed-population-mean (FPM) method, where exogenous arrival rates are chosen to ensure the mean population is $m$. It has since been used in many applications; for example, Brooks et al. \cite{Brooks2013} use it to derive policies for matching debris removal vehicles to routes following natural disasters. Performance guarantees however are available only in restricted settings.

The second line of work is based on heuristics that enforce the balanced demand property (variously referred to as the demand rebalancing, the fairness, or the bottleneck property). In transportation settings, George et al. used these to optimize weighted throughput~\cite{george2012stochastic}, Zhang et al. to minimize rebalancing costs~\cite{zhang2016control}. Most works typically only provide asymptotic guarantees~\cite{george2012, george2011fleet}.

In concurrent work, Braverman et al. \cite{braverman2016empty} characterized appropriate fluid (or large-market) limits for closed queueing networks, and used it to study the operations of ride-sharing systems. In contrast to our work, which focuses on optimizing a given finite-$m$ system, they consider a regime where $m$ and passenger arrival-rates together scale to $\infty$, and characterize the optimal limiting policy for re-directing idle drivers to under-served locations. Ozkan and Ward~\cite{ozkan2016dynamic} also study asymptotically optimal policies for assigning customers to nearby drivers, but in an open and time-varying system.  Our extensions to settings beyond pricing (see Section \ref{ssec:rerouting}) are inspired by these works; in particular, we show that similar scaling results can be derived within our framework. Moreover, our work provides guarantees for the resulting policies in the finite case (i.e., in the pre-limit), against a more general class of state-dependent policies, and for combinations of different controls, e.g., simultaneous driver re-direction and pricing.

The closest work to ours is that of Waserhole and Jost \cite{Waserhole2014}, who provide a pricing policy for maximizing throughput in closed queueing networks, with the same approximation ratio as ours. 
They do this via a different argument, wherein they observe that if one maximizes throughput while ignoring unit-availability at nodes, then the resulting Markov chain is doubly stochastic, and hence has a uniform distribution for any number of units (this was also noted earlier by Whitt~\cite{whitt1984}). 
A simple counting argument shows that the probability of unit-availability at any node is~$m/(m+n-1)$, and hence the resulting throughput is within a $m/(m+n-1)$ factor of the optimum. This argument is finely tuned to throughput maximization via pricing, and no travel-times, and it is unclear if it extends more generally (see end of Section \ref{sec:pricing} for more details). Our work essentially re-derives this result as part of a more general framework, which then allows us to accommodate more general controls, objectives and constraints as well. 

The use of LP relaxations, as well as other ADP approaches, have a long history of use in finite-horizon control problems~\cite{adelman2007price,hampshire2009dynamic}; however these methods typically either do not scale well and/or lack provable guarantees for infinite-horizon settings. 
The work of Levi and Radovanovic~\cite{levi2010provably} is a notable exception to this in that it uses a similar approach to ours for a class of revenue management problems: they construct a relaxation to the optimal solution, and then analyze the stationary distribution under a resulting policy to prove a finite approximation guarantee versus the relaxation.

The last few years have also seen a great number of results that explicitly focus on pricing in the presence of rational drivers. Rationality tends to play a role either in the drivers' decision to join the platform or in where to drive. For the former,  Banerjee et al. \cite{banerjee2015pricing} prove that, in appropriate fluid regimes, state-independent prices are optimal, even though state-dependent ones are more robust. Chen and Hu \cite{chen2017pricing} as well as Hu and Zhou \cite{hu2017price} have obtained similar results on the optimality of state-independent prices under appropriate limits. Castillo et al. find that in keeping the available supply in a ride-sharing system high (and pick-up times consequently low) surge prices increase both social welfare and revenue \cite{castillo2017surge}. More recently authors have focused on the extent to which dynamic pricing influences where, within a region, drivers drive. Theoretical results of this flavor include those by Bimpikis et al. \cite{bimpikis2016spatial}, Castro et al. \cite{castro2018surge}, Afeche et al. \cite{afeche2018ride}, Ozkan \cite{ozkan2018joint}, and Ma et al. \cite{ma2018spatio}; these models differ from ours in both a technical and a practical sense. From a technical perspective, the main difference lies in the fact that these models tend to assume infinitesimal drivers whereas our main contribution lies in guarantees for finite settings. The main practical difference lies in the very different model of driver repositioning: our work (and that of Braverman et al. \cite{braverman2016empty}) assume that the platform offers drivers a bonus to reposition to different locations; drivers then either accept to reposition and otherwise stay put; this closely resembles operational modes like Lyft's Personalized Power Zones \cite{LyftPPZ, hao2020eppz} and Uber's new surge \cite{ubersnewsurge}. In contrast, most of the existing work assumes that the driver's payment is a fixed fraction of the passenger's payment; drivers are thus incentivized to drive where passengers are charged higher prices --- these more closely resemble the commission-based modus operandi that Uber and Lyft used to employ.

After the initial manuscript of our work, there has been a sequence of follow-up papers that directly extend our models and results. 
S\'ejourn\'e et al. \cite{sejourne2018price} study the effect of market fragmentation on the cost of rebalancing in Section~\ref{ssec:rerouting}. 
Qian et al.~\cite{banerjee2018value} identify the fastest possible rate of convergence for the matching problem in Section~\ref{ssec:rerouting}. Kanoria and Qian~\cite{kanoria2019near} prove transient guarantees for the pricing and the matching settings we consider, without requiring distributional information. Besbes et al.~\cite{besbes2019static,besbes2020pricing} study a pricing problem that is closely related to the special case in which $n=1$ and ~$m$ is small, whereas we are interested in network effects that only arise with $n>1$, and focus on ~$m$ being large. Balseiro et al.~\cite{balseiro2020dynamic} obtain results for a special case of our pricing model, with the key distinctions that they (i) consider the special case of a hub-and-spoke network (whereas we effectively study general networks), and (ii) consider an asymptotic regime in which the number of locations grows large (whereas we treat it as constant).

\section{Preliminaries}
\label{sec:model}

In this section, we first formally define our model of shared vehicle systems and formulate the problem with respect to the pricing control. The other controls follow similar formulations; to streamline the presentation, we introduce them in the corresponding sections. To capture the complex network externalities of the system, we define a probabilistic model of customer arrivals, which we analyze in steady state. Subsequently, we introduce known results from the queuing literature that provide the technical background upon which our analysis relies. 
Finally, we present an example that shows that even in the restricted sets of pricing policies that are independent of the configuration of vehicles across the system, the optimization problems we consider are non-convex.

\subsection{Basic Setting}
\label{ssec:basicsetting}

We consider a system with $m$ units (corresponding to vehicles) and $n$ nodes (corresponding to stations). Customers traveling between nodes $i$ and $j$ arrive at node $i$ according to a Poisson process of rate $\phi_{ij}$. Each customer traveling from $i$ to $j$ has a value drawn independently from a distribution $F_{ij}(\cdot)$. We assume that $F_{ij}$ has a density and that each customer has a positive value with some probability, i.e. $F_{ij}(0)<1$.
Upon arrival at $i$, a customer is quoted a price $p_{ij}$, and engages a unit to travel to $j$ if her value exceeds this price, i.e. with probability $1-F_{ij}(p_{ij})$, and at least one unit is available at node $i$; else she leaves the system immediately (i.e., we consider a loss system). If a ride occurs, the unit moves to $j$ and the customer leaves the system thereafter.

As is common with pricing, the related optimization problems are often more easily framed in terms of the \emph{inverse demand} (or \emph{quantile}) function associated with the user as $q_{ij} = 1-F_{ij}(p_{ij})$, i.e.,~$q_{ij}$ denotes the fraction of customers with origin $i$ and destination $j$ who accept price $p_{ij}$. For ease of presentation we assume that the density of $F_{ij}$
is positive everywhere in its domain (for which we only assume that it is contiguous and intersects with $(0,\infty)$), implying that there is a 1-1 mapping between prices and quantiles. As $F_{ij}$ is therefore invertible, we can write $p_{ij} = F_{ij}^{-1}(1-q_{ij})$. This allows us to abuse notation throughout the paper by using prices and quantiles interchangeably as our decision variables. Further, we define $F_{ij}(\infty)=1$, that is, we assume we can set a price high enough so that an arbitrarily small (or even 0) fraction of customers is willing to pay it.

A continuous-time Markov chain tracks the number of units across nodes. At time $t\geq 0$, the \emph{state} of the Markov chain $\X(t)=(X_1(t),\ldots,X_n(t))$ contains the number of units $X_i(t)$ present at each node $i$. For ease of presentation, we first assume that rides between nodes occur without delay. In the context of our model, this translates into an instantaneous state transition from $\X$ to $\X-e_i+e_j$ when a customer engages a unit to travel from $i$ to $j$ (where $e_i$ denotes the $i$th canonical unit vector).  The state space of the system is denoted by
$\Snm=\left\{(x_1,x_2,\ldots,x_n)\in\mathbb{N}_0^n|\sum_ix_i=m\right\}$. Throughout the paper we use $\X(t), X_{i}(t)$ to indicate random variables, and $\x,x_{i}$ to denote specific elements of the state space. Note that the state-space is finite; moreover, $|\Snm| = \binom{m+n-1}{n-1}= \Omega(m^n)$ (every state corresponds to a placement of $n-1$ stripes in between $m$ circles with the number of units at node $k$ corresponding to the number of circles  in between the $k-1$st stripe and the $k$th stripe).
Since our focus is on the long-run average performance, i.e. system performance under the steady state of the Markov chain, we henceforth suppress the dependence on $t$ for ease of notation.

\subsection{Pricing Policies and Objectives}
\label{ssec:objective}

We consider pricing policies that select point-to-point prices $p_{ij}$ as a \emph{function of the overall state $\mathbf{X}$}.
Formally, given arrival rates and demand elasticities $\{\phi_{ij},F_{ij}(\cdot)\}$, we want to design a pricing policy $\p(\cdot) = \{p_{ij}(\cdot)\}$, where each $p_{ij}:\Snm\rightarrow\R\cup\{\pm\infty\}$  maps the state to a price for a ride between $i$ and $j$. Equivalently, we want to select quantiles $\q(\cdot)=\{q_{ij}\}$ where each $q_{ij}:\Snm\rightarrow[0,1]$. 
For a fixed pricing policy $\p$ with corresponding quantiles $\q$, the \emph{effective demand stream} from $i$ to $j$ (i.e. customers traveling from node $i$ to $j$ with value exceeding $p_{ij}$) thus follows a (state-dependent) Poisson process with rate $\phi_{ij}q_{ij}(\X)$. This follows from the notion of probabilistic thinning of a Poisson process -- the rate of customers wanting to travel from $i$ to $j$ is a Poisson process of rate $\phi_{ij}$, and each customer is independently willing to pay $p_{ij}$ with probability $q_{ij}=1-F_{ij}(p_{ij})$. State-dependent prices also allow us to capture unavailability by defining $q_{ij}(\mathbf{x})=0$ if $x_i = 0$ (i.e. a customer with origin $i$ is always turned away if there are no units at that station; recall we defined $F_{ij}(\infty)=1$).
Thus, a pricing policy $\p$, along with arrival rates and demand elasticities $\{\phi_{ij},F_{ij}(\cdot)\}$, determines the transitions of the Markov chain. Note that this is a finite-state Markov chain, and hence, it has a  steady-state distribution $\pi(\cdot)$ with $\pi(\x)\geq 0\,\forall\,\x\in\Snm$ and $\sum_{\x\in\Snm}\pi(\x)=1$. For notational simplicity, we treat $\pi(\cdot)$ as unique; for all our policies we prove this in Appendix \ref{sec:irreducibility}; our upper bounds are all with respect to the best-possible steady-state distribution. Further, we allow states $\x\not\in\mathcal{S}_{n,m}$ in the domain of $\pi(\cdot)$ and set $\pi(\x):=0$ for such~$\x$. Then the following balance equation must hold for each state $\x\in\mathcal{S}_{n,m}$:
\begin{equation}
\label{eq:balance_equation}
\pi(\x)\sum_{i,j}\phi_{ij}q_{ij}(\x) = \sum_{i,j}\pi(\x+e_i-e_j)\phi_{ij}q_{ij}(\x+e_i-e_j)
\end{equation}

Our goal is to design a pricing policy $\p$ to maximize the steady-state performance under various objectives. In particular, we consider objective functions that decompose into per-ride reward functions $I_{ij}:\R\rightarrow\R$, which correspond to the reward obtained from a customer engaging a ride between stations $i$ and $j$ at price $p_{ij}$. The per-ride rewards corresponding to the three canonical objective functions are:
\begin{itemize}
\item \emph{Throughput}: the total rate of rides in the system; for this, we set $I^T_{ij}(p_{ij})=1$.
\item \emph{Social welfare}: 
the per-ride contribution to welfare, i.e., the total value of all passengers served, is given by $I^W_{ij}(p_{ij})=\mathbb{E}_{V\sim F_{ij}}\brk*{V|V\geq p_{ij}}$.
\item \emph{Revenue}: to find the system's revenue rate, we can set $I^R_{ij}(p_{ij})=p_{ij}$.
\end{itemize}
We abuse notation to define $I_{ij}(q_{ij}) \triangleq I_{ij}(F_{ij}^{-1}(1-q_{ij}))$ as a function of the quantile instead of the price.
We also define the \emph{reward curves} $R_{ij}(q_{ij}) \defeq q_{ij}\cdot I_{ij}(q_{ij})$ (analogous to the notion of revenue curves; see~\cite{hartline2013}). 
Our results require the technical condition that $R_{ij}(q_{ij})$ are concave in $q_{ij}$, which implies that $I_{ij}(q_{ij})$ are non-increasing in $q$ (equivalently $I_{ij}(p_{ij})$ are non-decreasing in $p_{ij}$). We note that this assumption holds for throughput and welfare under all considered distributions; for revenue, distributions fulfilling the condition are referred to as regular distributions \cite{hartline2013}. For completeness, we prove these observations in Appendix~\ref{sec:concave_reward_curves}.

For a given objective, our aim is to select a pricing policy $\p$, equivalently quantiles $\q$, that maximizes the steady-state rate of reward accumulation, given by
\begin{equation}
\label{eq:objgeneral} 
\obj_m(\q)=\sum_{\mathbf{x}\in\Snm}\pi(\mathbf{x})\cdot\bigg(
\sum_{i,j} \phi_{ij}\cdot q_{ij}(\mathbf{x})\cdot I_{ij}\big(q_{ij}(\mathbf{x})\big)\bigg) =
\sum_{\mathbf{x}\in\Snm}\pi(\mathbf{x})\cdot\bigg(
\sum_{i,j} \phi_{ij}\cdot R_{ij}\big(q_{ij}(\mathbf{x})\big)\bigg). 
\end{equation}
Intuitively, Equation \eqref{eq:objgeneral} captures that at any node $i$, customers destined for $j$ arrive via a Poisson process with rate $\phi_{ij}$, and find the system in state $\mathbf{x}\in\Snm$ with probability $\pi(\x)$. They are then quoted a price $p_{ij}(\x)$ (corresponding to quantile $q_{ij}(\x)$), and engage a ride with probability $q_{ij}(\x)$. The resulting ride then contributes in expectation $I_{ij}(q_{ij}(\x))$ to the objective function. Recall that unavailability of units is captured by our assumption that $q_{ij}(\x) = 0$ whenever $x_{i}=0$. Notice that the component of Equation \eqref{eq:objgeneral} that captures the flow of units traveling from $i$ to $j$ can be written~as
\begin{equation}\label{eq:state_dependent_flow}
f_{ij,m}(\q)=
\sum_{\mathbf{x}\in\Snm}\pi(\mathbf{x})\cdot \phi_{ij}\cdot q_{ij}(\mathbf{x}). 
\end{equation}
We will revisit the special case of this definition for state-independent policies in the next subsection.

\subsection{State-Independent Pricing and Product-Form Queueing Networks}
\label{ssec:queuemodel}

The Markov chain described in Section \ref{ssec:basicsetting} has the structure of a \emph{closed queueing network}~(see \cite{IntroStochNetworks,kelly2011}), a well-studied class of models in  applied probability (closed refers to the fact that the number of units remains constant; in open networks, units may arrive and depart from the system).
Our analysis crucially relies on some classical results from the queuing theory literature, which we review in this section. Our presentation here closely resembles that of Serfozo \cite{IntroStochNetworks}.

One particular class of pricing policies is that of \emph{state-independent} policies, wherein we set point-to-point prices $\{p_{ij}\}$ which do not react to the state of the system. 
Node availabilities under such policies have been studied extensively in a line of work by George, Squillante, and Xia \cite{george2012stochastic,george2011fleet,george2012}, who first derived the expressions we present here. With prices not reacting to the state of the system, the rate of units departing from any node $i$ at any time $t$ when $X_i(t)>0$ is a constant, independent of the state of the network.
The resulting model is a special case of a closed queueing model proposed by Gordon and Newell~\cite{gordon1967closed}.
\begin{defn}
\vspace{0.1in}
\label{def:gn} 
A \emph{Gordon-Newell network} is a continuous-time Markov chain on states $\x\in \mathcal{S}_{n,m}$, in which for any state $\x$ and any $i,j\in[n]$, the chain transitions from $\x$ to $\x-e_i+e_j$ at a rate $P_{ij}\mu_i\mathds{1}_{\{x_i(t)>0\}}$, where $\mu_i>0$ is referred to as the \emph{service rate} at node $i$, and $P\geq 0$ as the \emph{routing probabilities} satisfying $\sum_jP_{ij}=1$.
\end{defn}
In other words, if units are present at a node $i$ in state $\x$, i.e. if $x_i>0$, then departures from that node occur according to a Poisson distribution with rate $\mu_i>0$; conditioning on a departure, the destination $j$ is chosen according to state-independent routing probabilities $P_{ij}$.

The Markovian dynamics resulting from state-independent pricing policies fulfill the conditions of Gordon-Newell networks: fixing a price $p_{ij}$ (with corresponding $q_{ij}$) results in a 
Poisson process with rate $\phi_{ij}q_{ij}$ of arriving customers \emph{willing to pay price} $p_{ij}$.
These customers engage a unit only if one is available, else they leave the system. Thus, given quantiles $\q$, the time to a departure from node $i$ is distributed exponentially with rate $\mu_i=\sum_j\phi_{ij}q_{ij}$ when $X_i>0$ and with rate 0 otherwise. Further, conditioned on an arriving customer having value at least equal to the quoted price, the probability that the customer's destination is $j$, is $P_{ij}=\phi_{ij}q_{ij}/\sum_k \phi_{ik}q_{ik}$, independent of system state.

One advantage of considering state-independent policies (and drawing connections with Gordon-Newell networks) is that the resulting steady-state distribution $\crl*{\pi_{\p,m}(\x)}_{\x\in\Snm}$ can be expressed in product form, as established by the Gordon-Newell theorem. 
\begin{theorem}[Gordon-Newell Theorem~\cite{gordon1967closed}]
\vspace{0.1in}
Consider an $m$-unit $n$-node Gordon-Newell network with transition rates $\mu_i$ and routing probabilities $P_{ij}$. Let $\crl*{w_i}_{i\in[n]}$ denote the invariant distribution associated with the routing probability matrix $\crl*{P_{ij}}_{i,j\in[n]}$, and define the \emph{traffic intensity} at node $i$ as $r_i=w_i/\mu_i$. Then the stationary distribution is given by: 
\begin{equation}
\label{eq:steadystate}
   \pi(\x)=\frac{1}{G_m} \prod_{j=1}^n \prn*{r_j}^{x_j},
\end{equation}
where the Gordon-Newell normalization constant is given by $G_m=\sum_{\x\in \mathcal{S}_{n,m}}\prod_{j=1}^n\prn*{r_j}^{x_j}$.
\end{theorem}
To obtain some intuition for the form of Equation \eqref{eq:steadystate}, consider a system with just one unit. If $\mu_i$ is equal to 1 for all $i$, then $w_i$ exactly captures the stationary distribution of a simple random walk with routing matrix $P$. Further, changing $\mu_i$ at a node $i$ does not affect how frequently the unit visits $i$, but it affects how much time the unit spends at $i$ before departing again. The distribution above can now be seen to be the occupancy distribution induced by $m$ independent random walks. For a formal proof of how this stationary distribution arises in the case of the Gordon-Newell network, we refer the reader to Serfozo~\cite{IntroStochNetworks}.

The Gordon-Newell network is a special case of a more general family of queueing models, called \emph{product-form} networks, characterized by stationary distributions similar to Equation~\eqref{eq:steadystate}. In particular, given any collection of integrable functions $\rho_j:\mathbb{N} \rightarrow \mathbb{R}_+$, a queueing network is said to be product-form if its stationary distribution is given by $\pi(\x)=\frac{1}{Z_m} \prod_{j=1}^n \rho_j(x_j)$, where $Z_m$ is an appropriate normalizing constant (or \emph{partition function}). The Gordon-Newell network is a special case with $\rho_j(x) = r_j^x$ (and normalizing constant $Z_m$ denoted as $G_m$); we introduce another example, the BCMP network~\cite{baskett1975}, below when we model non-zero travel times. An important property in common to all such networks is the following characterization of their steady-state availability.
\begin{lemma}[availability in product-form networks, Eq. (5) in \cite{george2011fleet}]
\label{lem:avail}
\vspace{0.1in}
Consider any product-form queueing network with $m$ units, $n$ nodes, and functions $\{\rho_j\}_{j\in[n]}$, and any node $i\in[n]$ with $\rho_i(y) = r_i^y$. Let $A_{i,m} \triangleq \sum_{\x\in\mathcal{S}_{n,m}}\pi(\x)\mathds{1}_{\{x_i>0\}}$ be the steady-state \emph{availability} of units at node $i$ (i.e. the probability in steady-state that at least one unit is present at node $i$). Then we have
\begin{equation}
\label{eq:avail}
A_{i,m}=
\prn*{Z_{m-1}/Z_m}\cdot r_i.
\end{equation}
\end{lemma}
We can use the Gordon-Newell theorem and Lemma~\ref{lem:avail} to simplify the objective function in Equation \eqref{eq:objgeneral}.
Recall that for an $m$-unit system with state-independent policy $\p$ (with corresponding quantiles $\q$), we obtain a Gordon-Newell network with service rate $\sum_j \phi_{ij}q_{ij}$ and routing probabilities $\phi_{ij}q_{ij}/\sum_k \phi_{ik}q_{ik}$ at node $i$.
Let $\crl*{\pi(\x)}_{\x\in\Snm}$ be the corresponding steady-state distribution. 
Since $\q$ is no longer a function of the system state, we can no longer set $q_i=0$ when $X_i=0$. Instead, as in Lemma~\ref{lem:avail},
we define $A_{i,m}(\q) = \sum_{\x\in\mathcal{S}_{n,m}}\pi(\x)\mathds{1}_{\{x_i>0\}}$ as the steady-state \emph{availability} of units at node $i$; the Lemma now implies that $A_{i,m}(\q)=\prn*{G_{m-1}(\q)/G_m(\q)}\cdot r_i(\q)$. Moreover, from Equation \eqref{eq:state_dependent_flow}, the state-independent \emph{steady-state rate of units} moving from $i$ to $j$ then simplifies to $f_{ij,m}(\q) = A_{i,m}(\q)\cdot\phi_{ij}q_{ij}$.
Now, the objective in Equation \eqref{eq:objgeneral} can be written as 
\begin{equation}
\label{eq:objective}
\obj_m(\q)= \sum_{i}A_{i,m}(\q)\cdot\prn*{\sum_j \phi_{ij}q_{ij}\cdot I_{ij}(q_{ij})} = \sum_{i,j} f_{ij,m}(\q)I_{ij}(q_{ij}).
\end{equation}
For ease of notation, we omit the explicit dependence on $m$ when clear from context.

\paragraph{Non-concavity of objective function.} 
Directly optimizing the finite-unit system is non-trivial as the objective function is not concave in prices (or quantiles); we now demonstrate this in a simple network ($m=1$ and $n=3$), using throughput as the objective. Notice that in the setting with just one vehicle, there is no distinction between state-dependent and state-independent policies. 

Our example is presented in Figure \ref{fig:nonconcave}. The network comprises of three nodes ($A,B,C$); the labels on the edges show the demand rate $\phi_{ij}$ with which people wanting to move from node~$i$ to node~$j$ arrive. Instead of computing in detail the throughput obtained by different quantiles, we just provide the intuition for non-concavity by showing that (i) the throughput is small (of order $\epsilon$) when all quantiles are equal to 1, (ii) it is still almost as small when $q_{BC}=\frac{1+\epsilon}{2}$ and the other quantiles are 1, and (iii) it is large when $q_{BC}=\epsilon$ and the other quantiles are 1. This contradicts concavity since the objective at $q_{BC}=\frac{1+\epsilon}{2}$ is smaller than the mean of the objectives at $q_{BC}=1$ and $q_{BC}=\epsilon$ (keeping the other quantiles constant). In the first case, the single unit moves from $B$ to $C$ about half the times it departs from $B$; the expected time before an arrival at $C$ is then equal to $(1/\epsilon)$, so it is not serving any rides for a long time. When $q_{BC}=\frac{1+\epsilon}{2}$ the unit still moves to $C$ about one out of every three times it departs from $B$ and then spends a lot of time waiting for an arrival at $C$. However, if $q_{BC}=\epsilon$, it only moves to $C$ a fraction $\frac{\epsilon}{1+\epsilon}$ of the times it departs from $B$, canceling out the wait of length $1/\epsilon$ at $C$. In that case, the throughput is equal to $\frac{2+\epsilon}{3}$ (this also happens to be the solution our Algorithm in Section \ref{ssec:dcirc} returns). Note that this also implies non-concavity in prices (e.g., if all passengers have uniform value distributions).

\begin{figure}[!h]
\centering
\centering
\scalebox{1}{

\tikzset{every picture/.style={line width=0.75pt}} 

\begin{tikzpicture}[x=0.75pt,y=0.75pt,yscale=-1,xscale=1]

\draw   (100,128) .. controls (100,114.19) and (111.19,103) .. (125,103) .. controls (138.81,103) and (150,114.19) .. (150,128) .. controls (150,141.81) and (138.81,153) .. (125,153) .. controls (111.19,153) and (100,141.81) .. (100,128) -- cycle ;
\draw   (200,128) .. controls (200,114.19) and (211.19,103) .. (225,103) .. controls (238.81,103) and (250,114.19) .. (250,128) .. controls (250,141.81) and (238.81,153) .. (225,153) .. controls (211.19,153) and (200,141.81) .. (200,128) -- cycle ;
\draw   (300,128) .. controls (300,114.19) and (311.19,103) .. (325,103) .. controls (338.81,103) and (350,114.19) .. (350,128) .. controls (350,141.81) and (338.81,153) .. (325,153) .. controls (311.19,153) and (300,141.81) .. (300,128) -- cycle ;
\draw    (125,103) .. controls (164.6,73.3) and (180.02,74.64) .. (223.67,102.16) ;
\draw [shift={(225,103)}, rotate = 212.39] [color={rgb, 255:red, 0; green, 0; blue, 0 }  ][line width=0.75]    (10.93,-3.29) .. controls (6.95,-1.4) and (3.31,-0.3) .. (0,0) .. controls (3.31,0.3) and (6.95,1.4) .. (10.93,3.29)   ;
\draw    (225,103) .. controls (264.6,73.3) and (280.02,74.64) .. (323.67,102.16) ;
\draw [shift={(325,103)}, rotate = 212.39] [color={rgb, 255:red, 0; green, 0; blue, 0 }  ][line width=0.75]    (10.93,-3.29) .. controls (6.95,-1.4) and (3.31,-0.3) .. (0,0) .. controls (3.31,0.3) and (6.95,1.4) .. (10.93,3.29)   ;
\draw    (227.11,154.51) .. controls (261.54,178.7) and (291.03,179.13) .. (325,153) ;
\draw [shift={(225,153)}, rotate = 36] [color={rgb, 255:red, 0; green, 0; blue, 0 }  ][line width=0.75]    (10.93,-3.29) .. controls (6.95,-1.4) and (3.31,-0.3) .. (0,0) .. controls (3.31,0.3) and (6.95,1.4) .. (10.93,3.29)   ;
\draw    (127.11,154.51) .. controls (161.54,178.7) and (191.03,179.13) .. (225,153) ;
\draw [shift={(125,153)}, rotate = 36] [color={rgb, 255:red, 0; green, 0; blue, 0 }  ][line width=0.75]    (10.93,-3.29) .. controls (6.95,-1.4) and (3.31,-0.3) .. (0,0) .. controls (3.31,0.3) and (6.95,1.4) .. (10.93,3.29)   ;

\draw (117,114) node [anchor=north west][inner sep=0.75pt]   [align=left] {{\Large A}};
\draw (217,116) node [anchor=north west][inner sep=0.75pt]   [align=left] {{\Large B}};
\draw (317,116) node [anchor=north west][inner sep=0.75pt]   [align=left] {{\Large C}};
\draw (169,83.4) node [anchor=north west][inner sep=0.75pt]  [font=\Large]  {$1$};
\draw (269,84.4) node [anchor=north west][inner sep=0.75pt]  [font=\Large]  {$1$};
\draw (170,144.4) node [anchor=north west][inner sep=0.75pt]  [font=\Large]  {$1$};
\draw (272,144.4) node [anchor=north west][inner sep=0.75pt]  [font=\Large]  {$\varepsilon $};
\end{tikzpicture}
}
\caption{Example for non-concavity of throughput for finite units: The network comprises of three nodes ($A,B,C$); the labels on the edges show the demand rate $\phi_{ij}$ with which people wanting to move from node~$i$ to node~$j$ arrive. In this setting, if we vary the price between $B$ and $C$, while setting the admitted quantiles to $1$ for all other pairs, then the resulting throughput is non-concave.}
\label{fig:nonconcave}
\end{figure}

\subsection{Model of Travel Times}\label{ssec:model_delays}
We now explain how to extend this model from a system in which rides occur instantaneously to one in which travel-times are non-zero. A standard way to model travel-times is to assume that each unit takes an i.i.d. random time to travel from node~$i$ to~$j$ (see, e.g., \cite{george2012, banerjee2015pricing}).
Formally, we expand the network state to $\X=\crl*{X_i(t),X_{ij}(t)}$, where \emph{node queues} $X_i(t)$ track the number of available units at node $i$, and \emph{link queues} $X_{ij}(t)$ track the number of units in transition between nodes $i$ and~$j$. 
When a customer engages a unit to travel from $i$ to $j$, the state changes to $\X-e_i+e_{ij}$ (i.e.,~$X_i\rightarrow X_i - 1$ and  $X_{ij}\rightarrow X_{ij}+1$).
The unit remains in transit for an i.i.d. random time with mean $\tau_{ij}$. 
When the unit reaches its destination, the state changes to $\X-e_{ij}+e_{j}$.
Finally, we assume that pricing policies and passenger-side dynamics remain the same as before; in particular, we assume that the demand characteristics $\crl*{\phi_{ij},F_{ij}}$ and reward-functions $\crl*{I_{ij}}$ are independent of the actual transit times (dependence on average transit times $\tau_{ij}$ can be embedded in the functions).

For notational convenience, we assume that the travel time is exponential; however, our analysis holds for any general travel-time distribution (as long as each unit experiences an independent travel time). This follows from standard insensitivity properties of $M/GI/\infty$ queues, in particular, the facts that the stationary distribution depends only on the mean service-time, and that the departure process is a Poisson process~\cite{newell1966m,foley1986stationary}.

The system described above is a generalization of the Gordon-Newell network referred to as a \emph{BCMP network} (introduced by \cite{baskett1975}; see \cite{IntroStochNetworks}, Section 3.3; also see \cite{zhang2016control} for the use of such a model for vehicle sharing). It is also a special case of a closed migration process; our presentation here follows Kelly and Yudovina \cite{kelly2014stochastic} (Chapter 2).

\begin{defn}\label{def:migration}\vspace{0.1in}
A closed migration process on states $\mathcal{S}_{n^2,m}$ is a continuous-time Markov chain in which transitions from state $\X$ to state $\X-e_i+e_j$ occur at rate $P_{ij}\mu_i(X_i)$ when $X_i>0$ and at rate 0 otherwise. The $P_{ij}$ again form routing probabilities with $\sum_kP_{ik}=1, P_{ij}\geq 0\;\forall i,j$. Notice that $\mu_i(X_i)$ is a function of $X_i$ only, whereas $P_{ij}$ are independent of the state altogether. 
\end{defn}

Given quantiles $\q$, the above-described process is a closed migration process with $P_{i,ij}=\phi_{ij}q_{ij}/\sum_k\phi_{ik}q_{ik}$ and $P_{ij,j}=1$ for every $i$ and $ j$. Further, the service rate $\mu_i(X_i)=\sum_k \phi_{ik}q_{ik}$ when $X_i>0$ for node queues and $\mu_{ij}(X_{ij})=X_{ij}/\tau_{ij}$  for link queues. Intuitively, the latter captures the idea that each of the $X_{ij}$ units has an exponential rate of $1/\tau_{ij}$ and therefore the rate until the first is removed from the link queue is $X_{ij}/\tau_{ij}$. The stationary distribution can then be obtained as follows.

\begin{theorem}[Theorem 2.4 in ~\cite{kelly2014stochastic}]\vspace{0.1in} For a closed migration process as described in Definition \ref{def:migration}, let $\crl*{w_i}_{i\in[n^2]}$
denote the invariant
distribution associated with the routing probability matrix $\crl*{P_{ij}}_{i,j\in[n]}$. 
Then the equilibrium distribution for a closed migration process is
$$
\pi(\x) = \frac{1}{Z_m} \prod_{i=1}^{n^2} \frac{w_i^{x_i}}{\prod_{y=1}^{x_i}\mu_i(y)},
$$
where $Z_m= \sum_{\x} \prod_{i=1}^{n^2} \frac{w_i^{x_i}}{\prod_{y=1}^{x_i}\mu_i(y)}$ is a normalizing constant.
\end{theorem}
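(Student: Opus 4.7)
The plan is a classical guess-and-verify argument. The claimed $\pi$ is an ansatz, and I would prove that it solves the Kolmogorov global balance equations for the Markov chain of Definition \ref{def:migration}; uniqueness of the stationary distribution then follows from irreducibility of the chain on the finite state space $\mathcal{S}_{n^2,m}$.

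Rather than tackle global balance directly, I would verify the stronger \emph{partial balance} at each station $i$: for every $\x$ with $x_i\geq 1$,
$$\pi(\x)\,\mu_i(x_i)\;=\;\sum_j \pi(\x+e_j-e_i)\,\mu_j(x_j+1)\,\lambda_{ji}.$$
The left-hand side is the rate at which customers leave station $i$ from state $\x$, and the right-hand side is the rate at which transitions lead into $\x$ via an arrival at $i$ (such transitions are only possible when $x_i\geq 1$, so that $\x+e_j-e_i\in\mathcal{S}_{n^2,m}$). Summing this identity over $i$ recovers global balance, because every transition $\x'\to\x$ or $\x\to\x'$ changes exactly one coordinate positively and one coordinate negatively, so each term appears on exactly one side of the appropriate node's equation.

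The main computation is just algebra on the product form. From the ansatz one reads off
$$\frac{\pi(\x+e_j-e_i)}{\pi(\x)}\;=\;\frac{w_j\,\mu_i(x_i)}{w_i\,\mu_j(x_j+1)},$$
since each of the telescoping products $\prod_{y=1}^{x_k}\mu_k(y)$ gains or loses exactly one factor. Substituting this into partial balance, the $\mu_i(x_i)$ and $\mu_j(x_j+1)$ terms cancel neatly, reducing the equation to
$$w_i=\sum_j w_j\,\lambda_{ji},$$
which is precisely the defining invariance property of $\{w_i\}$ under the routing matrix $\{\lambda_{ij}\}$. Dividing by $G_m$ then makes $\pi$ a probability distribution on the finite set $\mathcal{S}_{n^2,m}$, and irreducibility upgrades the stationary measure to the unique equilibrium distribution.

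The only delicate place is the boundary $x_i=0$, where partial balance at $i$ is vacuous (the outgoing rate vanishes and the would-be predecessor states $\x+e_j-e_i$ have a negative coordinate and lie outside $\mathcal{S}_{n^2,m}$); this is why the identity is imposed only for $x_i\geq 1$. I do not anticipate any deeper obstacle. The main conceptual takeaway is that although closed migration processes are generally not reversible (so detailed balance would fail), the product form still satisfies the weaker station-by-station flow conservation, which is exactly what makes the Gordon--Newell and BCMP frameworks tractable even once travel-time link queues $X_{ij}$ with load-dependent service rates $\mu_{ij}(X_{ij})=X_{ij}/\tau_{ij}$ are added.
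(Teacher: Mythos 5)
Your verification via station-wise partial balance is correct and is exactly the standard proof of this result; the paper itself offers no proof (it simply cites Theorem 2.4 of Kelly--Yudovina, whose argument is the same guess-and-verify computation you give). The only cosmetic discrepancy is that the paper's statement writes $\phi_i(y)$ in the denominator where Definition \ref{def:migration} calls the service rates $\mu_i(\cdot)$; your use of $\mu$ throughout is the intended reading.
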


This implies for our setting, with $\w$ denoting again the invariant distribution of the routing matrix.

\begin{equation}
\label{eq:steadystateBCMP}
\pi_{\x,m}(\q)=
\frac{1}{Z_m(\q)}\brk*{\prod_{i\in[n]} \prn*{\frac{w_i(\q)}{\sum_{k}\phi_{ik}q_{ik}}}^{x_i}}
\brk*{\prod_{i,j\in [n]^2}\frac{\prn*{\tau_{ij}w_{ij}(\q)}^{x_{ij}}}{x_{ij}! }}.
\end{equation}

We remark that in comparison to the invariant distribution $\w^I$ when rides occur instantaneously,~$\w^D$ with delays would be $w_i^D=w_i^I/2$ for node queues and $w_{ij}^D=\frac{w_i^D\phi_{ij}q_{ij}}{\sum_k\phi_{ik}q_{ik}}$ for link queues; this is because units alternate between being in node queues and being in link queues.

\paragraph{The infinite-unit limit:}
The stationary distributions described above for state-independent pricing policies, either with or without travel-times, hold for any finite $m$.
Moreover, it can be shown (see Theorem 3.18 in \cite{IntroStochNetworks}) that as $m\rightarrow\infty$, the stationary distribution of the $m$-unit system (as specified in Equation \eqref{eq:steadystate}) converges in distribution to a limiting distribution. The exact form of this limiting system is well understood (see Section 3.7 in \cite{IntroStochNetworks}, Theorem 2.2.1), but not consequential for our work; one important fact to note about the limiting distribution is the existence of a node with availability $1$. This essentially captures the fact that in an infinite-unit system, at least one node must have an infinite number of units. While an analytical proof of this result can be found in Section 3.7 of \cite{IntroStochNetworks}, Lemma \ref{lem:finite} (without travel-times) and Lemma \ref{lem:Am_w_delay} (with travel-times) can be interpreted as providing a combinatorial one. This fact, combined with Equation \eqref{eq:avail}, gives the following proposition.
\begin{prop}[Theorem 2.2.1 in \cite{george2012stochastic} and Theorem 1 in \cite{george2011fleet}]
\label{prop:infinite}\vspace{0.1in}
Recall that given $\q=\{q_{ij}\}$, the quantities $w_i(\q)$ and $r_i(\q)$ are independent of~$m$. Then, given a policy with quantiles $\q$, in the infinite-unit limit, the steady-state availability of each node $i$ is given by $r_i(\q)/\max_j r_j(\q)$.
\end{prop}

To see this note that for any $m$, the maximum availability node $i^{\star}$ obeys $r_{i^{\star}}(\q) = \max_j r_j(\q)$. Moreover, as $m\to\infty$,  $\prn*{G_{m-1}(\q)/G_m(\q)}\cdot r_{i^\star}(\q)\to1$; we thus obtain that $G_{m-1}(\q)/G_m(\q)\to 1/r_i^\star(\q)$, which in turn implies that  $A_{j,m}(\q)=\prn*{G_{m-1}(\q)/G_m(\q)}\cdot r_j(\q) \to r_j/r_{i^\star}(\q)$.

\section{Our Approximation Framework}
\label{sec:pricing}
In this section, we introduce our approximation framework which we build on throughout the paper. To convey the main ideas of our analysis, we first apply it to the most vanilla setting: using pricing as a control to maximize throughput when there are no travel times. This is the only setting in which a finite guarantee exists in the literature (see~Waserhole and Jost~\cite{Waserhole2014}); we reprove this result (and extend it to other objectives) with a slightly more complicated technique in order to illustrate the ingredients of our framework. In Sections \ref{ssec:delay} and~\ref{sec:applications},  we show how this framework yields guarantees for a wide range of other settings including travel times, controls beyond pricing, and constrained pricing settings.

Even for the vanilla setting, there are two technical hurdles. First, we want to compare against state-dependent pricing policies in which the number of potentially distinct prices can be exponential in the number of units. Second, even if we restrict to state-independent pricing policies, the resulting problem is non-convex in the resulting quantiles as discussed in Section \ref{ssec:queuemodel}. We circumvent these hurdles by working with a convex relaxation that restricts the policy search; for the throughput objective, the relaxation is the same as in \cite{Waserhole2014}. Our approximation framework first bounds the optimal objective by the objective of the convex relaxation and then relates the convex relaxation to a system with infinitely many units. Finally, the key analytical ingredient is to bound the objective of any policy in the finite-unit system with its performance in the infinite-unit system.

\subsection{Elevated Objective Function}\label{ssec:elev_obj}
We restrict our attention to state-independent pricing policies which reduces the output of the program from an exponential to a quadratic number of distinct prices (one for each source-destination pair). Recall from 
Equation \eqref{eq:objective} that the steady-state objective for state-independent policies can be written as
$$
\obj_m(\q)=\sum_{i,j}f_{ij,m}(\q)\cdot I_{ij}(q_{ij})
$$
here $f_{ij,m}(\q)=A_{i,m}(\q)\cdot \phi_{ij}q_{ij}$ are the resulting steady-state rates of units, which we also refer to as \emph{steady-state flows}. For throughput, the objective is significantly simplified as $I_{ij}^T(q_{ij})=1$.

We first need to make a distinction between \emph{theoretical quantiles} $q_{ij} = 1-F_{ij}(p_{ij})$ and \emph{effective quantiles} $\widehat{q}_{ij}=f_{ij,m}(\q)/\phi_{ij} = A_{i,m}(\q)\cdot q_{ij}$. Whilst theoretical
quantiles are in one-to-one correspondence to prices, effective quantiles incorporate thinning both due to the demand elasticity and the unavailability of units. Note that $\widehat{q}_{ij}\leq q_{ij}$; moreover, although any $q_{ij}\in[0,1]$ can be induced via an appropriate price, not all $\widehat{q}_{ij}$ can be realized by prices as effective quantiles (e.g., if $m<n$, not all effective quantiles can be equal to 1).
Since we assume that the per-ride rewards $I_{ij}(\cdot)$ are non-increasing on the quantile space, we have $I_{ij}(q_{ij})\leq I_{ij}(\widehat{q}_{ij})$; for throughput this holds with equality, since $I_{ij}^T(\cdot)=1$. 

We now define the \emph{elevated objective function} as
\begin{equation}\label{eq:elevated}
\widehat{\obj}(\q)=\sum_{i,j} \phi_{ij}q_{ij}I_{ij}(q_{ij})=\sum_{i,j} \phi_{ij} R_{ij}(q_{ij}).
\end{equation}
The elevated objective is essentially the objective of state-independent quantiles $\q$ ignoring demand thinning due to unavailability. It has two useful properties: $i)$ for all $m$ and $\q$, the elevated objective upper bounds the true objective function, i.e. $\widehat{\obj}(\q)\geq \obj_m(\q)$, and $ii)$ it is a concave function of $\q$ (since we focus on objectives corresponding to concave reward curves $R_{ij}(\cdot)$). 
\subsection{The Flow Polytope}\label{ssec:flow_polytope}
To make the elevated objective function relevant, we need to reinstate the effect of network externalities that we completely disregarded by ignoring unavailability. Therefore we impose a set of necessary (though not sufficient) properties that the steady-state rate of flows $\crl{f_{ij,m}(\q)}$ and the effective quantiles $\widehat{\q}$ must satisfy in order to be realizable. These properties form a linear polytope on these variables, which we refer to as \emph{flow polytope} as it relates to flow conservation and capacity constraints. We begin by proving that both properties are indeed necessary.

\begin{prop}[Demand bounding]\vspace{0.1in} For theoretical 
quantiles $\q$ under any state-dependent policy, the steady-state rate of flows obeys the demand bounding property $f_{ij,m}(\q)\leq \phi_{ij}$.
\end{prop}
\begin{proof}
The proof follows immediately from $q_{ij}(\cdot)\leq 1$ which implies $\sum_{\x\in\mathcal{S}_{n,m}} \pi(\x) q_{ij}(x)\leq 1$.
\end{proof}
\begin{prop}[Supply Circulation]\label{prop:supply_circ}
For theoretical quantiles $\q$ under any state-dependent policy in the finite $m$-unit system, the steady-state rate of flows obey flow conservation $\sum_k f_{ki,m}(
\q)=\sum_j f_{ij,m}(\q)$ for every $i$. 
The same holds for state-independent policies in the infinite-unit system.
\end{prop}
\begin{proof}
First consider state-dependent policies $\q$ in the finite $m$-unit system. For a given node~$i$, denote the set of states in which $i$
has $\ell$ units by $L^{i}_\ell$, i.e., we define $L^{i}_\ell=\{\x\in \mathcal{S}_{n,m}:x_{i}=\ell\}$ for $\ell\in\mathbb{Z}$; observe that  $L_{\ell}^i=\emptyset$ for $\ell<0$ and $\ell>m$. We also define $L^{i}_{\geq\ell}=L^{i}_\ell\cup L^{i}_{\ell+1}\cup\ldots\cup L^{{i}}_m$. We now focus on the steady-state flow balance conditions for the cut defined by $L^{i}_{\geq\ell}$. We first sum Equation \eqref{eq:balance_equation} across all states $\x\in L^{i}_{\geq\ell}$ to obtain
$$
\sum_{\x\in L^{i}_{\geq\ell}}\pi(\x)\sum_{k,j}\phi_{kj}q_{kj}(\x) = \sum_{\x\in L^{i}_{\geq\ell}}\sum_{k,j}\pi(\x+e_{k}-e_j)\phi_{kj}q_{kj}(\x+e_{k}-e_j).
$$
For each $(\x,k,j)$ on the LHS such that $\x-e_{k}+e_j\in L^{i}_{\geq\ell}$, we have the same summand appear on the RHS; further, for each $\x+e_{k}-e_j$ on the RHS, we have the same summand appear on the LHS if $\x+e_{k}-e_j\in L^{i}_{\geq \ell}$. Thus, we can cancel out terms on both sides. Terms remaining on the LHS are triples $(\x,k,j)$  such that $\x\in L_{\geq \ell}^i$ and $\x-e_k+e_j\not\in L_{\geq \ell}^i$;  summands remaining on the RHS are~$\x+e_{k}-e_j$ with $\x\in L_{\geq \ell}^i,\x+e_k-e_j\not\in L^i_{\geq \ell}$, i.e. on the LHS we only have terms indexed by $k=i$, and on the RHS we only have terms indexed by $j=i$. Thus, we get
$$
\sum_j \sum_{\x\in L^{i}_{\geq\ell}:\x-e_i+e_j\not\in L^i_{\geq\ell}}\pi(\x)\phi_{ij}q_{ij}(\x)=\sum_{k}\sum_{\x\in L^
{i}_{\geq\ell}:\x+e_k-e_
{i}\not\in L^
{i}_{\geq\ell}}\pi(\x+e_{k}-e_
{i})\phi_{ki}q_{ki}(\x+e_k-e_{i}) .
$$
Observe that the inner sum in the LHS is over only states in $L_{\ell}^i$ while the inner sum in the RHS is only over states in $L_{\ell-1}^i$. Interchanging the two sums and summing over all $\ell$, we
obtain the equality
\begin{align*}
\sum_{\ell=0}^m\sum_{\x\in L^{i}_\ell}\pi(\x)\sum_{j}\phi_{ij}q_{ij}(\x)
=\sum_{\ell=0}^m\sum_{\x\in L^{i}_{\ell-1}}\sum_{k}\pi(\x)\phi_{ki}q_{{ki}}(\x)
\end{align*}
which implies that $\sum_j f_{ij,m}(\q)=\sum_k f_{ki,m}(\q)$ by Equation \ref{eq:state_dependent_flow} and proves the first claim.

For the second claim, note that for state-independent policies and any $m$, we defined  $w_{i}(\q)$ to be the leading left eigenvector of $\{P(\q)\}_{i,j}$, where $P_{ij}(\q) = \phi_{ij}q_{ij}/\sum_j\phi_{ij}q_{ij}$. From this we get for all $i$:
\begin{align*}
\sum_{j}w_{j}(\q)\frac{\phi_{ji}q_{ji}}{\sum_k\phi_{jk}q_{jk}} = w_i(\q) = \frac{w_i(\q)}{\sum_k\phi_{ik}q_{ik}}\prn*{\sum_{k}\phi_{ik}q_{ik}}
\Rightarrow \sum_{j}r_{j}(\q)\phi_{ji}q_{ij} =  \sum_{k}r_i(\q)\phi_{ik}q_{ik}
\end{align*}
Multiplying both sides by $\prn*{G_{m-1}(\q)/G_m(\q)}$, and using our previous formula for the node availability (see~Equation~\ref{eq:avail}), we get $\sum_{j}A_{j,m}(\q)\phi_{ji}q_{ij} =  \sum_{k}A_{i,m}(\q)\phi_{ik}q_{ik}$. By Proposition \ref{prop:infinite} this also holds in the infinite-unit limit.
\end{proof}

Interpreting the demand bounding and the supply circulation property in terms of the \emph{effective quantiles} of state-independent policies, we find the linear constraints
$$
\widehat{q}_{ij}\in[0,1] \text{ and }
\sum_k \phi_{ki}\widehat{q}_{ki}=\sum_j \phi_{ij}\widehat{q}_{ij}.
$$
Theoretical quantiles need not fulfill the supply circulation property, but any (state-independent) quantiles induce flows (effective quantiles) that fulfill it.

\subsection{Pricing via the Elevated Flow Relaxation}
\label{ssec:dcirc}

Combining the elevated objective and the flow polytope, we obtain the \emph{elevated flow relaxation program} (see~Algorithm~\ref{alg:demand_circulation_reduction_general_pricing}). 
For the  
throughput objective, this is a linear optimization problem since both objective function and polytope are linear as $R_{ij}^T(q_{ij})=q_{ij}$. For objectives with concave reward curves, this is a convex program (recall the definition of reward curves in Section~\ref{ssec:objective}). 
The concave reward curves assumption is satisfied by the social welfare objective unconditionally and it is also satisfied by the revenue obective assumning concave revenue curves (Appendix~\ref{sec:concave_reward_curves}).
Note that the supply circulation property implied flow conservation of the effective quantiles at each node. Algorithm \ref{alg:demand_circulation_reduction_general_pricing} drops the availability term in the objective but imposes, as a constraint, the same demand bounding and flow conservation properties on the theoretical
quantiles. 
\begin{algorithm}[!h]
\caption{The Elevated Flow Relaxation Program with Pricing}
\label{alg:demand_circulation_reduction_general_pricing}
\begin{algorithmic}[1]
\REQUIRE arrival rates $\phi_{ij}$, value distributions $F_{ij}$,  reward curves~$R_{ij}$.
\STATE Find $\crl*{\widetilde{q}_{ij}}$ that solve the following relaxation:
$$
\begin{array}{rcll}
\max_\q &&  \sum_{(i,j)} \phi_{ij}R_{ij}(q_{ij}) 
&\\
\sum_{k} \phi_{ki}q_{ki}  & = & \sum_{j} \phi_{ij}q_{ij} & \forall\;i\\
q_{ij} & \in & [0,1] & \forall\;i,j.
\end{array}
$$
\STATE Output \emph{state-independent} prices $
\widetilde{p}_{ij}= F_{ij}^{-1}(1-\widetilde{q}_{ij})$ and the corresponding quantiles $\widetilde{q}_{ij}$.
\end{algorithmic}
\end{algorithm}

It is important to notice the distinction between the above constraint on theoretical
quantiles and the supply circulation property of effective quantiles. The theoretical
quantiles $\widetilde{q}_{ij}$ returned by the algorithm are \emph{constrained} to satisfy $\sum_{k}\phi_{ki}\widetilde{q}_{ki}=\sum_j\phi_{ij}\widetilde{q}_{ij}$ at every node $i$. They then give rise to effective quantiles $\widehat{\q}$ (with $\widehat{q}_{ij}=A_{i,m}(\widetilde{\q})\cdot\widetilde{q}_{ij}$) which obey the supply circulation property $\sum_k \phi_{ki}\widehat{q}_{ki}=\sum_j \phi_{ij}\widehat{q}_{ij}$ (as proven for any effective quantiles in Proposition \ref{prop:supply_circ}). 
In other words, the linear program above restricts our pricing policy to induce theoretical
quantiles which obey the flow circulation property, thereby mirroring a feature of effective quantiles.
Henceforth, for any theoretical quantile $\q$, we refer to the property $\sum_{k}\phi_{ki}q_{ki}=\sum_j\phi_{ij}q_{ij}$ as \emph{balanced demand} based on the terminology of \cite{george2012stochastic,george2011fleet} who use the term \emph{balanced network.} Note that this should be distinguished from supply circulation which is a property of the effective quantiles.

\subsection{The Approximation Guarantee}
Our analysis relies on the three following lemmas. Lemma~\ref{lem:state_dependent} shows that the solution of the elevated flow relaxation upper bounds the true objective of any state-depedendent pricing policy. Lemma~\ref{lem:infinite} then shows that the true (non-elevated) objective of the pricing policy returned by our program is equal to the solution of the program, when applied to an infinite-unit system. Finally, Lemmas~\ref{lem:max_availability} and~\ref{lem:finite} enables us to connect this solution to the true objective of the finite-unit system by showing that the true objective of any policy in the $m$-unit system is within a factor of $\frac{m}{m+n-1}$ of the objective in the infinite-unit system.

\begin{lemma}[from finite-unit state-dependent to the elevated flow relaxation]
\label{lem:state_dependent}
\vspace{0.1in} For objectives with concave reward curves $R_{ij}(\cdot)$, the  value of the objective function of the optimal state-dependent policy is upper bounded by the value of the elevated objective function of the pricing policy $\widetilde{\q}$ returned by the elevated flow relaxation program:
\begin{equation*}
\widehat{\obj}(\widetilde{\q})\geq \opt_m.
\end{equation*}
\end{lemma}\quad
\begin{proof}
For simplicity, we first focus on the throughput objective. The optimal state-dependent pricing policy $\q^\star(\cdot)$ induces a steady-state distribution $\pi^\star(\cdot)$. Based on that distribution we define, analogously to the effective quantiles of state-independent policies, the average-fraction of customers that receive service for each origin-destination pair: $\q=\sum_{\x\in S_{n,m}} \pi^{\star}(\x)\cdot \q^{\star}(\x)$. Then
\begin{equation}\label{eq:opt_obj_throughput}
\opt_m=\sum_{\x\in S_{n,m}}\pi^{\star}(\x)\sum_{i,j}\phi_{ij}q^{\star}_{ij}(\x)=\sum_{ij} \phi_{ij} q_{ij}=\widehat{\obj}(\q).
\end{equation}
We complete the proof by showing that $\widehat{\obj}(\widetilde{\q})\geq\widehat{\obj}(\q)$. To do so, we demonstrate that $\q$ is a feasible solution of the elevated flow relaxation program; this suffices as $\widetilde{\q}$ maximizes the elevated objective over the feasible set. We thus only need to show that $\q$ satisfies the demand bounding and balanced demand properties. The first holds trivially since $\q$ is a convex combination of $\q^\star(\cdot)$. The second holds true because $\q^\star$ induces steady-state flows that obey the supply circulation property (see~Section \ref{ssec:flow_polytope}); hence, at every node $i$ we have
$$
\sum_{k}\phi_{ki}q_{ki}
=\sum_{k}\sum_{\x\in\mathcal{S}_{n,m}}\pi^\star(\x)\cdot q_{ki}^\star(\x)\phi_{ki}
= \sum_{j}\sum_{\x\in\mathcal{S}_{n,m}}\pi^\star(\x)\cdot q_{ij}^\star(\x)\phi_{ij}=\sum_{j}\phi_{ij}q_{ij},
$$
which shows that $\q$ fulfills the balanced demand property. Hence $\q$ is a feasible solution to the elevated flow relaxation program and the result follows.

Extending to objectives with concave reward curves such as revenue and welfare, we note that Eq.~\eqref{eq:opt_obj_throughput} can be replaced by the following inequality which uses the concavity in the reward curves $R_{ij}(\cdot)$ and applies Jensen's inequality on them.
$$
\opt_m=\sum_{\x\in S_{n,m}}\pi^{\star}(\x)\sum_{i,j}\phi_{ij}R_{ij}\prn*{q^{\star}_{ij}(\x)}\leq \sum_{ij} \phi_{ij} R_{ij}\prn*{\widehat{q}_{ij}}=\widehat{\obj}(\widehat{\q})
$$
The rest of the proof is identical to the case of the throughput objective.
\end{proof}
\vspace{0.5em}
To establish the second step of our analysis, we use the following auxiliary lemma.

\begin{lemma}[balanced demand property implies equal availabilities, \cite{george2011fleet,Waserhole2014}]\vspace{0.1in}
\label{lem:equal_availabilities} 
For any $m$ (including $\infty$) if a state-independent pricing policy $\q$ satisfies the balanced demand property then, at all nodes $i$, the availabilities $A_{i,m}(\q)$ are equal. \end{lemma}
\begin{proof}
Consider $i^\star\in\argmax A_{i,m}(\q)$. Then the balanced demand and supply circulation properties imply 
$$
A_{i^\star,m}(\q)\sum_j \phi_{ji^\star}q_{ji^\star} =A_{i^\star,m}(\q)\sum_j \phi_{i^\star j}q_{i^\star j}=\sum_j A_{j,m}(\q)\phi_{ji^\star}q_{ji^\star}
$$
and thus $\sum_j \big(A_{i^\star,m}(\q)-A_{j,m}(\q)\big)\phi_{ji^\star}q_{ji^\star}=0$. By choice of $i^\star$, each summand is nonnegative, so for each $j$ such that $\phi_{ji^\star}q_{ji^\star}>0$ we obtain $A_{j,m}(\q)=A_{i^\star,m}(\q)$. All availabilities being equal then follows inductively using the assumption that our system is irreducible (see~Appendix \ref{sec:irreducibility}).
\end{proof}
Next we connect the elevated flow relaxation to the infinite-unit system. In fact, we show a stronger statement, that holds for any policy satisfying balanced demand and therefore in particular for the solution of the elevated flow relaxation program. \begin{lemma}[from elevated flow relaxation to infinite-unit state-independent]\vspace{0.1in}
\label{lem:infinite} 
For any state-independent pricing policy $\q$ satisfying the balanced demand property, the value of the elevated objective function of
$\q$ is equal to the value of its objective function in the infinite-unit system
\begin{equation*}
\obj_{\infty}(\q)=\widehat{\obj}(\q).
\end{equation*}
\end{lemma}
\begin{proof}
Since $\q$ satisfies the balanced demand property, by Lemmas \ref{lem:equal_availabilities} and Proposition \ref{prop:infinite}, the availability at all nodes is equal to $1$. This means that (i) the value of the objective function in the infinite-unit limit for pricing policy $\q$ is equal to its elevated value (since no term was increased), and (ii) the flow of customers on each edge is equal to $\phi_{ij}q_{ij}$.
\end{proof}
The surprising step in our results is the pullback step that bounds $\obj_m(\cdot)/\obj_\infty(\cdot)$. To obtain this bound we first show that $\obj_m(\cdot)/\obj_\infty(\cdot)$ is equal to the maximum availability in the $m-$unit system, denoted $A_m(\q) := \max_{i}\prn*{A_{i,m}(\q)}$. Then, we show that $A_m(\q)\geq\frac{m}{m+n-1}$ for all~$\q$.
\begin{lemma}[approximation of finite-unit equals maximum availability]\vspace{0.1in}
\label{lem:max_availability}
The objective function of $\q$ in the $m$-unit system is related to the infinite-limit objective as $$\frac{\obj_{m}(\q)}{\obj_{\infty}(\q)}=r_{\max}(\q)\cdot \frac{G_{m-1}(\q)}{G_m(\q)}=A_{m}(\q).$$
\end{lemma}
\begin{proof}
\proof{Proof.}Let $B_i(\q)=\sum_j \phi_{ij}q_{ij} \cdot I_{ij}(q_{ij})$ denote the contribution of node $i$ to the objective per unit of time in which station $i$ is available.
By substituting $A_{i,m}(\q)=(G_{m-1}(\q)/G_m(\q))\cdot r_i(\q)$, $A_{i,\infty}(\q)=r_i(\q
)/r_{\max}(\q)$, and $B_i(\q)$ into the definition of the objectives in Equation \eqref{eq:objective}, we obtain
\begin{equation*}\label{eq:finite_infinite_obj_ratio}
\frac{\obj_m(\q)}{\obj_\infty(\q)}
= \frac{\sum_i A_{i,m}(\q) B_i(\q)}{\sum_i A_{i,\infty}(\q) B_i(\q)}
= \frac{\frac{G_{m-1}(\q)}{G_m(\q)}\cdot \sum_i r_i(\q) B_i(\q)}{\frac{1}{r_{\max}(\q)}\;\cdot\sum_i r_i(\q) B_i(\q)}
= r_{\max}(\q)\cdot \frac{G_{m-1}(\q)}{G_m(\q)} = A_m(\q),
\end{equation*}
where the last equality follows from the characterization of the availabilities in Equation \eqref{eq:avail}. Note that the argument relies on $OBJ_\infty(\q)\neq 0$ which holds for all policies/settings we consider. In particular, there is always a policy that charges $\epsilon>0$ for every price and achieves a positive objective (since we assumed $F_{ij}(0)<1$).
\end{proof}

\begin{lemma}[weighted bipartite graph among state space of different-unit systems]\label{lem:bipartite}
\vspace{0.1in}
We call $\y\in \mathcal{S}_{n,m-1}$ a neighbor of $\y+e_i\in\mathcal{S}_{n,m}\forall i\in\crl{1,n}$.
There exists a weighted bipartite graph on
$\mathcal{S}_{n,m}\cup\mathcal{S}_{n,m-1}$ such that i) an edge has non-zero weight only if it is connecting neighboring states
, ii) for any vertex corresponding to a state in $\mathcal{S}_{n,m-1}$ the total weight of incident edges is equal to $\frac{m+n-1}{m}$, and iii) for any vertex corresponding to a state in $\mathcal{S}_{n,m}$ the total weight of incident edges is equal to $1$.
\end{lemma}
\begin{proof}
Our construction is shown in Figure \ref{fig:preview}. Each state $\x\in\mathcal{S}_{n,m}$ is adjacent to $\y=\x-e_i\in\mathcal{S}_{n,m-1}$ for all $i$ with $x_i>0$. On these edges, the weight is $\omega_{\x\y}=\frac{x_i}{m}$. Thus, the total weight incident to $\x$ is $\sum_{\y}\omega_{\x\y}=\sum_i\frac{x_i}{m}=1$. On the other hand, each state $\y\in\mathcal{S}_{n,m-1}$ is adjacent to the states $\x=\y+e_i\;\forall i\in[n]$. The respective weight incident on $\y$ is $\sum_x\omega_{\x\y}=\sum_i \frac{y_i+1}{m}=\frac{m-1+n}{m}$. Finally, there is only weight on edges between neighboring states. This concludes the proof of the lemma.
\end{proof}

\begin{figure}[!h]
\centering
\subfigure[Graph between $\mathcal{S}_{2,3}$ and $\mathcal{S}_{2,2}$]{
\label{fig:bigraph_a}
\centering
\scalebox{1.2}{
\tikzset{every picture/.style={line width=0.75pt}} 

\begin{tikzpicture}[x=0.75pt,y=0.75pt,yscale=-1,xscale=1]

\draw   (469,83) .. controls (469,74.72) and (475.72,68) .. (484,68) .. controls (492.28,68) and (499,74.72) .. (499,83) .. controls (499,91.28) and (492.28,98) .. (484,98) .. controls (475.72,98) and (469,91.28) .. (469,83) -- cycle ;
\draw   (468,43) .. controls (468,34.72) and (474.72,28) .. (483,28) .. controls (491.28,28) and (498,34.72) .. (498,43) .. controls (498,51.28) and (491.28,58) .. (483,58) .. controls (474.72,58) and (468,51.28) .. (468,43) -- cycle ;
\draw   (470,124) .. controls (470,115.72) and (476.72,109) .. (485,109) .. controls (493.28,109) and (500,115.72) .. (500,124) .. controls (500,132.28) and (493.28,139) .. (485,139) .. controls (476.72,139) and (470,132.28) .. (470,124) -- cycle ;
\draw   (471,166) .. controls (471,157.72) and (477.72,151) .. (486,151) .. controls (494.28,151) and (501,157.72) .. (501,166) .. controls (501,174.28) and (494.28,181) .. (486,181) .. controls (477.72,181) and (471,174.28) .. (471,166) -- cycle ;
\draw   (570,60) .. controls (570,51.72) and (576.72,45) .. (585,45) .. controls (593.28,45) and (600,51.72) .. (600,60) .. controls (600,68.28) and (593.28,75) .. (585,75) .. controls (576.72,75) and (570,68.28) .. (570,60) -- cycle ;
\draw   (571,102) .. controls (571,93.72) and (577.72,87) .. (586,87) .. controls (594.28,87) and (601,93.72) .. (601,102) .. controls (601,110.28) and (594.28,117) .. (586,117) .. controls (577.72,117) and (571,110.28) .. (571,102) -- cycle ;
\draw   (572,141) .. controls (572,132.72) and (578.72,126) .. (587,126) .. controls (595.28,126) and (602,132.72) .. (602,141) .. controls (602,149.28) and (595.28,156) .. (587,156) .. controls (578.72,156) and (572,149.28) .. (572,141) -- cycle ;
\draw    (498,43) -- (570,60) ;
\draw    (499,83) -- (570,60) ;
\draw    (499,83) -- (571,102) ;
\draw    (500,124) -- (571,102) ;
\draw    (500,124) -- (572,141) ;
\draw    (501,166) -- (572,141) ;

\draw (472,35) node [anchor=north west][inner sep=0.75pt]   [align=left] {3,0};
\draw (473,75) node [anchor=north west][inner sep=0.75pt]   [align=left] {2,1};
\draw (473,116) node [anchor=north west][inner sep=0.75pt]   [align=left] {1,2};
\draw (475,158) node [anchor=north west][inner sep=0.75pt]   [align=left] {0,3};
\draw (574,52) node [anchor=north west][inner sep=0.75pt]   [align=left] {2,0};
\draw (574,94) node [anchor=north west][inner sep=0.75pt]   [align=left] {1,1};
\draw (576,133) node [anchor=north west][inner sep=0.75pt]   [align=left] {0,2};
\draw (528,34.4) node [anchor=north west][inner sep=0.75pt]  [font=\small]  {$1$};
\draw (516,57.4) node [anchor=north west][inner sep=0.75pt]  [font=\footnotesize]  {$1/3$};
\draw (535,80.4) node [anchor=north west][inner sep=0.75pt]  [font=\footnotesize]  {$2/3$};
\draw (515,99.4) node [anchor=north west][inner sep=0.75pt]  [font=\footnotesize]  {$2/3$};
\draw (535,120.4) node [anchor=north west][inner sep=0.75pt]  [font=\footnotesize]  {$1/3$};
\draw (525,141.4) node [anchor=north west][inner sep=0.75pt]  [font=\footnotesize]  {$1$};
\end{tikzpicture}
}
}
\hspace{1cm}
\subfigure[Construction for general $n,m$]{
\label{fig:bigraph_b}
\centering
\scalebox{1}{
\tikzset{every picture/.style={line width=0.75pt}} 

\begin{tikzpicture}[x=0.75pt,y=0.75pt,yscale=-1,xscale=1]

\draw   (118.67,161.33) .. controls (118.67,156.92) and (122.25,153.33) .. (126.67,153.33) -- (158,153.33) .. controls (162.42,153.33) and (166,156.92) .. (166,161.33) -- (166,185.33) .. controls (166,189.75) and (162.42,193.33) .. (158,193.33) -- (126.67,193.33) .. controls (122.25,193.33) and (118.67,189.75) .. (118.67,185.33) -- cycle ;
\draw   (239.67,97.33) .. controls (239.67,92.92) and (243.25,89.33) .. (247.67,89.33) -- (302.67,89.33) .. controls (307.08,89.33) and (310.67,92.92) .. (310.67,97.33) -- (310.67,121.33) .. controls (310.67,125.75) and (307.08,129.33) .. (302.67,129.33) -- (247.67,129.33) .. controls (243.25,129.33) and (239.67,125.75) .. (239.67,121.33) -- cycle ;
\draw   (239.67,150.33) .. controls (239.67,145.92) and (243.25,142.33) .. (247.67,142.33) -- (302.67,142.33) .. controls (307.08,142.33) and (310.67,145.92) .. (310.67,150.33) -- (310.67,174.33) .. controls (310.67,178.75) and (307.08,182.33) .. (302.67,182.33) -- (247.67,182.33) .. controls (243.25,182.33) and (239.67,178.75) .. (239.67,174.33) -- cycle ;
\draw   (240.67,225.33) .. controls (240.67,220.92) and (244.25,217.33) .. (248.67,217.33) -- (303.67,217.33) .. controls (308.08,217.33) and (311.67,220.92) .. (311.67,225.33) -- (311.67,249.33) .. controls (311.67,253.75) and (308.08,257.33) .. (303.67,257.33) -- (248.67,257.33) .. controls (244.25,257.33) and (240.67,253.75) .. (240.67,249.33) -- cycle ;
\draw    (158,153.33) -- (239.67,108.67) ;
\draw    (165.67,166.67) -- (239.67,163.67) ;
\draw    (158,193.33) -- (239.67,240.67) ;

\draw (134,165.73) node [anchor=north west][inner sep=0.75pt]  [font=\Large]  {$\mathbf{{\displaystyle x}}$};
\draw (245.67,102.73) node [anchor=north west][inner sep=0.75pt]  [font=\Large]  {${\displaystyle \mathbf{x} -e_{1}}$};
\draw (245.67,153.73) node [anchor=north west][inner sep=0.75pt]  [font=\Large]  {${\displaystyle \mathbf{x} -e_{2}}$};
\draw (266,186.07) node [anchor=north west][inner sep=0.75pt]  [font=\Large]  {$\vdots $};
\draw (245.67,230.73) node [anchor=north west][inner sep=0.75pt]  [font=\Large]  {${\displaystyle \mathbf{x} -e_{n}}$};
\end{tikzpicture}
}
}
\caption[Bipartite graph construction]{Bipartite graph construction between states in $\mathcal{S}_{n,m}$ and states in $\mathcal{S}_{n,m-1}$, as described in Lemma \ref{lem:bipartite}. Fig. \ref{fig:bigraph_a} shows the construction of the bipartite graph between $\mathcal{S}_{2,3}$ and $\mathcal{S}_{2,2}$ (i.e., between configurations of 3 units in 2 nodes, and 2 units in 2 nodes). Fig. \ref{fig:bigraph_b} shows the general construction. Note that the sum of weights of incident edges for any node on the left (i.e. any state in $\mathcal{S}_{n,m}$) is $1$, while it is $(m+n-1)/m$ for nodes on the right (i.e. states in $\mathcal{S}_{n,m-1}$). 
}
\label{fig:preview}
\end{figure}

\begin{lemma}[from infinite-unit to finite-unit state-independent]
\label{lem:finite}\vspace{0.1in}
For any state-independent pricing policy $\q$, the maximum availability in the $m$-unit system is at least $\nicefrac{m}{(m+n-1)}$, i.e.,
\begin{equation*}
A_m(\q)=r_{\max}(\q)\cdot \frac{G_{m-1}(\q)}{G_m(\q)}\geq\frac{m}{m+n-1}.
\end{equation*}
\end{lemma}
\begin{proof}
In order to uniformly bound the above expression, we apply
the weighted bipartite graph, between the states in $\mathcal{S}_{n,m-1}$ and the states in $\mathcal{S}_{n,m}$, described in Lemma~\ref{lem:bipartite}. Following the same notation as before, we denote the weight between states $\x\in\mathcal{S}_{n,m}$ and $\y\in\mathcal{S}_{n,m-1}$ by $\omega_{\x\y}$. Recall that non-zero weights only exist between neighboring states, i.e. when $\x=\y+e_i\in\mathcal{S}_{n,m}$ for some $i$; further, the total weight of edges incident to any state in $\x\in\mathcal{S}_{n,m}$ is equal to $\sum_{\y}\omega_{\x\y}=1$, and the total weight of edges incident to any state in $\y\in\mathcal{S}_{n,m-1}$ is equal to $\sum_{\x}\omega_{\x\y}=\frac{m+n-1}{m}$.
\begin{align*}
 r_{\max}(\p)\cdot\frac{G_{m-1}(\q)}{G_m(\q)}&=r_{\max}(\q)\frac{
\sum_{\y\in \mathcal{S}_{n,m-1}} \prod_{j=1}^n\prn*{r_j(\q)}^{y_j}}{\sum_{\x\in \mathcal{S}_{n,m}} \prod_{j=1}^n\prn*{r_j(\q)}^{x_j}}\\
&=r_{\max}\q)\cdot\frac{
\sum_{\y\in \mathcal{S}_{n,m-1}} \prod_{j=1}^n\prn*{r_j(\q)}^{y_j}}{\sum_{\x\in \mathcal{S}_{n,m}} \left(\sum_{\y\in \mathcal{S}_{n,m-1}} \omega_{\x\y}\right)\prod_{j=1}^n\prn*{r_j(\q)}^{x_j}}\\
&=r_{\max}(\q)\cdot\frac{
\sum_{\y\in \mathcal{S}_{n,m-1}} \prod_{j=1}^n\prn*{r_j(\q)}^{y_j}}
{\sum_{(\x,\y})\in \mathcal{S}_{n,m}\times}
\mathcal{S}_{n,m-1}  \omega_{\x\y}\prod_{j=1}^n\prn*{r_j(\q})^{y_j+(x_j-y_j)}\\
&\geq 
\frac{
\sum_{\y\in \mathcal{S}_{n,m-1}} \prod_{j=1}^n\prn*{r_j(\q)}^{y_j}}{\sum_{\y\in \mathcal{S}_{n,m-1}} \left(\sum_{\x\in \mathcal{S}_{n,m}} \omega_{\x\y}\right)\prod_{j=1}^n\prn*{r_j(\q)}^{y_j}}\\
&=\frac{\sum_{\y\in \mathcal{S}_{n,m-1}}\prod_{j=1}^n\prn*{r_j(\q)}^{y_j}}
{\left(\frac{m+n-1}{m}\right)
\sum_{\y\in \mathcal{S}_{n,m-1}}\prod_{j=1}^n\prn*{r_j(\q)}^{y_j}}
= \frac{m}{m+n-1}
\end{align*}
The third equality holds as $\sum_{\y}\omega_{\x\y}=1$; the second-to-last follows from $\sum_{\x} \omega_{\x\y}=\frac{m+n-1}{m}$. Crucially, $\omega_{\x\y}>0$ only holds for neighboring states $\x\in\mathcal{S}_{n,m}$ and $\y\in\mathcal{S}_{n,m-1}$ implying the inequality.
\end{proof}
\begin{theorem}[approximation guarantee for pricing with general objectives]\label{thm:pricing} Consider any objective function $\obj_m$ for the $m$-unit system with concave reward curves $R_{ij}(\cdot)$.
Let $\widetilde{\q}$ be the pricing policy returned by Algorithm \ref{alg:demand_circulation_reduction_general_pricing} and $\opt_m$ be the value of the objective function for the optimal state-dependent pricing policy in the $m$-unit system. Then
\begin{equation}
\obj_m(\widetilde{\q})\geq \frac{m}{m+n-1}\opt_m.
\end{equation}
\end{theorem}
\begin{proof}
The proof follows by direct applications of Lemmas \ref{lem:state_dependent}, \ref{lem:infinite}, \ref{lem:max_availability}, and \ref{lem:finite}.
\end{proof}
\vspace{0.5em}

We remark that Lemma \ref{lem:finite} is stronger than what is needed in the proof of Theorem \ref{thm:pricing} as it applies to any state-independent $\q$, not just pricing policies that fulfill the balanced demand property (as is guaranteed in Algorithm \ref{alg:demand_circulation_reduction_general_pricing}). Indeed, when $\q$ fulfills the property, it has been known since the 1980s \cite{whitt1984} that 
$A_{i,m}(\q)=\frac{m}{m+n-1}\forall i$, and this has been used by  Waserhole and Jost who originally, for throughput, showed the approximation guarantee we reproved in this section \cite{Waserhole2014}. 
It follows that, when pricing is unrestricted as a control, we can always set quantiles that fulfill the balanced demand property and our guarantees can be derived as
\begin{equation*}
\obj_m(\q)= \sum_{i}A_{i,m}(\q)\cdot\prn*{\sum_j \phi_{ij}q_{ij}\cdot I_{ij}(q_{ij})} = \frac{m}{m+n-1}\prn*{\sum_{i,j} \phi_{ij}q_{ij}\cdot I_{ij}(q_{ij})}=\frac{m}{m+n-1}\widehat{\obj}(\q)
\end{equation*}
without including a detour to the infinite-unit limit (Lemma \ref{lem:infinite} to \ref{lem:finite}). This is because when~$A_{i,m}(\q)=\frac{m}{m+n-1}$ for all $i$, availability can be factored out. In the next section we extend the bounds for Gordon-Newell networks to BCMP networks to capture travel-delays. Then, unavailability is not only due to units being at the wrong locations, but also due to units being in transit. This requires new bounds for~$A_m(\q)$, though as long as demand is balanced, with $A_{i,m}(\q)$ equal for all $i$, one can still factor out availability terms to avoid the infinite-unit limit.
However, in Section \ref{sec:applications} we consider scenarios in which the balanced demand property cannot be guaranteed because restrictions on pricing make it suboptimal or even infeasible. In these settings the availability term cannot be factored out, yet the infinite-unit limit continues to provide guarantees based on bounding~$A_m(\q)$.

\section{Incorporating Travel Times Between Nodes}
\label{ssec:delay}

In the previous section with no travel-times, we obtained an approximation guarantee of $m/(m+n-1)$ with respect to the objective of the elevated flow relaxation, independent of demand arrival rates. In contrast, the policies we give in this section have approximation guarantees that degrade with increasing demand. This is similar in spirit to heavy-traffic behavior in open queueing networks, as in the presence of travel-times, there is additional unavailability due to units being in transit between nodes. Nevertheless, we design policies which guarantee convergence to the elevated-flow relaxation upper bound at a rate $O(\frac{1}{\sqrt{m}})$. We also prove that this is tight. Moreover, we characterize how the approximation guarantees degrade from $O(1/m)$ to $O(1/\sqrt{m})$ as the system goes from over-supplied to over-loaded regimes.

We begin in Section \ref{ssec:availability_w_delays} by bounding $A_m(\q)$ in the presence of travel-times, similar to  Lemma \ref{lem:finite}. Thereafter, in Section \ref{ssec:pricing_delays} we propose a pricing policy whose performance converges to the objective of the elevated flow relaxation at a $O(1/m)$ rate in lightly-loaded settings, and $O(1/\sqrt{m})$ rate under heavy loading (that is, when both the demand and the supply are scaled). We complement this by demonstrating that faster asymptotic rates of convergence (to the relaxation objective) in $m$ are not achievable even by state-dependent policies.  

\subsection{Bounding Availability in the Presence of Travel Times}\label{ssec:availability_w_delays}
In the setting with delays the resulting flows $f_{ij,m}(\q)$ continue to satisfy demand bounding and supply circulation for the same reasons as before, that is, if supply circulation did not hold there would be nodes at which more than $m$ units would accumulate. Consequently, the elevated flow relaxation (see Algorithm \ref{alg:demand_circulation_reduction_general_pricing}) continues to provide an upper bound. 
Moreover, adding link queues does not affect the optimization problems we consider in the infinite-unit system; in particular, Lemma~\ref{lem:infinite} also continues to hold in this setting. Finally, from Lemma~\ref{lem:max_availability} (since Lemma \ref{lem:avail} holds for BCMP networks, the proof of the lemma extends with $Z_m$ and $Z_{m-1}$ replacing $G_m$ and $G_{m-1}$), we know that the ratio of objectives between the infinite-unit system and the finite-unit system equals the maximum availability, among all nodes, in the finite-unit system, i.e.
$
\frac{\obj_m(\q)}{\obj_\infty(\q)}=A_{m}(\q)
$. In order to obtain an approximation ratio, we now need to understand how $A_{m}(\q)$ changes when link queues are added. Let $M$ denote the random variable corresponding to the steady-state number of \emph{available} (i.e. not in transit) units across all nodes; we first show that for a given $M$, the distribution of units across nodes is the same as in a Gordon-Newell network with $M$ units.
\begin{lemma}\vspace{0.1in}.
\label{lem:steadystateM}
Conditioned on $M$, the distribution of $\{X_i\}_{i\in[n]}$ in the network with travel-times is identical to an $n$-node $M$-unit Gordon-Newell network with the same quantiles and arrival rates. 
\end{lemma}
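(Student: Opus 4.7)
My plan is to start from the BCMP product-form stationary distribution in Equation \eqref{eq:steadystateBCMP} and marginalize over the link queue variables $\{X_{ij}\}$ subject to the conservation constraint $\sum_i X_i + \sum_{i,j} X_{ij} = m$. Writing $r_i(\q) = w_i(\q)/\sum_k \phi_{ik}q_{ik}$ for the node queue factors and $s_{ij}(\q) = \tau_{ij} w_{ij}(\q)$ for the link queue factors, the joint stationary probability factors as
$$
\pi_{\x,m}(\q) \;\propto\; \Bigl(\prod_{i\in[n]} r_i(\q)^{x_i}\Bigr) \cdot \Bigl(\prod_{i,j\in[n]^2} \frac{s_{ij}(\q)^{x_{ij}}}{x_{ij}!}\Bigr).
$$
The first factor depends only on the node queue variables and the second only on the link queue variables.

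Conditioning on $M = \sum_i X_i$ (equivalently on $\sum_{i,j}X_{ij} = m - M$), the conditional joint density of $(\{x_i\}, \{x_{ij}\})$ splits into two independent pieces: a node queue piece proportional to $\prod_i r_i^{x_i}$ restricted to $\mathcal{S}_{n,M}$, and a link queue piece proportional to $\prod_{i,j} s_{ij}^{x_{ij}}/x_{ij}!$ restricted to $\{\sum x_{ij} = m - M\}$. The node queue part is exactly the Gordon--Newell stationary distribution \eqref{eq:steadystate} on $\mathcal{S}_{n,M}$ with traffic intensities $\{r_i(\q)\}$, while the link queue part is irrelevant for the statement (it governs how the $m-M$ in-transit units are distributed across edges, but not $\{X_i\}$).

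What remains is to verify that the traffic intensities $r_i(\q)$ that appear in this conditional distribution coincide with those of the Gordon--Newell network on $n$ nodes with the same quantiles and arrival rates. As the footnote observes, the invariant distribution of the BCMP routing matrix satisfies $w_i^D(\q) = w_i^I(\q)/2$ for node queues, so $r_i^D(\q) = r_i^I(\q)/2$. Since multiplying every $r_i$ by the same constant cancels in the Gordon--Newell normalization $G_M$, the conditional law on $\mathcal{S}_{n,M}$ is invariant under this rescaling and agrees exactly with the Gordon--Newell law for the $M$-unit, $n$-node system.

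The main obstacle is really just bookkeeping: justifying that the joint product form cleanly decouples into node and link parts under the conditioning (so no cross terms arise from the normalization constant $G_m(\q)$), and matching the effective traffic intensities between the BCMP network and the pure Gordon--Newell network despite the enlarged routing state space. Once these two ingredients are in place, the identity of the conditional distribution with the Gordon--Newell distribution is immediate from the form of \eqref{eq:steadystate}.
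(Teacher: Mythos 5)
Your proof is correct and takes essentially the same approach as the paper: the paper's entire argument for this lemma is the one-line remark that it ``follows directly from the product-form nature of the steady-state distribution in Equation \eqref{eq:steadystateBCMP}'', and your marginalization over the link queues (with the link-queue factor cancelling between numerator and denominator of the conditional law, and the uniform rescaling $w_i^D=w_i^I/2$ cancelling in the Gordon--Newell normalization) is exactly the bookkeeping behind that remark.
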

\begin{proof}
This follows directly from the product-form nature of the steady-state distribution in Equation~\eqref{eq:steadystateBCMP}, since the denominator in 
$$\pi_{\y,m}(\q|M)
= \frac{\pi_{\y,m}(\q)}{\sum_{\x:\sum_{i=1}^nx_i=M}\pi_{\x,m}(\q)}
= \frac{\prod_{i\in[n]} \prn*{r_i(\q)}^{y_i}}{
\sum_{\x:\sum_{i=1}^nx_i=M}
\prod_{i\in[n]} \prn*{r_i(\q)}^{x_i}
}
$$
is exactly equal to the normalizing constant in the Gordon-Newell Theorem (see~Equation \ref{eq:steadystate}).
\end{proof}

Using this, we can now obtain a bound for the $m$-unit system availability in the next lemma. Note that the bound converges to 1 as $m\to\infty$. Furthermore, it is crucial to observe that for $m\geq100$, the only assumption of the Lemma is that  $\sum_{ij}\phi_{ij}\tau_{ij}q_{ij} \leq  m-2\sqrt{m\ln(m)}$, i.e., when considering an asymptotic regime in which both the supply $m$ and the demand $\{\phi_{ij}\}$ grow large, our results hold when both grow at the same rate.

\begin{lemma}
\label{lem:Am_w_delay}
\vspace{0.1in} For any network with parameters $\crl*{\phi_{ij},F_{ij}(\cdot),\tau_{ij}}$
if $m\geq 100$ and quantiles $\q$ satisfy $\sum_{ij}\phi_{ij}\tau_{ij}q_{ij} \leq  m-2\sqrt{m\ln(m)}$ then
\begin{align*}
A_m(\q)\geq
\prn*{1 - \frac{3}{\sqrt{m}}}\prn*{\frac{\sqrt{m\ln m}}{\sqrt{m\ln m}+n-1}}.
\end{align*}
\end{lemma}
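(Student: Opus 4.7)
The strategy is to combine the conditional Gordon–Newell structure of Lemma~\ref{lem:steadystateM} with a concentration bound for the number of units in transit. Throughout, fix $i^{\star}\in\argmax_i r_i(\q)$, where the traffic intensities $r_i(\q)$ are determined by the node-queue routing matrix and do not depend on $m$ or on the travel times.

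\paragraph{Step 1: Reduction to a conditional Gordon--Newell network.} By Lemma~\ref{lem:steadystateM}, conditional on the number of available units $M$, the vector $(X_i)_{i\in[n]}$ has the distribution of an $n$-node $M$-unit Gordon--Newell network with the same service rates and routing probabilities. In any such Gordon--Newell network the maximum availability is attained at $i^{\star}$ (by Equation~\eqref{eq:avail}, since $A_{i,M}(\q)\propto r_i(\q)$). Combining Lemma~\ref{lem:max_availability} with Lemma~\ref{lem:finite} applied to the conditional network yields
\begin{equation*}
\mathbb{P}(X_{i^{\star}}>0\mid M) \;=\; A_{i^{\star},M}(\q)\;\geq\;\frac{M}{M+n-1}\qquad\text{whenever } M\geq 1.
\end{equation*}

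\paragraph{Step 2: Pass to an expectation bound.} Taking expectations and using the monotonicity of $M\mapsto M/(M+n-1)$ together with $A_m(\q)\geq \mathbb{P}(X_{i^{\star}}>0)$,
\begin{equation*}
A_m(\q)\;\geq\;\mathbb{E}\!\left[\frac{M}{M+n-1}\mathds{1}_{\{M\geq \sqrt{m\ln m}\}}\right]\;\geq\;\frac{\sqrt{m\ln m}}{\sqrt{m\ln m}+n-1}\cdot\mathbb{P}\!\left(M\geq \sqrt{m\ln m}\right).
\end{equation*}

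\paragraph{Step 3: Concentration of $M$.} It remains to show $\mathbb{P}(M\geq \sqrt{m\ln m})\geq 1-3/\sqrt{m}$, i.e.\ that the in-transit total $m-M=\sum_{ij}X_{ij}$ rarely exceeds $m-\sqrt{m\ln m}$. Applying Little's law to each link queue, and using $f_{ij,m}(\q)\leq \phi_{ij}q_{ij}$ together with the hypothesis of the lemma,
\begin{equation*}
\mathbb{E}[m-M]\;=\;\sum_{ij}\tau_{ij}f_{ij,m}(\q)\;\leq\;\sum_{ij}\tau_{ij}\phi_{ij}q_{ij}\;\leq\;m-2\sqrt{m\ln m},
\end{equation*}
so the target threshold lies at least $\sqrt{m\ln m}$ above the mean. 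To turn this expectation gap into the required $3/\sqrt{m}$ tail bound, I exploit the Poisson-like factor $(\tau_{ij}w_{ij}(\q))^{x_{ij}}/x_{ij}!$ in the product-form stationary distribution~\eqref{eq:steadystateBCMP}: either by stochastically comparing the closed BCMP network to an open Jackson network with the same effective arrival rates (in which the link queues are \emph{independent} Poissons of rate $\tau_{ij}w_{ij}(\q)$ and a Chernoff bound on their sum gives subexponential tails $m^{-\Omega(1)}$), or by a direct moment generating function estimate on $G_m(\q)$ yielding the same conclusion.

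\paragraph{Main obstacle.} Steps 1 and 2 are essentially bookkeeping once the conditional Gordon--Newell interpretation is made explicit. The delicate part is Step 3: the link queues $\{X_{ij}\}$ are not mutually independent, since the pool conservation $\sum_i X_i+\sum_{ij}X_{ij}=m$ couples them (and couples them to the $X_i$ as well), so an off-the-shelf Chernoff bound on sums of independent random variables does not apply. The crux of the argument is therefore to leverage the BCMP product form in order to transfer a Poisson-type tail bound to the closed-network in-transit count; the factor $3/\sqrt{m}$ is loose and essentially only asks that the sum $\sum_{ij}X_{ij}$ have sub-Poisson right-tail behavior of the natural order.
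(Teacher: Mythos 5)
Your Steps 1 and 2 match the paper's argument exactly: condition on the number $M$ of available units, invoke Lemma~\ref{lem:steadystateM} to identify the conditional law of the node queues with an $M$-unit Gordon--Newell network, apply Lemma~\ref{lem:finite} to get $A_m(\q\mid M)\geq M/(M+n-1)$, and then restrict to the event $\{M\geq\sqrt{m\ln m}\}$ using monotonicity. The problem is Step 3, which you yourself flag as ``the crux'': you never actually establish the tail bound $\PP(M<\sqrt{m\ln m})\leq 3/\sqrt{m}$. You correctly observe that the link queues are coupled through the conservation constraint $\sum_i X_i+\sum_{ij}X_{ij}=m$, so a Chernoff bound for independent variables does not directly apply, and you then offer two undeveloped alternatives (a ``stochastic comparison to an open Jackson network'' read off from the product form, or an unspecified MGF estimate on $G_m$). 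Neither is carried out, and the first is more delicate than you suggest: the closed-network stationary law is the open product form \emph{conditioned} on the total being $m$, and this conditioning can in principle inflate the upper tail of $\sum_{ij}X_{ij}$ (e.g., when the geometric node-queue factors $r_i^{x_i}$ push mass toward small $x_i$), so one cannot simply declare the link queues to be independent Poissons of the natural means.

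The paper closes this gap with a sample-path coupling that sidesteps the product form entirely: dominate each link queue $X_{ij}$ by an $M/M/\infty$ queue in which \emph{every} willing customer (arriving at rate $\phi_{ij}q_{ij}$, independently across OD pairs) deposits a virtual unit into the link, regardless of whether a real unit is available at node $i$. The real link queue is pathwise below the virtual one, so $D=m-M$ is stochastically dominated by $\widetilde{D}\sim\mathrm{Poisson}\bigl(\sum_{ij}\phi_{ij}q_{ij}\tau_{ij}\bigr)$, whose mean is at most $m-2\sqrt{m\ln m}$ by hypothesis; the Poisson Chernoff bound (Lemma~\ref{lem:poissontail}) then gives $\PP\bigl(\widetilde{D}>m-\sqrt{m\ln m}\bigr)\leq 3/\sqrt{m}$ for $m\geq 100$. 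Note also that the dominating mean is $\sum_{ij}\phi_{ij}q_{ij}\tau_{ij}$ (the rate of \emph{willing} customers), not $\sum_{ij}\tau_{ij}w_{ij}(\q)$ as written in your Step 3 --- the $w_{ij}$ are only defined up to normalization, so they cannot carry the quantitative content of the hypothesis. To complete your proof you would need to supply this coupling (or an equivalent dominance argument); as it stands, the central estimate of the lemma is asserted rather than proved.
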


\begin{proof}
First, for any given policy $\q$, as before we have the realized flows $f_{ij,m}(\q) = q_{ij}\phi_{ij}A_{i,m}(\q)$; moreover, this is the expected rate of units entering link queue $X_{ij}$. Let $D = m-M$ be the number of units which are in transit. Now, by Little's law (see~\cite{kelly2011} or \cite{IntroStochNetworks}), we have that the expected number of units in link queues is given by $\mathbb{E}\left[D\right]=\sum_{i,j}A_{i,m}(\q)\phi_{ij}q_{ij}\tau_{ij}$. 

Note that the link queues $\{X_{ij}\}$ are  (first-order) stochastically dominated by independent $M/M/\infty$ queues; formally, for any link queue $X_{ij}$, if $Y_{ij}$ is the stationary distribution of an $M/M/\infty$ queue with arrival rate $\phi_{ij}q_{ij}$ and average transition time $\tau_{ij}$, then we have that $P\left[Y_{ij}\geq x\right]\geq P\left[X_{ij}\geq x\right]\;\forall x\in\mathbb{Z}$. 
This follows from a simple coupling argument, where incoming customers follow an independent Poisson process of rate $\phi_{ij}q_{ij}$ and enter the link queue with a \emph{virtual unit}, irrespective of whether the customer engages a unit or not in the real system. Thus $D$ is stochastically dominated by $\widetilde{D}=Poi(\sum_{i,j}\phi_{ij}q_{ij}\tau_{ij})$. Further, since $D$ is bounded above by $m$, $D$ is also stochastically dominated by $\widehat{D} = \min\{\widetilde{D},m\}$.

Next, from Lemma \ref{lem:steadystateM}, we know that conditioned on there being $M$ available units in the steady-state system, the distribution of units in node queues is identical to that of an $n$-node $M$-unit Gordon-Newell network. Denoting by $A_{m}(\q|M)\triangleq\max_{i\in[n]}\PP\brk*{X_i>0|M}$ the maximum availability across all nodes conditioned on there being $M$ available units, we know  from Lemma \ref{lem:finite} that for any $n$-node, $M$-unit Gordon-Newell network the maximum availability is at least $M/(M+n-1)$. Since~$M=m-D$ and $\nicefrac{(m-x)}{(m+n-1-x)}$ is decreasing in $x$ for $x\leq m$,
it follows that
\begin{align}\label{eq:delays}
A_{m}(\q) &\geq \mathbb{E}\left[\frac{m-D}{m+n-1-D}\right]
\geq \mathbb{E}\left[\frac{m-\widehat{D}}{m+n-1-\widehat{D}}\right].
\end{align}
Further, by definition of $\widehat{D}$ we observe that
$\PP\brk*{\widehat{D}> m\prn*{1-\sqrt{\frac{\ln m}{m}}}}=\PP\brk*{\widetilde{D}> m\prn*{1-\sqrt{\frac{\ln m}{m}}}}$. We can now apply a standard Chernoff bound for the Poisson random variable $\widetilde{D}$ (see~from Lemma~\ref{lem:poissontail} in Appendix~\ref{app:proofs_point}), using the assumption that $ m - 2\sqrt{m\ln(m)}\geq \sum_{ij}\phi_{ij}\tau_{ij}q_{ij} = \EE[\widetilde{D}]$. In particular,  we may bound $\PP\brk*{\widehat{D}> m\prn*{1-\sqrt{\frac{\ln m}{m}}}}$ by
\begin{align}
\PP\brk*{\widetilde{D}>m\prn*{1-\sqrt{\frac{\ln m}{m}}}}
&\leq \exp\prn*{\frac{-m\ln m}{2(m-2\sqrt{m\ln m})}\prn*{1-\frac{\sqrt{m\ln m}}{m-2\sqrt{m\ln m}}}}\nonumber \\
&= \exp\prn*{\frac{-m\ln m\prn*{m-3\sqrt{m\ln m}}}{2(m-2\sqrt{m\ln m})^2}}\nonumber
\\
&= \exp\prn*{ \frac{-\ln m\cdot \prn*{1-3\sqrt{\ln m/m}}}{2\prn*{1-4\sqrt{\ln m/m}\prn*{1-\sqrt{\ln m/m}}}}}\nonumber\\
&\leq \exp\prn*{ \frac{-\ln m \prn*{1-3\sqrt{\ln m/m}}}{2\prn*{1-1.5\sqrt{\ln m/m}}}}\nonumber \\
&= \exp\prn*{ \frac{-\ln m}{2} \prn*{1 - \frac{3\sqrt{\ln m/m}}{\prn*{2-3\sqrt{\ln m/m}}}}}
\nonumber\\
&= \frac{1}{\sqrt{m}}\exp\prn*{ \frac{3\ln m}{4\sqrt{m/\ln m}-6}}\leq\frac{3}{\sqrt{m}}\quad\mbox{for $m\geq 100$}.\nonumber
\label{eq:Dbound}
\end{align}
Here, the second inequality holds true because $4\sqrt{\ln m/m}(1-\sqrt{\ln m/m}) \geq 1.5\sqrt{\frac{\ln m}{m}}$ and the final inequality holds true because $\exp\prn*{ \frac{3\ln m}{4\sqrt{m/\ln m}-6}}\leq 3
$ for $m\geq 100$.
We can use the above to bound the availability in Inequality \eqref{eq:delays} as
\begin{align*}
A_{m}(\q) 
&\geq\prn*{1 - \frac{3}{\sqrt{m}}}\prn*{\frac{m-(m-\sqrt{m\ln m})}{m-(m-\sqrt{m\ln m})+n-1}} + \frac{3}{\sqrt{m}}\cdot 0.
\end{align*}
Simplifying, we obtain the result.
\end{proof}

\subsection{Pricing in the Presence of Travel Times}\label{ssec:pricing_delays}
We are now ready to extend our pricing policy to the setting with transit delays. In order to do so, we need to first extend the elevated flow relaxation by adding an extra constraint. The main observation is that in an $m$-unit system with transit delays, there is an additional \emph{conservation constraint} induced by the fact that the number of units in the link queue cannot exceed $m$. As before, let $f_{ij}^{m}(\q) = \widehat{q}_{ij}\phi_{ij}$ denote the expected rate of units entering link queue $X_{ij}$; then by Little's law (see~\cite{kelly2011} or \cite{IntroStochNetworks}), we have that the expected number of units in link queues is given by $\sum_{i,j}\phi_{ij}\widehat{q}_{ij}\tau_{ij}$, which, in an $m$-unit system, must be bounded by $m$.
To incorporate this, we need to add an additional \emph{rate-limiting constraint} to the elevated flow relaxation wherein we ensure that $\sum_{i,j}\phi_{ij}\widehat{q}_{ij}\tau_{ij}\leq m$.  
This gives us the \emph{rate-limited elevated flow relaxation program} in Algorithm~\ref{alg:delay_efr}.
\begin{algorithm}
\caption{The Rate-Limited Elevated Flow Relaxation Program}
\label{alg:delay_efr}
\begin{algorithmic}[1]
\REQUIRE arrival rates $\phi_{ij}$, value distributions $F_{ij}$,  reward curves~$R_{ij}$, scaling parameter $\varepsilon_m$, travel-times $\tau_{ij}$.
\STATE Find $\crl*{{q}'_{ij}}$ that solve the following relaxation:
$$
\begin{array}{rcll}
\max_\q &&  \sum_{(i,j)} \phi_{ij}R_{ij}({q}_{ij}) &\\
\sum_{k} \phi_{ki}{q}_{ki}  & = & \sum_{j} \phi_{ij}{q}_{ij} & \forall\;i\\
\sum_{i,j}\phi_{ij}\tau_{ij}{q}_{ij} &\leq& m& \\
{q}_{ij} & \in & [0,1] & \forall\;i,j.
\end{array}
$$
\STATE If $\sum_{i,j}\phi_{ij}\tau_{ij}{q}_{ij}' \geq m(1-\epsilon_m)$, set $\widetilde{q}_{ij}= q'_{ij}\cdot\prn*{1 - \varepsilon_m}$, else $\widetilde{q}_{ij}= q'_{ij}$
\STATE Output \emph{state-independent} prices $\widetilde{p}_{ij} = F_{ij}^{-1}(1-\widetilde{q}_{ij})$ and the corresponding quantiles $\widetilde{q}_{ij}$
\end{algorithmic}
\end{algorithm}

\begin{theorem}[approximation guarantee in the presence of delays]\vspace{0.1in}
\label{thm:pricing_w_delays}
Consider any objective function $\obj_m$ for the $m$-unit system with concave reward curves $R_{ij}(\cdot)$. Let quantiles $\widetilde{\q}$ be the output of Algorithm \ref{alg:delay_efr} with input $\varepsilon_m\defeq2\sqrt{\ln m/m}$, $\opt_m$ be the value of the objective function for the optimal state-dependent pricing policy, and  $m\geq 100$. Then
\begin{equation*}
\frac{\obj_m(\widetilde{\q})}{\opt_m}\geq \prn*{1-\varepsilon_m}\prn*{\frac{\sqrt{m\ln m}}{\sqrt{m\ln m}+n-1} - \frac{3}{\sqrt{m\ln m}}}. 
\end{equation*}
\end{theorem}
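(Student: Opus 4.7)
The plan is to follow the three-step framework of Theorem~\ref{thm:pricing}, adapted for travel-times and the scaling $\widetilde{q}_{ij} = (1-\varepsilon_m)q_{ij}$. Concretely, I would prove (i) $\widehat{\obj}(\q) \geq \opt_m$ for the pre-scaled solution $\q$ returned by the relaxation; (ii) $\widehat{\obj}(\widetilde{\q}) \geq (1-\varepsilon_m)\widehat{\obj}(\q)$ and $\obj_\infty(\widetilde{\q}) = \widehat{\obj}(\widetilde{\q})$; and (iii) $\obj_m(\widetilde{\q}) = A_m(\widetilde{\q})\cdot\obj_\infty(\widetilde{\q})$, with $A_m(\widetilde{\q})$ bounded via Lemma~\ref{lem:Am_w_delay}. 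Chaining these inequalities yields the stated ratio after routine algebraic manipulation.

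For step (i), the argument of Lemma~\ref{lem:state_dependent} extends with one new ingredient: the rate-limiting constraint $\sum_{i,j}\phi_{ij}\tau_{ij}\widehat{q}_{ij} \leq m$ holds for any state-dependent policy in the delay system, since by Little's law applied to the link queues, the left-hand side equals the steady-state expected number of in-transit units, which cannot exceed $m$. Demand-bounding and supply-circulation carry over as in Lemma~\ref{lem:state_dependent}. Hence the Jensen-averaged state-dependent quantiles form a feasible solution of Algorithm~\ref{alg:delay_efr}'s relaxation, so its optimum $\q$ satisfies $\widehat{\obj}(\q) \geq \opt_m$.

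For step (ii), since each reward curve $R_{ij}(\cdot)$ is concave with $R_{ij}(0)=0$, one has pointwise $R_{ij}((1-\varepsilon_m)q) \geq (1-\varepsilon_m) R_{ij}(q)$, giving $\widehat{\obj}(\widetilde{\q}) \geq (1-\varepsilon_m)\widehat{\obj}(\q)$. Moreover, uniform scaling preserves supply and demand circulation, so $\widetilde{\q}$ still circulates demand. By the analog of Lemma~\ref{lem:equal_availabilities} combined with Proposition~\ref{prop:infinite} (and the paper's observation that link queues do not alter infinite-unit optimization), all node-availabilities of $\widetilde{\q}$ equal $1$ in the infinite-unit limit, yielding $\obj_\infty(\widetilde{\q}) = \widehat{\obj}(\widetilde{\q})$.

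For step (iii), the scaling is calibrated so that the precondition of Lemma~\ref{lem:Am_w_delay} is met:
$\sum_{i,j}\phi_{ij}\tau_{ij}\widetilde{q}_{ij} = (1-\varepsilon_m)\sum_{i,j}\phi_{ij}\tau_{ij}q_{ij} \leq (1-\varepsilon_m)m = m - 2\sqrt{m\ln m}$.
Combining Lemma~\ref{lem:Am_w_delay} with the delay analog of Lemma~\ref{lem:max_availability} then yields $\obj_m(\widetilde{\q}) \geq (1-3/\sqrt{m})\cdot\sqrt{m\ln m}/(\sqrt{m\ln m}+n-1)\cdot\obj_\infty(\widetilde{\q})$, and the stated bound follows by chaining with (ii) and (i). The main obstacle will be the subtle interplay between the rate-limit imposed by Little's law and the slack $\varepsilon_m$ needed to stochastically control the link-queue population: choosing $\varepsilon_m$ too small makes the Chernoff tail bound in Lemma~\ref{lem:Am_w_delay} fail, while choosing it too large degrades the approximation, and the choice $\varepsilon_m = 2\sqrt{\ln m/m}$ precisely balances these two error sources.
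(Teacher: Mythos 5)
Your proposal is correct and follows essentially the same three-step route as the paper's proof: the rate-limited relaxation upper-bounds any state-dependent policy (via Little's law on the link queues), the scaled solution loses at most a $(1-\varepsilon_m)$ factor while matching its elevated objective in the infinite-unit limit, and the finite-$m$ loss is controlled by Lemma~\ref{lem:max_availability} together with Lemma~\ref{lem:Am_w_delay}, whose precondition is exactly what the choice $\varepsilon_m = 2\sqrt{\ln m/m}$ guarantees. The only cosmetic difference is that you justify $R_{ij}((1-\varepsilon_m)q)\geq(1-\varepsilon_m)R_{ij}(q)$ by concavity through the origin whereas the paper uses monotonicity of $I_{ij}$ in the quantile; both are valid.
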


\begin{proof}
The proof follows a similar roadmap as that of Theorem \ref{thm:pricing}. In particular, we argue that
\begin{enumerate}
\item the rate-limited elevated flow relaxation provides an upper bound for any state-dependent policy,
\item the rate-limited elevated flow relaxation solution is achieved by a state-independent policy in the infinite-unit system, and
\item the ratio of the performance of any state-independent policy $\q$ in the infinite-unit and $m$-unit system is equal to the maximum availability $A_m(\q)$.
\end{enumerate}

First, similar to Lemma \ref{lem:state_dependent}, note that since the realized flows in the $m$-unit system must obey the conservation laws encoded by the rate-limited elevated flow relaxation, hence $\opt_m$ is bounded by the solution of the rate-limited elevated flow relaxation $\sum_{(i,j)} \phi_{ij}R_{ij}(q_{ij})$.
Moreover, since per-ride rewards $I_{ij}(\cdot)$ are non-increasing in $\q$,  scaling the $q_{ij}$ by $(1-\varepsilon_m)$ results in an elevated objective value that obeys
$$(1-\varepsilon_m)\sum_{(i,j)} \phi_{ij}R_{ij}(q_{ij}) \leq \sum_{(i,j)} \phi_{ij}R_{ij}(\widetilde{q}_{ij}).$$
Further, $\sum_{i,j}\phi_{ij}\widetilde{q}_{ij}\tau_{ij}\leq m\cdot\prn*{1 - \varepsilon_m}$.
Now, using similar arguments as in Lemma \ref{lem:infinite}, we can show that using a state-independent policy $\widetilde{\q}$ in the infinite-unit limit gives $\obj_{\infty}(\widetilde{\q}) = \sum_{(i,j)} \phi_{ij}R_{ij}(\widetilde{q}_{ij})$ (note that we use the same $\widetilde{\q}$ as derived from the $m$ unit rate-limited elevated flow relaxation in the infinite unit limit; in other words, we scale the number of units to infinity, but retain the constraint $\sum_{i,j}\phi_{ij}\tau_{ij}\widehat{q}_{ij} \leq m$ for a fixed $m$). Next, from Lemma \ref{lem:max_availability}, we get that $\obj_m(\widetilde{\q}) = A_m(\widetilde{\q})\obj_{\infty}(\widetilde{\q})$. Finally, using Lemma \ref{lem:Am_w_delay}, we get the desired bound
$$\frac{\obj_m(\widetilde{\q})}{\opt_m}\geq \prn*{1-\varepsilon_m}\prn*{\frac{\sqrt{m\ln m}}{\sqrt{m\ln m}+n-1} -\frac{3}{\sqrt{m}}}.\hfill
$$
\end{proof}

Note that for any fixed $n$, the theorem shows that the policy returned by the rate-limited elevated flow relaxation is asymptotically optimal as $m\rightarrow\infty$ for \emph{any demand rates and transit delays} $\crl*{\phi_{ij},\tau_{ij}}$, including in particular demand values $\phi_{ij}^m$ that scale together with the supply. In Appendix \ref{app:no_prices} we use this to give a finite-$m$ characterization for the asymptotic results in~\cite{braverman2016empty}.

It is now natural to wonder whether the guarantee in Theorem \ref{thm:pricing_w_delays} is tight. In particular, the rate of convergence is only in the square-root of $m$ and as such it converges an order of magnitude slower than our guarantees without travel-times. It turns out that there are two answers to that question:
\begin{itemize}
    \item when there is slack in the rate-limiting constraint, i.e., $\sum_{i,j}\phi_{ij}\tau_{ij}q_{ij}<m(1-c)$ for some $c>0$, the rate of convergence is linear in m (see Corollary \ref{cor_lin_rate}). This occurs, e.g., when there are no travel delays as in Theorem \ref{thm:pricing}, when demand does not grow with $m$, or when revenue-maximizing prices are high enough;
    \item when the rate-limiting constraint is tight, and demand grows with $m$, then the rate of convergence of our algorithm is $O(\frac{1}{\sqrt{m}})$ (Theorem \ref{thm:pricing_w_delays}). In general, $o(1/\sqrt{m})$ convergence to the elevated flow relaxation, cannot be attained even by an optimal policy (see Proposition \ref{prop_sqrt_lower_bound}).
\end{itemize}
Combined with results in the next section, these results shed more light on numerical observations of Braverman et al. \cite{braverman2016empty}, who found a convergence to optimality at rate $(1/\sqrt{m})$. We prove that this holds true, and is optimal in general. However, if at least one node has a constant fraction of units in their optimization program's optimal solution (in the language from their Lemma 2: if at least one $i$, satisfies  $\bar{e}^\star_{ii}>0$), Corollary \ref{cor_lin_rate} implies a rate of $O(1/m)$, i.e.,  faster than empirically observed.

\begin{corollary}\label{cor_lin_rate}\vspace{0.1in}
Suppose that the optimal solution $\q^\star$ to the flow relaxation in Algorithm~\ref{alg:delay_efr} is not constrained by the rate-limited constraint, that is, $\sum_{ij}\phi_{ij}\tau_{ij}q^\star_{ij}=(1-2c)m$ for some $c>0$. Then, the rate of convergence to the upper bound is linear. Specifically, for $m\geq 100$, 
\begin{equation*}
\frac{\obj_m(\q^\star)}{\opt_m}
\geq 
\prn*{1-e^{-\frac{c^2}{2}m}}
\prn*{\frac{cm}{cm+n-1}}. 
\end{equation*}
\end{corollary}

\begin{proof}
By a Chernoff bound (Lemma \ref{lem:poissontail}), with $\widehat{D}$ and $\widetilde{D}$ defined as in Lemma \ref{lem:Am_w_delay}, we have
 \begin{equation*}
 \PP\brk*{\widehat{D}> m\prn*{1-c}m}
 =\PP\brk*{\widetilde{D}> m\prn*{1-c}m}
 \leq e^{\prn*{-\frac{(cm)^2}{2(1-2c)m}\prn*{1-\frac{cm}{(1-2c)m}}}}
 = e^{\prn*{-\frac{c^2m}{2(1-c)}\prn*{1-\frac{c}{1-2c}}}}
 \end{equation*}
Simplifying, we find that  $\prn*{\frac{c^2}{2(1-c)}\prn*{1-\frac{c}{1-2c}}}\geq \frac{c^2}{2}$. Thus, with probability at least $\prn*{1-e^{-\frac{c^2}{2}m}}$, there are at least $cm$ units not in transit. The rest of the proof is equivalent to that in Lemma \ref{lem:Am_w_delay} and Theorem \ref{thm:pricing_w_delays}.
\end{proof}

\begin{prop}\label{prop_sqrt_lower_bound}\quad 
There exist regimes in which the gap between the optimal state-dependent policy and the objective of the rate-limited elevated flow relaxation is bounded by $\Omega\left(\frac{1}{\sqrt{m}}\right)$.
\end{prop}
\begin{proof}Consider a system consisting of just a single node $i$, in which customers travel from $i$ to itself and $\tau_{ii}=1$. Suppose our objective is to maximize throughput; in this case, it is easy to see that  setting $q_{ii}=1$ is the optimal policy. Further, suppose that $\phi_{ii}=m$, i.e., the rate of demand is exactly equal to, and scales with, the supply available.  Then, using Equation \eqref{eq:steadystateBCMP} we can write the objective  as follows:
\begin{equation*}
 \obj_m = m A_{i,m}(\q) =
 m\brk*{1-
 \prn*{\frac{
 \prn*{\frac{\frac{1}{2}}{m}}^{0}
\prn*{\frac{\prn*{\frac{1}{2}}^{m}}{m!}}
 }{
 \sum_{x_i=0}^{m}
\prn*{\frac{\frac{1}{2}}{m}}^{x_i}
\prn*{\frac{\prn*{\frac{1}{2}}^{m-x_i}}{(m-x_i)!}}}}}
=
 m\brk*{1-
 \prn*{
 \frac{
\prn*{\frac{1}{m!}}
 }
 {
 \sum_{x_i=0}^{m}
\prn*{\frac{1}{m}}^{x_i}
\prn*{\frac{1}{(m-x_i)!}}}}.
}
\end{equation*}

In order to prove our claim that no policy can obtain a rate of convergence that is faster than in the square-root of $m$, we need to bound the subtrahend by $\Omega(\frac{1}{\sqrt{m}})$:
\begin{equation*}
  \frac{ \prn*{\frac{1}{m!}}}{\sum_{x_i=0}^{m}
\prn*{\frac{1}{m}}^{x_i-m}
\prn*{\frac{1}{(m-x_i)!}}}\frac{1}{m^m}
=\frac{m^m}{m!}\frac{1}{\sum_{i=0}^m \frac{m^{x_i}}{x_{i}!}}
\geq \frac{m^m}{m!}\frac{1}{\sum_{i=0}^\infty \frac{m^{x_i}}{x_{i}!}}
= \frac{e^{-m}m^m}{m!}
= \Omega\prn*{\frac{1}{\sqrt{m}}}.
\end{equation*}
Here, the final bound follows from Sterling's approximation since $m!=\Theta(m^{m+\frac{1}{2}}e^{-m})$. 
\end{proof}

\section{Applications of our  Framework}
\label{sec:applications}
In this section, we apply our elevated flow relaxation framework to extend the previous approximation guarantees in a variety of other settings, providing the first finite approximation guarantees for these settings. In Section \ref{ssec:rerouting}, we show that our framework can also extend to rebalancing controls other than pricing such as supply redirection (empty-vehicle rebalancing) and demand redirection (matching decisions). Next, in Section~\ref{ssec:point_pricing}, we discuss constrained optimization settings such as the case when prices need to come from a discrete set. 
Finally, in Section \ref{ssec:multi_objective}, we show approximation guarantees for multi-objective settings such as Ramsey pricing for which we obtain bicriterion approximations. Although we  describe the above results in the absence of travel-times, for ease of presentation, they mostly extend directly via an added rate-limiting constraint (Section~\ref{ssec:reb_delays}). 
The results in this section are based on the bounds in Sections \ref{sec:pricing} and \ref{ssec:delay}, combined with the recognition that the controls we derive still induce Gordon-Newell/BCMP networks.

\subsection{Rebalancing Controls Beyond Pricing}
\label{ssec:rerouting}
Pricing is just one of several control levers in shared vehicle systems for balancing supply and demand; we now investigate two other levers, which we refer to as \emph{supply redirection} and \emph{demand redirection}, and show how they fit into our approximation framework. In supply redirection we make a decision at the end of every trip on whether the unit remains at the destination of the trip or moves elsewhere whilst incurring a cost. This is of particular importance in systems in which the platform has great control over the units, e.g., in a system of autonomous vehicles or in schemes like Lyft's personalized power zones \cite{LyftPPZ} (assuming drivers' compliance). In demand redirection, we redirect passengers arriving at a node to take units from nearby nodes. In practice, this would be achieved by pulling units from nearby nodes; for example in ridesharing services, the platform can dispatch a driver from a nearby area.

Before delving into the details of these additional controls, it is worth emphasizing two points. First, we note that different controls give rise to fundamentally different problems/objectives, and though we use the same methodology (based on the elevated flow relaxation) to get approximation guarantees for these controls, note that the optimal performance (and hence that of our policies) under different controls may be very different. The simplest example for that is given by the subgraphs of Figure \ref{fig:nonconcave} induced on just B and C: Algorithm \ref{alg:demand_circulation_reduction_general_pricing} would return a solution that allows only $O(\epsilon)$ demand to be served, whereas our results in this section (see Algorithm \ref{alg:demand_circulation_reduction_routing}) may allow for vehicles arriving at $C$ to be redirected to $B$ (depending on the cost of redirection) and can thereby serve $O(1)$ customers. Second, the examples here serve to illustrate that our technique apply to any controls that result in a linear transformation (i.e., scaling, splitting and redirection) of demand (i.e., passenger) and supply (i.e., unit) flows.

\paragraph{Supply Redirection.} 
We consider a state-dependent policy $\r(\X)$ which, for each trip ending at a node $i$, chooses to redirect the unit to some other node $j$ (leading to state $\X-e_i+e_j$), else allows the unit to stay at $i$. For a state-independent policy, let $r_{ij}\in[0,1]$ be the probability that an arriving unit at $i$ is redirected to $j$. We assume that each redirection from $i$ to $j$ has associated cost $c_{ij}$, and that units arriving empty (redirected) are not redirected again.

With $m$ units, fixed (state-dependent) quantiles $\q$, and a (state-dependent) redirection policy $\r$, we observe a rate $f_{ij,m}(\q,\r)$ of customers traveling from $i$ to $j$, and a rate of redirected vehicles $z_{ij,m}(\q,\r)$ from $i$ to $j$, i.e. trips with destination $i$ which are redirected to $j$. For a state-independent policy, since each unit arriving at $i$ is redirected to $j$ with probability $r_{ij}$, it holds that
$$
z_{ij,m}(\q,\r)=r_{ij}\sum_k f_{ki,m}(\q,\r).
$$
Similarly to the correspondence between $q_{ij}$ and $f_{ij,m}$, we observe a correspondence between $r_{ij}$ and $z_{ij,m}$, wherein the former are the controls and induce the latter in the objective via the steady-state dynamics. As a result, the objective for state-independent policies can  be written as 
$$
\obj_m(\q,\r)=\sum_{i,j} f_{ij,m}(\q,\r) I_{ij}(q_{ij}) - c_{ij}z_{ij,m}(\q,\r).$$ 
In order to define the constraints of the elevated flow relaxation, we write (as in Section \ref{sec:pricing}) $\widehat{q}_{ij} = f_{ij,m}(\q,\r)/\phi_{ij}$ and $\widehat{z}_{ij} = z_{ij,m}(\q,\r)$. We can now write the following relaxed flow polytope:
$$(1)\;\;\widehat{q}_{ij}\in[0,1],\qquad
(2)\;\;\sum_k \prn*{\phi_{ki}\widehat{q}_{ki} + \widehat{z}_{ki}}  = \sum_{j}\prn*{ \phi_{ij}\widehat{q}_{ij}+\widehat{z}_{ij}},\qquad
(3)\;\;\sum_k \widehat{z}_{ik} \leq \sum_j \phi_{ji}\widehat{q}_{ji}\;\forall\;i.$$

The first constraint is demand bounding, exactly as explained in Section \ref{sec:pricing}. The second is a variant of the supply circulation in Section \ref{sec:pricing} to incorporate redirected vehicles (the proof is the same as in Proposition \ref{prop:supply_circ}). Finally, the third reflects that only units that are dropping off customers at a node, but not empty ones, can be redirected. Note that these constraints hold for any state-dependent policy as any policy induces such rates $f_{ij,m}$ and $z_{ij,m}$.

Using the reward curves $R_{ij}(\cdot)$ defined in Section \ref{sec:pricing}, we obtain an upper bound $\widehat{\obj}(\q,\r)$ on our desired objective via the elevated flow relaxation with the above constraints; through this, we obtain prices and redirection probabilities in Algorithm \ref{alg:demand_circulation_reduction_routing}. Note that the redirection probabilities $r_{ij}$ returned by the algorithm correspond to the rate of redirected units $z_{ij}$ returned by the relaxation over the total incoming rate of (non-empty) units at node $i$.
\begin{algorithm}[!h]
\caption{The Elevated Flow Relaxation Program with Supply Redirection}
\label{alg:demand_circulation_reduction_routing}
\begin{algorithmic}[1]
\REQUIRE arrival rates $\phi_{ij}$, value distributions $F_{ij}$,  reward curves~$R_{ij}$, rerouting costs $c_{ij}$. \\ 
\STATE Find $\{\widetilde{q}_{ij},\widetilde{z}_{ij}\}$ that solve the following relaxation:
\[
\begin{array}{rcll}
\max_{\q,\z} &&  \sum_{i,j} \prn*{\phi_{ij}R_{ij}({q}_{ij})-c_{ij}{z}_{ij}} &\\
\sum_k \prn*{\phi_{ki}{q}_{ki} + {z}_{ki}}  & = & \sum_{j}\prn*{ \phi_{ij}{q}_{ij} + {z}_{ij}} & \forall i\\
\sum_k {z}_{ik} & \leq & \sum_j \phi_{ji}{q}_{ji} &\forall i\\
{q}_{ij} & \in & [0,1] & \forall i,j
\end{array}
\]
\STATE Output \emph{state-independent} prices $\widetilde{p}_{ij}= F_{ij}^{-1}(1-\widetilde{q}_{ij})$ and the corresponding quantiles $\widetilde{q}_{ij}$ as well as redirection probabilities $\widetilde{r}_{ij} = \widetilde{z}_{ij}/\sum_{k}\phi_{ki}$.
\end{algorithmic}
\end{algorithm}

For the approximation guarantee, we need to first provide an equivalent to Lemma~\ref{lem:equal_availabilities}. Its role in our three-step process can be replaced by the following lemma.

\begin{lemma}\label{lem:reb_availability}\vspace{0.1in}
With $\widetilde{\q}$ and $\widetilde{\r}$ as returned by Algorithm 
\ref{alg:demand_circulation_reduction_routing}, all availabilities are equal to one in the infinite-unit system.
\end{lemma}
\begin{proof}
Our approach is based on the following intuition: We consider a closed queueing network with the same transition probabilities between states as the one resulting from $\widetilde{\q}$ and $\widetilde{\r}$. In our hypothetical network, quantiles are all one, there is no redirection, and the balanced demand property holds. Since the hypothetical network does not have redirection and satisfies the balanced demand property, Lemma \ref{lem:equal_availabilities} implies that there the availabilities at all nodes are equal. However, the two networks have the same transition probabilities so they also have the same steady-state distribution. As a result, in the original network all availabilities are also equal and thus, equal to 1 in the infinite-unit limit.
To prove the lemma we need to (i) define the demand in the hypothetical network such that transitions occur at the same rate as in the network with $\widetilde{\q}$ and $\widetilde{\r}$ and (ii) show that the demand in the hypothetical network fulfills the balanced demand property. We define the demand in the hypothetical network as
$$
\bar{\phi}_{ij} = \phi_{ij}\widetilde{q}_{ij}(1-\sum_k \widetilde{r}_{jk}) + \sum_k \phi_{ik}\widetilde{q}_{ik}\widetilde{r}_{kj}.
$$
Observe that transitions occur at the same rate in this network as in the one with $\widetilde{\q}$ and $\widetilde{\r}$. 
Since quantiles are equal to 1, the balanced demand property requires that $\sum_j \bar{\phi}_{ij}=\sum_k\bar{\phi}_{ki}$. To show this equality, notice first that the demand at node $i$ is
$$
\sum_{j} \bar{\phi}_{ij}
= \sum_{j} \phi_{ij}\widetilde{q}_{ij} - \sum_j \phi_{ij}\widetilde{q}_{ij}\prn*{\sum_{k} \widetilde{r}_{jk}} +\sum_j\sum_{k} \phi_{ik}\widetilde{q}_{ik}\widetilde{r}_{kj}=\sum_j \phi_{ij}\widetilde{q}_{ij}.
$$
On the other hand, due to the definition of $\bar{\phi}_{ij}$ (first equality), the definition of $\widetilde{r}_{ij}$  in Algorithm \ref{alg:demand_circulation_reduction_routing} (third equality), and the supply circulation constraint in Algorithm \ref{alg:demand_circulation_reduction_routing} (last equality), the demand of customers traveling to $i$ is
\begin{align*}   \sum_{k}\bar{\phi}_{ki}&=\sum_{k}\phi_{ki}\widetilde{q}_{ki}- \sum_{j,k} \phi_{ki}\widetilde{q}_{ki}\widetilde{r}_{ij}+\sum_{j,k} \phi_{kj}\widetilde{q}_{kj}\widetilde{r}_{ji}\\
    &=\sum_{k}\phi_{ki}\widetilde{q}_{ki}- \sum_{j}\widetilde{r}_{ij} \big(\sum_k\phi_{ki}\widetilde{q}_{ki}\big)+
   \sum_{j} \widetilde{r}_{ji} \big(\sum_k\phi_{kj}\widetilde{q}_{kj}\big)
    \\  &=\sum_{k}\phi_{ki}\widetilde{q}_{ki}- \sum_{j}\frac{\widetilde{z}_{ij}}{\sum_k \phi_{ki}\widetilde{q}_{ki}} \big(\sum_k\phi_{ki}\widetilde{q}_{ki}\big)+ \sum_{j} \frac{\widetilde{z}_{ji}}{\sum_k \phi_{kj} \widetilde{q}_{kj}} \big(\sum_k\phi_{kj}\widetilde{q}_{kj}\big)\\ 
    &= \sum_{k}\phi_{ki}\widetilde{q}_{ki} + \sum_{j} \prn*{\widetilde{z}_{ji} - \widetilde{z}}_{ij} =\sum_j \phi_{ij}\widetilde{q}_{ij}. 
\end{align*}
Thus, the balanced demand property is fulfilled by the demand in the hypothetical network and the proof is complete.
\end{proof}
We can now apply our framework to provide the approximation guarantee for this setting.
\begin{theorem}[approximation guarantee for supply redirection]\label{thm:rerouting}\vspace{0.1in}
Consider any objective function $\obj_m$ for the $m$-unit system with concave reward curves $R_{ij}(\cdot)$. Let $\widetilde{\q}$ and $\widetilde{\r}$ be the pricing and redirection policies returned by Algorithm~\ref{alg:demand_circulation_reduction_routing}, and $\opt_m$ be the value of the objective function for the optimal state-dependent pricing and redirection policies in the $m$-unit system. Then
$$
\obj_m(\widetilde{\q},\widetilde{\r}) \geq \frac{m}{m+n-1}\opt_m.
$$
\end{theorem}

\begin{proof}
The proof closely resembles that of Theorem \ref{thm:pricing}. As before, we show the inequality through three intermediate steps: (i) $\widehat{\obj}(\widetilde{\q},\widetilde{\r})\geq \opt_m$, (ii) $\widehat{\obj}(\widetilde{\q},\widetilde{\r})= \obj_\infty(\widetilde{\q},\widetilde{\r})$, and (iii) $\obj_m(\widetilde{\q},\widetilde{\r}) \geq \frac{m}{m+n-1}\obj_\infty(\widetilde{\p},\widetilde{\r})$. The proof of the first inequality is the same as in Lemma \ref{lem:state_dependent}, with the relaxation defined in Algorithm \ref{alg:demand_circulation_reduction_general_pricing} replaced by the relaxation defined in Algorithm \ref{alg:demand_circulation_reduction_routing}. The second step relies on Lemma \ref{lem:reb_availability}, which uses Lemma \ref{lem:equal_availabilities} to prove that in the infinite-unit system all availabilities are 1. Based on this claim, similarly to the proof of Lemma~\ref{lem:infinite}, we observe that the flow of customers on each edge is $\phi_{ij}\widetilde{q}_{ij}$. The definition of the redirection probabilities in Algorithm~\ref{alg:demand_circulation_reduction_routing} then immediately implies that $z_{ij,\infty}(\widetilde{\p},\widetilde{\r})=\widetilde{z}_{ij}$, i.e. the flow of redirected units from $i$ to $j$ is also equal to the value of $\widetilde{z}_{ij}$ in the solution of the relaxation. Finally, for the third step, we apply the same proof as in Lemma \ref{lem:finite} with just one small modification in the auxiliary Lemma~\ref{lem:max_availability}. In that lemma $B_i(\q)$ denotes the contribution of node $i$ per unit of time in which a unit is present at $i$. Previously, this just captured rides leaving node $i$. Now, we also charge~$B_i(\q)$ for the cost incurred through the possible redirection of vehicles traveling from $i$ to $j$ that are redirected to~$k$. Setting~$B_i(\q)$ equal to $\sum_j \phi_{ij}q_{ij} \prn*{I_{ij}\prn*{q_{ij}} - \sum_k \widetilde{r}_{jk}c_{jk}}$ formalizes this charging argument -- the remainder of the proof of Lemma \ref{lem:finite} is equivalent to what is presented in Section \ref{sec:pricing}. This concludes the proof of the theorem.
\end{proof}

In Appendix \ref{app:no_prices} we show that the results obtained in this section continue to hold in settings, in which repositioning is allowed, but pricing is not. This is exactly the setting studied in the concurrent work of \cite{braverman2016empty}. As the next example shows, in such scenarios, the optimal solution might not have the balanced demand property. Nevertheless, our techniques yield $m/(m+n-1)$ approximations. 

\textbf{Example. }
Consider two stations $A$ and $B$ with $\phi_{AB}=2,\phi_{BA}=1$; suppose we want to maximize $\sum_{ij}\f_{ij,m}(\r)-c_{ij}z_{ij}$ where $c_{AB}=c_{BA}=3$. Clearly, the optimal solution (for any $m$) is to not rebalance at all: each time a vehicle is rebalanced, it adds at most one ride to the objective while incurring a cost of 3. Yet, without rebalancing, we have $A_{A,\infty}=\frac{1}{2}, A_{B,\infty}=1$, i.e., the balanced demand property does not hold.

\paragraph{Demand Redirection.}
For the control defined in this section, we assume that there exists a graph $G=(V,E)$ on the set of nodes with edges between nodes that are so close that a customer arriving at one node can be served through a vehicle at an adjacent node. We consider a state-dependent policy $\r(\X)$ which, for each customer arriving at node $i$ willing to pay the price quoted, decides from which node in $\crl*{i}\cup\crl*{j:(i,j)\in E}$, the customer is served. With~$m$ units, fixed quantiles $\q(\X)$, and a fixed matching policy $\r(\X)$, we observe a rate $f_{ij,m}(\q,\r)$ of customers arriving at $i$ that travel to $j$, potentially after being matched to a unit at~$k$, and a rate $z_{ik,m}(\q,\r)$ of customers that arrived to travel from $i$ but have been matched to a unit at~$k$. We can write the objective in this setting as $\obj_m(\q,\r)=\sum_{i,j}f_{ij,m}(\q,\r)I_{ij}(\q)$. We again write $\widehat{q}_{ij} = f_{ij,m}(\q,\r)/\phi_{ij}$ and $\widehat{z}_{ij} = z_{ij,m}(\q,\r)$ to define the following relaxed flow polytope:
$$(1)\;\;\widehat{q}_{ij}\in[0,1],\qquad
(2)\;\;\sum_k \widehat{q}_{ki}\phi_{ki} + \widehat{z}_{ik} =  \sum_j \widehat{q}_{ij}\phi_{ij} + \widehat{z}_{ji}\;\forall\;i,\qquad
(3)\;\;\sum_k \widehat{z}_{ki} \leq \sum_j \widehat{q}_{ji}\phi_{ji}\;\forall\;i.$$
The first constraint is again demand bounding. The second is a variant of the
supply circulation to incorporate matchings to nearby nodes. In particular, the left hand side accounts for the total number of units arriving at node $i$, which equals all users arriving at $i$ together with all units arriving due to matching from nearby nodes $k$. Similarly, the right hand side accounts for the total number of units leaving $i$, which are the users leaving from $i$ together with users from other nodes $j$ that use supply at $i$. Finally, the third ensures that customers are matched only to units arriving at nearby nodes. Maximizing the elevated objectives over these constraints again yields a $m/(m+n-1)$ approximation algorithm. We omit the proof, because of its similarity to the one of Theorem \ref{thm:rerouting}.

\begin{algorithm}[!h]
\caption{The Elevated Flow Relaxation Program With Matching}
\label{alg:demand_circulation_reduction_matching}
\begin{algorithmic}[1]
\REQUIRE arrival rates $\phi_{ij}$, value distributions $F_{ij}$,  reward-curves $R_{ij}$, edges $E$. \\ 
\STATE Find $\{\widetilde{q}_{ij},\widetilde{z}_{ij}\}$ that solve the following relaxation:
\[
\begin{array}{rcll}
\max_{\q,\z} &&  \sum_{i,j} \phi_{ij}R_{ij}({q}_{ij})&\\
\sum_k \prn*{\phi_{ki}{q}_{ki} +{z}_{ik}}  & = & \sum_{j}\prn*{ \phi_{ij}{q}_{ij} +{z}_{ji}} & \forall i \in [n] \\
\sum_k {z}_{ki} & \leq & \sum_j \phi_{ji}{q}_{ji} &\forall i \in[n]\\
{q}_{ij} & \in & [0,1] & \forall i,j\in[n]\\
 {z}_{ij} & = & 0 & \forall (i,j)\not\in E
\end{array}
\]
\STATE Output \emph{state-independent} prices $\widetilde{p}_{ij}= F_{ij}^{-1}(1-\widetilde{q}_{ij})$ and the corresponding quantiles $\widetilde{q}_{ij}$
as well as matching probabilities $\widetilde{r}_{ij} = \widetilde{z}_{ij}/\sum_{k}\phi_{ik}\widetilde{q}_{ik}$
\end{algorithmic}
\end{algorithm}

\begin{theorem}[approximation guarantee for demand redirection]\vspace{0.1in}
\label{thm:matching}
Consider any objective function $\obj_m$ for the $m$-unit system with concave reward curves $R_{ij}(\cdot)$. Let $\widetilde{\q}$ and $\widetilde{\r}$ be the pricing and matching policies returned by Algorithm~\ref{alg:demand_circulation_reduction_matching}, and $\opt_m$ be the objective of the optimal state-dependent policies in the $m$-unit system. Then
$$
\obj_m(\widetilde{\q},\widetilde{\r}) \geq \frac{m}{m+n-1}\opt_m.
$$
\end{theorem}

\subsection{Constrained Point Pricing}
\label{ssec:point_pricing}
In this section, we focus on a special case of the vanilla pricing problem wherein the platform is only allowed to set point prices, i.e. prices based on the origin node. The motivation to study such a model is given by pricing policies like Uber's early surge pricing that multiplied a base fare for a trip by an destination-independent \emph{surge factor}. We assume that customers' sensitivity to surge pricing is independent of their destination meaning that the value distributions of all customers arriving at a node are identical. We show that under this assumption, the elevated flow relaxation for the point pricing problem can be solved through just one eigenvalue computation for throughput/welfare; for the problem of maximizing revenue, solving the elevated flow relaxation requires solving an additional one-dimensional concave maximization problem. Given that Uber's early surge pricing policy provided a surge factor that was required to be a multiple of $\frac{1}{10}$, we then consider the additional constraint that prices are only allowed to come from a discrete price set. Interestingly, as the example below shows, with this restriction there may not even be a \emph{feasible} solution that satisfies the balanced demand property. Nevertheless, we show that our results provide finite guarantees, the quality of which depends on how well the discrete price set represents each part of the distribution. We provide the algorithms, results, and proofs for this setting in Appendix~\ref{app:point_pricing}.

\textbf{Example. }
Consider two stations $A$ and $B$ with $\phi_{AB}=\pi, \phi_{BA}=1$; suppose value distributions in both directions are $U[0,1]$, and we are trying to maximize revenue with quantiles (prices) restricted, in either direction, to the set  $\{k/100: k\in\{0,\ldots,99\}\}$. Then there exists no feasible quantiles $q_{AB},q_{BA}$ such that $\phi_{AB}q_{AB}=\phi_{BA}q_{BA}$, i.e., pricing cannot enforce balanced demand in the network.

\subsection{Pricing in Multi-Objective Settings}\label{ssec:multi_objective}
In this part, we apply our framework to multi-objective settings where we wish to maximize one objective function subject to a lower bound on another one, as is the case in Ramsey pricing \cite{ramsey1927contribution}; the corresponding elevated flow relaxation program is decribed in Algorithm~\ref{alg:multi_objective}. Formally, the problem is as follows: we are given a $m$-unit system, a requirement $c\geq 0$, and objectives $\Phi_m(\cdot)$ and $\Psi_m(\cdot)$; the goal is to maximize $\Phi_m(\q)$ subject to $\Psi_m(\q)\geq c$. We again assume that both objectives can be decomposed into per-ride rewards with associated concave reward curves $\crl*{R_{ij}^{\Psi}}$ and $\crl*{R_{ij}^{\Phi}}$.

Similarly to Equation \eqref{eq:elevated}, we first elevate both objectives to obtain $\widehat{\Phi}(\widehat{\q})=\sum_{i,j}\phi_{ij}R^{\Phi}_{ij}(\widehat{q}_{ij})$ and $\widehat{\Psi}(\widehat{\q})=\sum_{i,j}\phi_{ij}R^{\Psi}_{ij}(\widehat{q}_{ij})$. Since per-ride rewards are non-increasing on the quantiles, this can only increase the values of the objectives.
We then impose the supply circulation and demand bounding constraints to create the flow polytope constraints. This mathematical program  (Algorithm \ref{alg:multi_objective}) is the elevated flow relaxation program for our multi-objective setting; we argue below that this is indeed a relaxation. It can be efficiently optimized since the objective is concave while the polytope is convex: the convex combination of any two feasible quantiles is feasible since $\widehat{\Psi}(\cdot)$ is concave.
\begin{algorithm}[!h]
\caption{The Elevated Flow Relaxation Program with Multi-objective Pricing}
\label{alg:multi_objective}
\begin{algorithmic}[1]
\REQUIRE arrival rates $\phi_{ij}$, value distributions $F_{ij}$,  reward curves~$R^{\Phi}_{ij}$ and $R^{\Psi}_{ij}$, requirement ~$c$.
\STATE Find $\crl*{\widetilde{q}}_{ij}$ that solve the following relaxation:
$$
\begin{array}{rcll}
\max_\q &&  \widehat{\Phi}(\q) &\\
\sum_{k} \phi_{ki}q_{ki}  & = & \sum_{j} \phi_{ij}q_{ij} & \forall\;i\\
q_{ij} & \in & [0,1] & \forall\;i,j.\\
\widehat{\Psi}(\q) & \geq & c &
\end{array}
$$
\STATE Output \emph{state-independent} prices $\widetilde{p}_{ij}= F_{ij}^{-1}(1-\widetilde{q}_{ij})$ and the corresponding quantiles $\widetilde{q}_{ij}$.
\end{algorithmic}
\end{algorithm}
\begin{theorem}[approximation guarantee for multi-objective pricing]\label{thm:multi_obj_infinity}\vspace{0.1in}
Let $\Phi_m$ and $\Psi_m$ be objectives for the $m$-unit system with concave reward curves. Then the solution $\widetilde{\q}$ returned by Algorithm~\ref{alg:multi_objective} is a $(\gamma, \gamma)$ bicriterion approximation for the multi-objective pricing problem where $\gamma=\nicefrac{m}{(m+n-1)}$, i.e. $\Phi_m(\widetilde{\q})\geq \gamma\opt_m
$
and $\Psi_m(\widetilde{\q})\geq \gamma \cdot c$.
\end{theorem}
\begin{proof}
Let $\widehat{\q}$ denote the optimal solution of an auxiliary program where we only elevate the objective $\Phi$ (but not $\Psi$), i.e. we maximize $\widehat{\Phi}(\cdot)$ subject to $\Psi_m(\cdot)\geq c$ as well as the balanced demand property and demand bounding constraints. Moreover, let $\q^{\star}$ denote the optimal state-dependent solution of the original (non-elevated) problem. Then, for the first guarantee, we have:
\begin{align*}
    \Phi_m(\widetilde{\q})\geq \gamma \widehat{\Phi}(\widetilde{\q})\geq \gamma \widehat{\Phi}(\widehat{\q})
    \geq \gamma\Phi(\q^{\star})= \gamma \opt_m
\end{align*}
The first inequality is a simple application of Lemmas \ref{lem:infinite} and \ref{lem:finite}, since $\widetilde{\q}$ satisfies balanced demand and we can thus apply the lemmas to the objective $\Phi_m(\cdot)$. The second inequality holds since any solution of the auxiliary program is a feasible solution of the elevated flow relaxation. In particular, since the elevated objective $\widehat{\Phi}(\cdot)$ is pointwise no less than the original objective $\Phi_m(\cdot)$, the corresponding constraint in the auxiliary program is tighter. The last inequality holds by the same argument as in Lemma~\ref{lem:state_dependent}.

Regarding the second guarantee, we have:
\begin{align*}
    \Psi_m(\widetilde{\q})\geq \gamma \widehat{\Psi}(\widetilde{\q})\geq \gamma c
\end{align*}
The first inequality is again an application of
Lemmas~\ref{lem:infinite} and \ref{lem:finite}, while the second holds since $\widetilde{\q}$ is a feasible solution of the elevated flow relaxation and therefore satisfies its last constraint.
\end{proof}

We note that the same approach yields multicriterion approximation algorithms for settings in which more than one constraint of the form $\Psi_m(\cdot)\geq c$ is given. This approach also extends when constraints are given that limit the total amount of repositioning, e.g., due to city governments limiting congestion. Then, with the notation of Section \ref{ssec:rerouting}, we may add a constraint $\sum_{i,j}c_{ij}z_{ij}\leq c'$ to the flow relaxation; by the same reasoning as before, we maintain an upper bound, this is asymptotically achieved, and if the constraint is fulfilled in the infinite-unit system, it is also fulfilled in the finite-unit system (while the objective remains within a factor of $1+(n-1)/(m+n-1)$).

\subsection{Extending Results to Travel Times}\label{ssec:reb_delays}
The results in this section extend to the rate-limiting relaxations in Section \ref{ssec:delay} to provide guarantees when travel-times are included into the models. When pricing is permitted, this is immediate. For example, to extend Algorithm \ref{alg:demand_circulation_reduction_routing} we can again solve the same relaxation with an added rate-limiting constraint $\sum_{i,j} (\phi_{ij}\hat{q}_{ij} +\hat{z}_{ij})\tau_{ij} \leq m$ but then return $\widetilde{q}_{ij} =  q_{ij}\cdot\prn*{1 - \varepsilon_m}$. Since any policy must fulfill the rate-limiting constraint the resulting mathematical program still provides an upper bound on any optimal objective. The rest of the argument is then exactly equivalent to that of Lemma \ref{lem:Am_w_delay} and Theorem~\ref{thm:pricing_w_delays}. For the case that pricing is not permitted, adapting results to travel-times requires a little more work, as returning $\widetilde{q}_{ij}$ is not feasible; we discuss how to obtain finite guarantees in that case,  including ones that generalize \cite{braverman2016empty}, in the second half of Appendix \ref{app:no_prices}.

\section{Conclusions}\label{sec:conclusions}

We have studied pricing and optimization in shared vehicle systems for various objectives. Our work parallels existing work through our use of a closed-queueing network model to capture network externalities. It distinguishes itself, however, through the unified approach that handles a wide range of objectives, controls and system constraints, while providing rigorous finite-system guarantees for all, including when there are travel-times, a setting for which there had not been any finite guarantees before. Furthermore, our results hold equally true when networks are not balanced, which requires a bound, relative to the optimum, that does not only depend on factoring out the availability term. Given the differences in regulations, product offerings, and operational capabilities, across the shared mobility space, this is an important generalization. Thus,
our approach unifies and extends several results from the literature. 

Beyond our algorithmic results, our results also indicate a high-level design feature for real-time decision-making in shared vehicle systems: enforcing conservation laws (flow conservation, Little's law) in expectation yields near-optimal policies in regimes of interest. Although our model does not capture all features of on-demand transportation systems (e.g., endogenous driver decisions and pooled rides in ride-sharing, limited docking places in bike-sharing, etc.) we believe that this core principle can prove useful in these more general settings. Moreover, by establishing finite guarantees for a wide class of settings, we provide formal justification for the widespread use of fluid models for the analysis of shared vehicle systems.

From a technical viewpoint, our main contribution, the elevated flow relaxation, has the potential to apply to other settings. At a high-level, our approach can be viewed as providing a way to couple a fluid model and an appropriate product-form distribution for a given stochastic system. Given the widespread use of fluid limits in the queuing theory literature, as well as a growing knowledge of product-form characterizations, our framework may prove useful for finite instances in these settings as well. The key to such applications involves understanding the associated partition function (in our case: $G_m$), which is known to be challenging for other models.

More specific to our own model, it would be interesting to study how our framework can be extended to constrained settings beyond multi-objective and discrete prices. For instance, in recent events, Uber was exposed to bad publicity when turning off surge pricing for trips originating at JFK airport. While the details of these events are not mathematical in nature, it demonstrates the significance of studying settings where prices (in some locations) are bounded above; from a technical perspective, this is a challenging problem because it complicates the derivation of a flow relaxation that appropriately couples with the stochastic system. Additionally, our pricing policies do not impose triangle inequality, potentially creating incentives for customers to reach their destination via an extra stop. Addressing such strategic considerations opens up an intriguing avenue for future research.

Finally, although our work suggests that state-independent prices have strong performance, this is under the steady-state assumption with complete knowledge of the system parameters. Relaxing either of these assumptions is a compelling extension of our work.

\section*{Acknowledgments.}
The authors would like to thank David Shmoys and \'{E}va Tardos for their support and their fruitful feedback throughout this work. The authors also thank Anton Braverman and Jim Dai for many useful conversations. Finally, our results and their presentation have been significantly improved by the excellent feedback from  anonymous reviewers. This work was supported, in part, by NSF grants CCF-1526067, CMMI-1537394, CCF-1563714, CCF-1522054, CCF-1740822, CNS-1955997, DMS-1839346, and ECCS-1847393, ARL grant W911NF-17-1-0094, ONR grant N00014-08-1-0031,  a Google PhD Fellowship, and a Google Faculty Award.

\bibliographystyle{alpha}
\bibliography{bib1}

\appendix

\section{Irreducibility of the Priced System}\label{sec:irreducibility}

We justify here our assumption from Section \ref{sec:model} that the infinite-unit solutions we obtain induce a connected graph; to do so, we first need to assume that the graph created by edges $(i,j)$ on which $\phi_{ij}>0$ is strongly connected, that is, the directed graph with edge-set $\crl*{(i,j):\phi_{ij}>0}$ contains a path from any node to any other. We then prove that given any solution to the infinite-unit pricing problem, there exists a solution with arbitrarily close objective that also induces a strongly connected graph. For simplicity, we assume that throughput is the objective, yet the extension to other objectives is immediate. Throughout this section we work with the flow $f_{ij,\infty}(\q)$ induced by  demands in the infinite-unit system, but suppress all dependencies on $\infty$ in the notation.

\begin{theorem}[Irreducible Markov Chain]\label{irreducibleObjective}\vspace{0.1in}
Let $\epsilon>0$. Suppose state-independent quantiles $\q$ induce a steady-state rate of units $f_{ij,\infty}(\q)$ on $k$ components; then there exist quantiles $\q'$ that induce $f_{ij,\infty}(\q')$ such that the graph with edge-set $\{(i,j):f_{ij,\infty}(\q')>0\}$ is strongly connected and the throughput with $\q'$ is at least $(1-\epsilon)$ times that of $\q$ in the infinite-unit system.
\end{theorem}
\begin{proof}
Notice first that we may assume without loss of generality that $\q$ fulfills the balanced demand property; indeed, whether an arc has non-zero demand on it is independent of the number of units and by Lemma \ref{lem:state_dependent} there is a solution in the relaxation (with balanced demand) that has an elevated throughput that is no less than $\sum_{i,j}f_{ij}(\q)$. To prove the theorem we repeatedly increase demand on some edges $(i,j)$ with $\phi_{ij}>0$ and $f_{ij}(\q)=0$, but also decrease demand on some edges $(\bar{i},\bar{j})$ with $f_{\bar{i}\bar{j}}(\q)>0$. Equivalently, we increase quantiles $q_{ij}$ and decrease quantiles $q_{\bar{i}\bar{j}}$. To ensure that edges of the second kind do not have their flow reduced by too much relative to $f_{\bar{i}\bar{j}}(\q)$, we set 
$$\delta=\frac{\epsilon}{k}
\times
\min\crl*{
\min_{i,j}
\crl*{f_{ij}(\q):f_{ij}(\q)>0},
\min_{i,j}
\crl*{\phi_{ij}-f_{ij}(\q):\phi_{ij}-f_{ij}(\q)>0}}.$$ 
Whenever we change the demand on an edge, this is done by an additive $\delta$ amount. Reducing flow at most $k$ times to obtain $f_{ij}(\q')$ we guarantee that $\phi_{ij}\geq f_{ij}(\q')\geq (1-\epsilon)f_{ij}(\q)$ holds which implies that the total throughput cannot change by more than a factor $(1-\epsilon)$.

As we assume that our underlying graph with edge-set $\{(i,j):\phi_{ij}>0\}$ is strongly connected, it must be the case that the graph with edge set $\{(i,j):\phi_{ij}q_{ij}>0\}$ contains a minimal sequence of components $C_1,C_2,\ldots,C_d=C_1$, $d>2$, and nodes $u_\ell,v_\ell\in C_\ell$ such that $\phi_{u_{\ell}v_{\ell+1}}>0$, but $f_{u_\ell v_{\ell+1}}=0$. In particular, it being minimal implies that no component other than the first appears repeatedly.
Since each $u_\ell,v_\ell$ are in the same strongly connected component of the graph with edge-set $\{(i,j):f_{ij}(\q)>0\}$, we know that for each $\ell$ there exists a simple path from $u_\ell$ to $v_\ell$ with positive demand on it. We change the quantiles as follows: for all pairs $(u_\ell, v_{\ell+1})$ we increase the quantiles $q_{u_\ell, v_{\ell+1}}$ so that the steady-state rate of units increases by $\delta$ and for each edge along the path from $u_\ell$ to $v_\ell$ we decrease the quantiles so that the steady-state rate of units decreases by $\delta$. At all other edges the quantiles remain unchanged.

We first argue that this again gives rise to balanced demand: Each node along a path within a component has its in-flow and out-flow (of demand) reduced by $\delta$, whereas at the nodes $u_i,v_i$ both the sum of in-flows and the sum of out-flows have remained the same. At all other nodes, nothing is altered. Thus, flow conservation continues to hold. By choice of $\delta$ none of the edge-capacities are violated. Thus, the resulting demand $\{\phi_{ij}q_{ij}\}$ is in the flow relaxation with at most $k-1$ distinct components. Applying this procedure repeatedly, we obtain a single strongly connected component such that the throughput with $\q'$ in the infinite-unit limit (by Lemma \ref{lem:infinite}) is within $(1-\epsilon)$ of the throughput of $f_{ij,\infty}(\q)$. \end{proof}

\section{Concave Reward Curves}\label{sec:concave_reward_curves}
In this section, we investigate conditions under which throughput, social welfare and revenue satisfy the conditions of theorem \ref{thm:pricing}. In particular, we first show that the respective reward curves $R(q)=qI(q)$ are concave. We then prove that the concave reward curves assumption implies the non-increasing (quantiles) per-ride rewards assumption.

\begin{lemma}\label{lem:concave_objective_function}
\vspace{0.1in} Revenue (i) satisfies the assumptions of Theorem \ref{thm:pricing} under regular value distributions, Throughput (ii) and Social Welfare (iii) satisfy the assumptions under any value distribution.
\end{lemma}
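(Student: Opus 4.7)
The plan is to handle the three objectives one at a time, in each case verifying the concavity of $R_{ij}(q)=qI_{ij}(q)$ directly. Before getting into the cases, I would dispose of the subsidiary ``implies'' claim from Section 2.2: if $R$ is concave on $[0,1]$ with $R(0)=0$, then $I(q)=R(q)/q$ is automatically non-increasing. This is a one-line consequence of concavity: for $0<q_1<q_2$, writing $q_1 = (q_1/q_2)\cdot q_2 + (1-q_1/q_2)\cdot 0$ yields $R(q_1)\geq (q_1/q_2)R(q_2)$, hence $I(q_1)\geq I(q_2)$. So for each of the three objectives it suffices to check concavity of $R_{ij}$ and verify $R_{ij}(0)=0$.

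For throughput, $I^T_{ij}(q)=1$, so $R^T_{ij}(q)=q$ is linear and $R^T_{ij}(0)=0$; concavity is trivial. For revenue, $R^R_{ij}(q)=qF_{ij}^{-1}(1-q)$ is exactly the revenue curve of the distribution $F_{ij}$, and regularity of $F_{ij}$ is precisely the statement that this curve is concave in the quantile (see the reference to Hartline already cited in Section 2.2); $R^R_{ij}(0)=0$ follows from $F_{ij}^{-1}(1)<\infty$ only if the distribution has bounded support, but even if not, $q\,F_{ij}^{-1}(1-q)\to 0$ as $q\to 0$ under the standard integrability condition implicit in ``regular distribution''.

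The welfare case requires a brief computation, and this is where I expect the only mild subtlety. I would rewrite
\[
R^W_{ij}(q)=q\cdot \EE_{V\sim F_{ij}}\!\brk*{V\mid V\geq F_{ij}^{-1}(1-q)}=\int_{F_{ij}^{-1}(1-q)}^{\infty} v\,f_{ij}(v)\,dv,
\]
since $\PP(V\geq F_{ij}^{-1}(1-q))=q$ by definition of the quantile. Then by the Leibniz rule together with $\tfrac{d}{dq}F_{ij}^{-1}(1-q)=-1/f_{ij}(F_{ij}^{-1}(1-q))$, I would obtain
\[
\frac{dR^W_{ij}}{dq}(q)=F_{ij}^{-1}(1-q),
\]
which is non-increasing in $q$ because $F_{ij}^{-1}$ is non-decreasing. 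Hence $R^W_{ij}$ is concave; moreover $R^W_{ij}(0)=0$ provided $\EE_{V\sim F_{ij}}[V]<\infty$, which is a standing assumption needed even to define welfare.

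The main obstacle, such as it is, is formal rather than conceptual: making sure the differentiation for welfare is valid at $q=1$ (where the lower limit $F_{ij}^{-1}(0)$ may equal $-\infty$, but this causes no issue because $R^W_{ij}(1)=\EE[V]$ is finite by assumption) and at $q=0$ (where $F_{ij}^{-1}(1-q)\to \sup\,\mathrm{supp}(F_{ij})$). Both boundary issues can be dispatched by taking limits from the interior and invoking the density assumption on $F_{ij}$ made in Section 2.1. Once these three cases are established, the non-increasing property of each $I_{ij}$ follows from the general principle above, completing the proof.
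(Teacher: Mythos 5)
Your proposal is correct and follows essentially the same route as the paper: throughput is linear, revenue concavity is exactly Hartline's characterization of regularity, and for welfare both arguments reduce to showing $\frac{d}{dq}R^W_{ij}(q)=F_{ij}^{-1}(1-q)$ (the paper via the hazard-rate representation and an explicit second derivative, you via the Leibniz rule plus monotonicity of the quantile map — a cosmetic difference). Your preliminary observation that concavity of $R$ with $R(0)=0$ forces $I(q)=R(q)/q$ to be non-increasing is likewise the same chord inequality the paper proves (by contradiction) in its companion lemma.
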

\begin{proof}
We drop the subscripts throughout this proof to simplify notation. We begin by considering (i) revenue, for which  the result holds due to the fact that the reward curve is concave if and only if the distribution is regular (see Proposition 3.10 in \cite{hartline2013}). For (ii) throughput, $R(q)=q \cdot I(q)=q$ is a linear function of $q$ for any value distribution and thus concave.

Lastly, for (iii) social welfare, we use the so-called hazard rate $h(y)=\frac{f(y)}{1-F(y)}$ of a distribution $F$ with density $f$. Given $F$, denote by $p(q)$ and $q(p)$ a price as a function of its corresponding quantile and vice-versa. Then, by the definition of hazard rate:
\begin{equation}\label{eq:hazard1}
    q(p)=\exp\prn*{-\int_{0}^{p(q)} h(y)dy}
\end{equation}
Taking logarithms and differentiating, we obtain:
\begin{equation}\label{eq:hazard2}
 -\frac{1}{q(p)}=h\prn*{p(q)}\frac{dp(q)}{dq}
\end{equation}
Hence, as $R(q(p))=q(p)\cdot I\prn*{q(p)}$ and $f(p)=(1-F(p))h(p)=q(p)h(p)$ we have
\begin{eqnarray*}
R(q)=\int_{p(q)}^\infty v f(v)dv=\int_{p(q)}^\infty v h(v)\exp\prn*{-\int_{0}^v h(y) dy} dv
\end{eqnarray*}
The first derivative $\frac{dR(q)}{dq}$ of $R(q)$ is equal to
\begin{equation*}
    -p(q)h(p(q))\exp\prn*{-\int_{y=0}^{p(q)} h(y) dy} \frac{dp(q)}{dq}=\frac{p(q)\exp\prn*{-\int_{y=0}^{p(q)} h(y) dy}}{q(p)}=p(q),
\end{equation*}
where the first equality comes from Equation \eqref{eq:hazard2}, the second from \eqref{eq:hazard1}.

The second derivative is then given by
\begin{equation*}
\frac{d^2R(q)}{dq^2}=\frac{dp(q)}{dq}
=-\frac{1}{qh\prn*{p(q)}}
=-\frac{1-F(p(q))}{f(p(q))q(p)}<0,
\end{equation*}
which concludes the proof of the Lemma.
\end{proof}

\begin{lemma}
\quad If a function $I(\cdot)$ has the property that $qI(q)$ is concave, then $I(\cdot)$ is non-increasing. In particular, if some objective satisfies the concave reward curves assumption, it also satisfies the non-increasing (in quantiles) per-ride rewards assumption.
\end{lemma}
\begin{proof}
Suppose the statement was not true, then there 
must exist $q_1,q_2$ with $0<q_1<q_2$ such that $I(q_1)<I(q_2)$. Let $A=\frac{q_1}{q_2}$. Then
\begin{align*}
   q_1 I(q_2) &=A\cdot q_2 I(q_2)=A\cdot q_2 I(q_2)+\prn*{1-A}\cdot 0\cdot I(0)\\
   &\leq 
   \prn*{A\cdot q_2+(1-A)\cdot 0}I\prn*{A\cdot q_2 +(1-A)\cdot 0}=q_1 I(q_1),
\end{align*}
where the inequality follows from Jensen's inequality since the function $q I(q)$ is a concave function. As $q_1>0$, it follows that $I(q_2)\leq I(q_1)$ and we therefore arrive at a contradiction.
\end{proof}

\section{Appendix on Point Pricing}\label{app:point_pricing}
In this section, we focus on a special case of the pricing problem wherein the platform is only allowed to set point prices, i.e. prices based on the origin node, and the value distributions of all customers arriving at a node are identical (i.e., $q_{ij}=q_i$, and $F_{ij}(\cdot)=F_i(\cdot)$ for all $i,j$). Thinking of the prices as surge multipliers, this reflects how Uber and Lyft used to price until about 2017. It also reflects how companies like Lyft or Lime price free-floating scooters today when riders are offered a price at the origin location, without the company knowing the destination. We provide a simple optimal pricing policy for the infinite-unit system, which involves just one eigenvector computation (for throughput/social welfare) or a concave maximization over a single variable (for revenue).  We then consider the additional constraint that prices are only allowed to come from a discrete price set. Using our infinite-to-finite unit reduction, all our results are then translated back to the finite unit setting.
We emphasize that in the latter restricted settings, there may not be a feasible solution satisfying balanced demand yet nevertheless we obtain guarantees. 

\paragraph{Unrestricted price set.} We begin by providing the elevated flow relaxation program (Algorithm~\ref{alg:point_pricing}) and approximation guarantee (Theorem~\ref{thm:point_pricing}) for unconstrained  point pricing, whose proof follows the same steps as the one of Theorem~\ref{thm:pricing} and we therefore omit it.
\begin{algorithm}[!h]
\caption{The Elevated Flow Relaxation Program with Point Pricing }
\label{alg:point_pricing}
\begin{algorithmic}[1]
\REQUIRE arrival rates $\phi_{ij}$, value distributions $F_{i}$, reward curves~$R_{ij}$.
\STATE Find $\crl*{\widetilde{q}}_{i}$ that solve the following point price relaxation:
$$
\begin{array}{rcll}
\max_\q &&  \sum_{(i,j)} \phi_{ij}R_{ij}({q}_{i}) &\\
\sum_{k} \phi_{ki}{q}_{k}  & = & \sum_{j} \phi_{ij}{q}_{i} & \forall\;i\\
{q}_{i} & \in & [0,1] & \forall\;i.
\end{array}
$$
\STATE Output \emph{state-independent} prices $\widetilde{p}_{i} = F_{i}^{-1}(1-\widetilde{q}_{i})$ with corresponding quantiles $\widetilde{q}_{i}$.
\end{algorithmic}
\end{algorithm}

\begin{theorem}[approximation guarantee for unrestricted point pricing]\quad
\label{thm:point_pricing}
Consider any objective function $\obj_m$ for the $m$-unit system with concave reward curves $R_{ij}(\cdot)$.
Let $\widetilde{\q}$ be the pricing policy returned by Algorithm \ref{alg:point_pricing}, $\opt_m$ be the value of the objective function for the optimal state-dependent point pricing policy in the $m$-unit system. Then
\begin{equation}
\obj_m(\widetilde{\q})\geq \frac{m}{m+n-1}\opt_m
\end{equation}
\end{theorem}

Notice that the optimization problem in Algorithm \ref{alg:point_pricing} has the balanced demand property as a constraint; thus, with the resulting pricing policy, the availability is equal at every node (see Lemma \ref{lem:equal_availabilities}). Recall from Section~\ref{ssec:queuemodel} that the availability  at each node in the infinite-unit system depends on the traffic intensity at that particular node and the maximum traffic intensity among all nodes. Further, the traffic intensity at each node $i$ depends on (i) the $i$th coordinate of the eigenvector $\w(\q)$ of the routing matrix $\{\phi_{ij}q_i/\sum_k\phi_{ik}q_i\}_{i,j\in[n]^2}$, and (ii) the rate of arrivals $\sum_j \phi_{ij}q_{ij}$ at $i$. In particular, $r_i(\q)=w_i(\q)/\sum_j\phi_{ij}q_{i}$. In the setting of point prices however, $\w$ is unaffected by the prices and since the quantiles from Algorithm \ref{alg:point_pricing} fulfill $r_i(\widetilde{\q})=r_j(\widetilde{\q})\forall i,j$ we have that $w_i \sum_k\phi_{jk}\widetilde{q}_j = w_j\sum_k\phi_{ik}\widetilde{q}_i$ for all $i,j$. Substituting in the optimization problem for every $j$
\begin{center}
$
\widetilde{q}_j=w_j\sum_k\phi_{ik}\widetilde{q}_i/w_i \sum_k\phi_{jk},
$
\end{center}
we find that the convex optimization problem can actually be written in just one variable. Further, in the cases of social welfare and throughput, it is always the case that $\max_i\widetilde{q}_i=1$ for an optimal solution in the infinite-unit system. Hence, in these cases only one eigenvector computation is needed.

\paragraph{Discrete price set.}
We now show how the pricing policy from Algorithm \ref{alg:point_pricing} can be modified when there is a discrete set of available prices for each node (see Algorithm \ref{alg:point_pricing_rest}.  Let $\crl*{q_i^1,\dots, q_i^{k}}$ be the set of quantiles,  in decreasing order, corresponding to the available prices for node $i$. We assume that for all stations $i$ there exists an available quantile that is smaller-equal to the quantile that would be solved for in Algorithm \ref{alg:point_pricing}, i.e., there exist $q_i^{\ell}\leq \widetilde{q}_i'$. Then we obtain the quantiles $\widetilde{\q}$ by solving for the unconstrained case as in Algorithm~\ref{alg:point_pricing} to obtain preliminary quantiles $q'_i$ (not necessarily from the feasible set) and then setting each $\widetilde{q}_i$ to be the largest available quantile smaller or equal to $q'_i$ (see Algorithm \ref{alg:point_pricing_rest}). Our resulting guarantee has some extra loss that depends on how well the prices represent each part of the distribution. We now prove the performance guarantee for our policy. 

\begin{theorem}[approximation guarantee for discrete point price set]\vspace{0.1in}
\label{thm:point_discrete}
Consider any objective function $\obj_m$ for the $m$-unit system with concave reward curves $R_{ij}(\cdot)$.
Let $\widetilde{\p
\q}$ be the pricing policy returned by Algorithm \ref{alg:point_pricing_rest}, $\opt_m$ be the value of the objective function for the optimal state-dependent point pricing policy in the $m$-unit system. Suppose that there exists $\alpha$ such that for all $i$ and all $s$, $\alpha\cdot q_i^s\geq q_i^{s+1}$. Then,
$$
\alpha \obj_m(\widetilde{\q})
\geq \frac{m}{m+n-1}\opt_m,
$$
where $\opt_m$ is the objective of the optimal state-dependent policy for discrete prices in the $m$-unit system.
\end{theorem}
\begin{proof}
Let $\q'$ and $\widetilde{\q}$ be the quantiles as defined in Algorithm \ref{alg:point_pricing_rest}. Since $\widehat{\obj}(\q')$ is an upper bound on the unrestricted point pricing problem (see Theorem \ref{thm:point_pricing}), it is also an upper bound on $\opt_m$. Lemma \ref{lem:infinite} implies that $\widehat{\obj}(\q')=\obj_\infty(\q')$, since $\q'$ fulfills the balanced demand property. Further, by Lemma \ref{lem:finite}, $\obj_m(\widetilde{\q})\geq \frac{m}{m+n-1}\obj_\infty(\widetilde{\q})$. Thus, what remains is to bound $\obj_\infty(\widetilde{\q})$ with respect to $\obj_\infty(\widehat{q})$. Since $\widetilde{q}_i\leq\widehat{q}_i$ for all $i$ and the per-ride rewards $I_{ij}(\cdot)$ are assumed to be non-decreasing in the quantiles, we only need to bound the changes in the availabilities of the infinite-unit system for each $i$. Since the $w_i$ are constant under point-pricing, the availabilities are only affected by prices in the denominator, where the change is equal to $\widetilde{q}_i/\widehat{q}_i$. Thus, no traffic intensity changes by more than a factor of $\alpha$ and the result follows.
\end{proof}

We remark that the results here immediately extend to the rate-limiting constraint in Section \ref{ssec:delay}: if at least one $q_i'$ is rounded down, then the convergence must be linear; otherwise, no factor of $\alpha$ is incurred through rounding to feasible prices, but one can nonetheless round the quantiles down to the next-lowest quantiles. Either way, the bound from Lemma \ref{lem:Am_w_delay} applies.

\begin{algorithm}[!h]
\caption{The Elevated Flow Relaxation Program with a Discrete Set of Point Prices }
\label{alg:point_pricing_rest}
\begin{algorithmic}[1]
\REQUIRE arrival rates $\phi_{ij}$, value distributions $F_{i}$, reward curves~$R_{ij}$, feasible quantiles $\crl*{q_i^1,\ldots,q_i^k}$ in decreasing order.
\STATE Find $\crl*{q'_{i}}$ that solve the following point price relaxation:
$$
\begin{array}{rcll}
\max_\q &&  \sum_{(i,j)} \phi_{ij}R_{ij}({q}_{i}) &\\
\sum_{k} \phi_{ki}{q}_{k}  & = & \sum_{j} \phi_{ij}{q}_{i} & \forall\;i\\
{q}_{i} & \in & [0,1] & \forall\;i.
\end{array}
$$
\STATE Output \emph{state-independent} prices $\widetilde{p}_{i} = F_{i}^{-1}(1-
\max_j\crl*{q_i^j:q_i^j\leq q_i'})$ with corresponding quantiles $\widetilde{q}_{i}=\max_j\crl*{q_i^j:q_i^j\leq q_i'})$.
\end{algorithmic}
\end{algorithm}

\section{Settings Without Prices}\label{app:no_prices}

In Section \ref{ssec:rerouting} we discussed how two control levers, redirection of supply and of demand, can be combined with pricing to obtain the same guarantees we obtain for the pure pricing problem. We now show that our technique extends to settings in which only redirection of supply/demand is allowed, but pricing is not. We first present the results in the absence of travel delays and then consider delays as well. Intuitively, the key ingredient for both settings is to define variables, analogous to quantiles in the setting with prices, that capture the fraction of customers who receive service in the infinite-unit limit.

\subsection{Quantiles Without Prices}
Because we assume in this section that pricing is not allowed, $I_{ij}$ cannot be a function of prices (for exogenously set prices, $I_{ij}$ is constant). Thus, in this section, the elevated objective, defined analogously to Section \ref{ssec:elev_obj}, is always equal to the non-elevated objective. 

Similarly to Algorithm \ref{alg:point_pricing}, we introduce variables $q_i$ that capture the fraction of customers the elevated flow relaxation attempts to serve after some thinning. While we cannot implement an admission control through pricing, as is done in Section \ref{ssec:point_pricing}, we can implicitly ensure these to arise through availabilities. The variables $\q$ now correspond to the induced availabilities, i.e.,  one can think of the $q_i$ as $A_{i,\infty}(\r)$, in the infinite-unit limit; notice in particular that these used to always be equal to 1 when prices ensured that the balanced demand property be fulfilled. Intuitively, these thus correspond to effective quantiles in the infinite unit-limit (which would be the same as the theoretical quantile if we were pricing in a way the balanced demand property was fulfilled).

We adopt the same notation as in Section \ref{ssec:rerouting}, with the exception that we do not allow for pricing policies and thus everything is just a function of $\r$. Observe that the resulting flows are within the following polytope (as in Sections \ref{ssec:rerouting} and \ref{ssec:point_pricing}):
$$(1)\;\;\widehat{q}_{i}\in[0,1],\qquad
(2)\;\;\sum_k \prn*{\phi_{ki}\widehat{q}_{k} + \widehat{z}_{ki}}  = \sum_{j}\prn*{ \phi_{ij}\widehat{q}_{i}+\widehat{z}_{ij}},\qquad
(3)\;\;\sum_k \widehat{z}_{ik} \leq \sum_j \phi_{ji}\widehat{q}_{j}\;\forall\;i.$$
As in Section \ref{ssec:rerouting}, these constraints stem from demand bounding, supply circulation, and the limitation that only non-empty arriving vehicles may be rebalanced. The only difference in the polytope is that $\widehat{q}_{ij}$ were replaced by $q_{i}$ since an admission control that is based only on availability cannot differentiate between different destinations. Phrased differently, arriving customers have the same probability of a station having availability regardless of their destination.  Optimizing the elevated objective over the polytope given by these constraints is a linear program and yields an upper bound on the objective by the same arguments as in Lemma \ref{lem:state_dependent}. 
Consider the redirection policy $\widetilde{\r}$ obtained from the solution of the linear program (see Algorithm~\ref{alg:noprices}). In the next Lemma, we bound the infinite unit performance of this policy compared to the value of the elevated flow relaxation. 

\begin{algorithm}[!h]
\caption{The Elevated Flow Relaxation Program for Redirection without Prices}
\label{alg:noprices}
\begin{algorithmic}[1]
\REQUIRE arrival rates $\phi_{ij}$, per-ride rewards $I_{ij}$, rerouting costs $c_{ij}$. \\ 
\STATE Find $\{\widetilde{q}_{i},\widetilde{z}_{ij}\}$ that solve the following relaxation:
\[
\begin{array}{rcll}
\max_{\q,\z} &&  \sum_{i,j} \prn*{\phi_{ij}q_{i}I_{ij}-c_{ij}z_{ij}} &\\
\sum_k \prn*{\phi_{ki}q_{k} + z_{ki}}  & = & \sum_{j}\prn*{ \phi_{ij}q_{i} + z_{ij}} & \forall i\\
\sum_k z_{ik} & \leq & \sum_j \phi_{ji}q_{j} &\forall i\\
q_{i} & \in & [0,1] & \forall i
\end{array}
\]
\STATE Output redirection probabilities $\widetilde{r}_{ij} = \widetilde{z}_{ij}/\sum_{k}\phi_{ki}\widetilde{q}_{k}$
\end{algorithmic}
\end{algorithm}

\begin{lemma}\vspace{0.1in}
\label{lem:noprice_inf}
Denote by $\widetilde{\q}$ the quantiles solved for in the relaxation of Algorithm \ref{alg:noprices} and by $\widetilde{\r}$ the redirection probabilities returned. Then $\obj_\infty(\widetilde{\r})\geq \widehat{\obj}(\widetilde{\q},\widetilde{\r})$.
\end{lemma}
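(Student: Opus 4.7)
The plan is to extend the argument of Lemma~\ref{lem:reb_availability} to the pricing-free setting. I would in fact aim to establish the stronger equality $\obj_\infty(\widetilde{\r}) = \widehat{\obj}(\widehat{\q},\widetilde{\r})$, by showing that the infinite-unit availability profile induced by $\widetilde{\r}$ coincides exactly with the LP's $\widehat{\q}$.

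First, I would cast the infinite-unit chain under $\widetilde{\r}$ as a Gordon--Newell network whose effective transition rate from $\x$ to $\x-e_i+e_l$ (when $X_i>0$) is
\[
\bar{\phi}_{il} \;=\; \phi_{il}\bigl(1-\textstyle\sum_k \widetilde{r}_{lk}\bigr) + \sum_j \phi_{ij}\widetilde{r}_{jl},
\]
i.e., the composition of a customer-triggered ride out of $i$ with a possible immediate redirection at the destination. This is exactly the construction used in Lemma~\ref{lem:reb_availability} with the pricing quantiles fixed to one. I would then verify that $A_{i,\infty}(\widetilde{\r})=\widehat{q}_i$ satisfies the steady-state flow-balance equations $A_i\Phi_i' = \sum_k A_k\bar{\phi}_{ki}$, where $\Phi_i' = \sum_j\phi_{ij}$. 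Substituting $A_k=\widehat{q}_k$ and $\widetilde{r}_{ij} = \widehat{z}_{ij}/\sum_k\phi_{ki}\widehat{q}_k$, the right-hand side collapses to $\sum_k\phi_{ki}\widehat{q}_k - \sum_l\widehat{z}_{il} + \sum_j\widehat{z}_{ji}$, which equals $\widehat{q}_i\Phi_i'$ precisely because $(\widehat{\q},\widehat{\z})$ satisfies the supply-circulation constraint of Algorithm~\ref{alg:noprices}.

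Second, I would upgrade this consistency into an identification. The flow-balance system is linear and homogeneous in $A$, so irreducibility of the effective routing matrix (a standing assumption; cf.\ Appendix~\ref{sec:irreducibility}) forces its solutions to form a one-dimensional cone. Proposition~\ref{prop:infinite} fixes the scaling by asserting $\max_i A_{i,\infty}(\widetilde{\r})=1$. To match this, I would argue that $\max_i\widehat{q}_i=1$ at any LP optimum with strictly positive value: otherwise the rescaled pair $(c\widehat{\q},c\widehat{\z})$ with $c=1/\max_i\widehat{q}_i>1$ is feasible (all constraints being positively homogeneous in $(\widehat{\q},\widehat{\z})$) and multiplies the objective by $c$, contradicting optimality. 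Combining the two normalizations yields $A_{i,\infty}(\widetilde{\r})=\widehat{q}_i$ for every $i$.

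Finally, substituting $A_{i,\infty}(\widetilde{\r})=\widehat{q}_i$ into
\[
\obj_\infty(\widetilde{\r}) \;=\; \sum_{i,j}\phi_{ij}A_{i,\infty}(\widetilde{\r})I_{ij} \;-\; \sum_{i,j}c_{ij}\widetilde{r}_{ij}\sum_k\phi_{ki}A_{k,\infty}(\widetilde{\r}),
\]
and using $\widetilde{r}_{ij}\sum_k\phi_{ki}\widehat{q}_k = \widehat{z}_{ij}$, reduces the ride reward to $\sum_{i,j}\phi_{ij}\widehat{q}_i I_{ij}$ and the redirection cost to $\sum_{i,j}c_{ij}\widehat{z}_{ij}$, whose difference is precisely $\widehat{\obj}(\widehat{\q},\widetilde{\r})$. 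The main obstacle is the identification step in the second paragraph: one must carefully combine the irreducibility of the effective chain with the $\max_i\widehat{q}_i=1$ normalization to rule out scalings that might otherwise give $A_{i,\infty}(\widetilde{\r})<\widehat{q}_i$. Degenerate situations (an LP value of zero, where the inequality is trivial, or nodes with $\sum_k\phi_{ki}\widehat{q}_k=0$, at which $\widetilde{r}_{ij}$ is vacuously set to zero) can be handled as separate cases.
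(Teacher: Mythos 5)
Your proof is correct, and it takes a genuinely different route from the paper's. The paper never identifies the availability profile of the price-free system directly; instead it (i) invokes Lemma~\ref{lem:reb_availability} on the intermediate system that implements \emph{both} the LP quantiles $\widehat{\q}$ and $\widetilde{\r}$, concluding all availabilities there are $1$ so that $\obj_\infty(\widehat{\q},\widetilde{\r})=\widehat{\obj}(\widehat{\q},\widetilde{\r})$, and then (ii) removes the quantiles one node at a time, arguing that raising $\widehat{q}_j$ to $1$ leaves every edge flow $f_{jk,\infty}$ unchanged because the drop in availability at $j$ exactly cancels the increase in arrival rate. You instead work directly with the final chain under $\widetilde{\r}$ alone, verify that $\widehat{\q}$ solves its traffic equations $A_i\sum_j\phi_{ij}=\sum_k A_k\bar{\phi}_{ki}$ (your algebra correctly reduces this to the LP's supply-circulation constraint via $\widetilde{r}_{ij}\sum_k\phi_{ki}\widehat{q}_k=\widehat{z}_{ij}$), and pin down the scaling by Perron--Frobenius plus the normalization $\max_i\widehat{q}_i=1$. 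The two arguments hinge on the same two ingredients --- the LP circulation constraint and a uniqueness-of-the-invariant-measure fact (which the paper hides inside Lemma~\ref{lem:equal_availabilities}) --- but your decomposition buys a cleaner statement: it yields the equality $\obj_\infty(\widetilde{\r})=\widehat{\obj}(\widehat{\q},\widetilde{\r})$ explicitly and surfaces the normalization $\max_i\widehat{q}_i=1$ that the paper relegates to a footnote, whereas the paper's version reuses existing lemmas and avoids an explicit eigenvector argument. One remark: your worry about the identification step is actually unnecessary for the stated inequality --- even if $\max_i\widehat{q}_i<1$, proportionality alone gives $A_{i,\infty}(\widetilde{\r})=\widehat{q}_i/\max_j\widehat{q}_j$ and hence $\obj_\infty(\widetilde{\r})=\widehat{\obj}(\widehat{\q},\widetilde{\r})/\max_j\widehat{q}_j\geq\widehat{\obj}(\widehat{\q},\widetilde{\r})$, since the LP value is nonnegative (the all-zero solution is feasible); so the only genuinely degenerate case is the one you already flag, where some denominator $\sum_k\phi_{ki}\widehat{q}_k$ vanishes.
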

\begin{proof}
Consider first $\obj_\infty(\widetilde{\q},\widetilde{\r})$, the objective obtained when implementing both the redirection policy $\widetilde{\r}$ and the quantiles $\widetilde{\q}$ that Algorithm \ref{alg:noprices} solves for. By the same argument as in Lemma \ref{lem:reb_availability}, all availabilities are equal to 1 (and all traffic intensities are equal) in this system, and thus its objective matches $\widehat{\obj}(\widetilde{\q},\widetilde{\r})$. In order for us to compare $\obj_\infty(\widetilde{\q},\widetilde{\r})$ with $\obj_\infty(\widetilde{\r})$, consider a node $v\in\arg\max_j \widetilde{q}_j$. 
Increasing each quantile by a factor of $1/\widehat{q}_{v}$, we obtain quantiles $\bar{\q}$. Notice that in the system with quantiles $\bar{\q}$, the traffic intensity at each node is changed by the same factor, so the traffic intensities are still equal and the availabilities are still equal at every node.
In fact, for the relaxation in Algorithm \ref{alg:noprices}, there exists at least one $i$ such that $q_i=1$, so no quantile changes (when allowing for delays and scaling demand with the number of units as in Section \ref{ssec:delay}, this would not necessarily hold true).
Thus, $\obj_\infty(\bar{\q},\widetilde{\r})\geq \widehat{\obj}(\widetilde{\q},\widetilde{\r})$. Thereafter, for each node $j\neq v$, we increase its quantile to 1. Notice that each such change only decreases the traffic intensity at $j$, so the maximum traffic intensity remains unchanged. 
The lemma follows because the decrease in the traffic intensity (and thus availability) at each node $j\neq v$ is exactly balanced by the increased rate of arrivals at $j$. Formally, we have that $f_{jk,\infty}(\bar{\q},\widetilde{\r})$ remains unchanged when the $j$th coordinate of the quantiles is set to 1. Since the per-ride rewards are also assumed to remain unchanged (as in this section we assume $I_{ij}$ are constant, changing quantiles does not affect them), we then find $\obj_\infty(\widetilde{\r})=\obj_\infty(\bar{\q},\widetilde{\r})\geq \widehat{\obj}(\widehat{\q},\widetilde{\r})$.
\end{proof}

Now, using Lemma \ref{lem:noprice_inf} in place of Lemma \ref{lem:reb_availability} in the proof of Theorem \ref{thm:rerouting}, we get the following.

\begin{theorem}[demand redirection without pricing]\vspace{0.1in}
onsider an for the $m$-unit system. Let $\widetilde{\q}$ and $\widetilde{\r}$ be as in Algorithm~\ref{alg:noprices}, and $\opt_m$ be the objective of the optimal state-dependent policy in the $m$-unit system. Then
$$
\obj_m(\widetilde{\r}) \geq \frac{m}{m+n-1}\opt_m.
$$
\end{theorem}
To illustrate the discussion at the end of Section \ref{sec:pricing}, we want to point out that $\obj_m(\widetilde{\r})$ above is not based on a policy that fulfills the balanced demand property; thus the bound between $\obj_m(\widetilde{\r})$ and $\obj_\infty(\widetilde{\r})$ cannot be obtained by factoring out $A_{i,m}(\widetilde{\r})$, and is instead based on the infinite-unit limit.

\subsection{Delays Without Prices} 

Accommodating settings in which we are not allowed pricing, but do have delays, requires an additional idea. This is because the argument in Section \ref{ssec:delay} explicitly relied on pricing to ensure that (on average) not too many units are in transit simultaneously, thereby enabling a lower bound on the maximum availability. 
Without prices to regulate demand, we can no longer control the maximum availability. Instead, we use the following stochastic dominance characterization for closed-queueing networks.

\begin{lemma}[Theorem 3.8 in Chen and Yao~\cite{chen2013fundamentals}]\vspace{0.1in}
\label{thm:chen}
In a closed Jackson network, with state-independent service rates, increasing the service rate functions,
in a pointwise sense, at a non-empty subset of nodes increases throughput. 
\end{lemma}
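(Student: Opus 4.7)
The plan is to exploit the product-form stationary distribution of a closed Jackson network with state-independent service rates (i.e., a Gordon--Newell network, cf. Section~\ref{ssec:queuemodel}) to reduce the monotonicity claim to an analytic inequality about the normalization constants $\{G_m(r)\}$. Since the routing probabilities are unchanged, the invariant measure $\{w_i\}$ of the routing matrix is fixed, and increasing each $\mu_i$ only decreases the corresponding traffic intensity $r_i = w_i/\mu_i$. By the Gordon--Newell theorem together with Equation~\eqref{eq:avail}, the throughput at node $i$ in steady state equals $\mu_i A_{i,m} = w_i \cdot G_{m-1}(r)/G_m(r)$, so (normalizing $\sum_i w_i = 1$) the total throughput is $\Lambda(r) = G_{m-1}(r)/G_m(r)$. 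It therefore suffices to show that $\Lambda(r)$ is non-increasing in each $r_i$.

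Next, differentiating $G_k(r) = \sum_{\x \in \mathcal{S}_{n,k}} \prod_j r_j^{x_j}$ termwise yields the identity $r_i \cdot \partial G_k(r)/\partial r_i = G_k(r) \cdot \EE_{\pi_k}[X_i]$, where $\pi_k$ denotes the Gordon--Newell stationary distribution on $\mathcal{S}_{n,k}$. A short algebraic manipulation shows that $\partial \Lambda(r)/\partial r_i \leq 0$ is equivalent to the occupancy monotonicity statement
\begin{equation*}
\EE_{\pi_{m-1}}[X_i] \;\leq\; \EE_{\pi_m}[X_i] \qquad \forall\, i,
\end{equation*}
i.e., adding a unit to the closed network can only increase the expected occupancy at each node.

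This occupancy monotonicity will in turn be derived from log-concavity of the sequence $\{G_m(r)\}_m$ in $m$. Indeed, the substitution $y_i = x_i - k$ in the definition of $G_m$ yields the survival-function identity
\begin{equation*}
\PP_{\pi_m}[X_i \geq k] \;=\; r_i^k \cdot \frac{G_{m-k}(r)}{G_m(r)}.
\end{equation*}
Once $a_m \triangleq G_m(r)/G_{m-1}(r)$ is known to be non-increasing in $m$ (which is precisely the log-concavity $G_m^2 \geq G_{m-1} G_{m+1}$), the ratio $G_{m-k}(r)/G_m(r) = 1/\prod_{j=m-k+1}^{m} a_j$ is itself non-increasing in $m$ for every fixed $k$. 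This gives coordinatewise stochastic dominance of $\pi_{m+1}$ over $\pi_m$, and hence the required inequality for expectations.

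The main obstacle is establishing the log-concavity of $\{G_m(r)\}_m$. This is a classical fact that admits several proofs. The slickest is via generating functions: since $\sum_{m \geq 0} G_m(r)\, z^m = \prod_i (1 - r_i z)^{-1}$, this generating function is a P\'olya frequency function (as a product of functions of this form), so its Taylor coefficients form a P\'olya frequency sequence and are in particular log-concave. Alternatively, one can give a direct combinatorial proof via a weighted bijection that maps each pair $(\x,\y) \in \mathcal{S}_{n,m-1} \times \mathcal{S}_{n,m+1}$ to pairs $(\x + \e_j,\, \y - \e_j) \in \mathcal{S}_{n,m} \times \mathcal{S}_{n,m}$ for indices $j$ with $y_j > 0$; a careful degree-counting argument on the resulting weighted bipartite graph then produces the inequality $G_{m-1} G_{m+1} \leq G_m^2$.
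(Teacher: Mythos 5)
The paper does not actually prove this lemma---it imports it verbatim from Chen and Yao (Theorem 3.8) and uses it as a black box in Lemma \ref{corollary:chao} and in the alternate proof of Lemma \ref{lem:finite}. So the relevant question is whether your self-contained proof is sound, and it essentially is. Your chain of reductions is the standard one and each link checks out: with routing fixed, raising $\mu_i$ lowers $r_i = w_i/\mu_i$; the throughput at node $i$ is $\mu_i A_{i,m} = w_i\, G_{m-1}(r)/G_m(r)$ by Equation \eqref{eq:avail}, so total throughput is (up to the fixed constant $\sum_i w_i$) the ratio $\Lambda(r)=G_{m-1}(r)/G_m(r)$; the identity $r_i\,\partial_{r_i}G_k = G_k\,\EE_{\pi_k}[X_i]$ gives $r_i\,\partial_{r_i}\Lambda = \Lambda\left(\EE_{\pi_{m-1}}[X_i]-\EE_{\pi_m}[X_i]\right)$, reducing everything to occupancy monotonicity in $m$; and the survival identity $\PP_{\pi_m}[X_i\geq k]=r_i^k\,G_{m-k}/G_m$ reduces that to log-concavity of $m\mapsto G_m(r)$, which is a true classical fact ($G_m=h_m(r)$, and $h_m^2-h_{m-1}h_{m+1}=s_{(m,m)}(r)\geq 0$ by Jacobi--Trudi, or the P\'olya-frequency argument you give).

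Two remarks. First, there is a direction slip in the stochastic-dominance step: if $a_j=G_j/G_{j-1}$ is non-increasing, then the window product $\prod_{j=m-k+1}^{m}a_j$ is non-increasing in $m$, so its reciprocal $G_{m-k}/G_m$ is non-\emph{de}creasing in $m$---which is what you need for $\PP_{\pi_{m+1}}[X_i\geq k]\geq \PP_{\pi_m}[X_i\geq k]$; as written ("non-increasing") the sentence would give dominance in the wrong direction, but the correct conclusion follows immediately from what you actually established. Second, the entire weight of the proof rests on the log-concavity $G_m^2\geq G_{m-1}G_{m+1}$, for which you cite P\'olya-frequency theory and only gesture at a combinatorial alternative; given that the paper itself cites the whole lemma to a textbook, deferring this one classical inequality is reasonable, but it is the step a referee would ask you to spell out. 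It is worth noting that your proposed combinatorial route (a weighted correspondence between $\mathcal{S}_{n,m-1}\times\mathcal{S}_{n,m+1}$ and $\mathcal{S}_{n,m}\times\mathcal{S}_{n,m}$) is close in spirit to the paper's own biregular-graph construction in Lemma \ref{lem:bipartite}, which bounds the same ratio $G_{m-1}/G_m$ from below rather than proving its monotonicity; carrying your construction out would give a nice unified treatment of both facts.
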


In our context, this is equivalent to saying that increasing quantiles at a subset of nodes only increases throughput. In fact, one can show that throughput also increases locally, i.e., increasing quantiles at one node (which we henceforth refer to as \emph{point quantiles}) does not decrease the rate of units on any edge.

\begin{lemma}\vspace{0.1in}
\label{corollary:chao}
Let $\q=\{q_i\}$ be a vector of point quantiles, and $\widetilde{\q}$ be a vector of point quantiles with $\widetilde{q}_k\geq q_k \,\forall\, k$. Then for any pair $(i,j)$, we have $f_{ij,m}(\q)\leq f_{ij,m}(\widetilde{\q})$, i.e. the rate of realized trips from $i$ to $j$ does not decrease when point quantiles are increased.
\end{lemma}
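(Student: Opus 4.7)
My plan is to reduce the per-edge monotonicity statement to the aggregate-throughput monotonicity provided by Lemma~\ref{thm:chen}. The crucial structural observation is that, under point pricing, the Gordon--Newell routing matrix becomes $\q$-independent: with $q_{ij}=q_i$, the routing probabilities collapse to
$$
\lambda_{ij} \;=\; \frac{\phi_{ij}q_i}{\sum_k \phi_{ik}q_i} \;=\; \frac{\phi_{ij}}{\sum_k \phi_{ik}},
$$
which does not involve $\q$. Consequently, the invariant distribution $\{w_i\}$ of the routing matrix depends only on the demand rates $\{\phi_{ij}\}$, not on $\q$.

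The next step is to write the edge flows in a factored form. Combining $f_{ij,m}(\q) = A_{i,m}(\q)\phi_{ij}q_i$ with the availability formula in Equation~\eqref{eq:avail} and the identity $r_i(\q) = w_i/(q_i \sum_k \phi_{ik})$, I would obtain
$$
f_{ij,m}(\q) \;=\; \frac{G_{m-1}(\q)}{G_m(\q)} \cdot w_i\, \lambda_{ij}.
$$
The decisive feature of this identity is that $w_i\lambda_{ij}$ is independent of $\q$, so the entire $\q$-dependence of every edge flow sits in the common scalar factor $G_{m-1}(\q)/G_m(\q)$. Therefore, per-edge monotonicity in $\q$ is equivalent to the monotonicity of this single ratio.

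Finally, summing over all pairs $(i,j)$ and using $\sum_j \lambda_{ij} = 1$ and the normalization of $w$, I would identify the aggregate throughput as $\sum_{i,j}f_{ij,m}(\q) = (G_{m-1}(\q)/G_m(\q))\sum_i w_i$, which differs from $G_{m-1}(\q)/G_m(\q)$ only by a $\q$-independent constant. Under point pricing, the state-independent service rate at node $i$ is $\mu_i = q_i\sum_k \phi_{ik}$, which is pointwise nondecreasing in $\q$; hence Lemma~\ref{thm:chen} implies that the aggregate throughput, and therefore $G_{m-1}(\q)/G_m(\q)$, is pointwise nondecreasing in $\q$. Plugging this back into the factorization yields the desired edge-level inequality $f_{ij,m}(\widetilde{\q}) \geq f_{ij,m}(\q)$. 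The main obstacle I anticipate is cleanly establishing the factorization: once one recognizes that $\lambda_{ij}$ and $w_i$ are $\q$-independent under point pricing, the rest is a short computation and an appeal to Chen--Yao, but it is the invariance of the routing layer that makes the argument go through.
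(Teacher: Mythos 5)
Your proposal is correct and follows essentially the same route as the paper: both arguments rest on the observation that under point pricing the routing matrix (hence $w_i$) is $\q$-independent, so every edge flow equals a common $\q$-dependent scalar $G_{m-1}(\q)/G_m(\q)$ times an edge-specific constant, and the Chen--Yao aggregate-throughput monotonicity then forces that scalar (and hence each edge flow) to be nondecreasing. The paper phrases this by showing the ratio $f_{ij,m}(\q)/f_{ij,m}(\widetilde{\q})$ is independent of $(i,j)$, which is just your factorization in disguise.
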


\begin{proof}
The proof relies on two observations. Note first for $\q$ and $\widetilde{\q}$, we have $$\frac{\phi_{ij}q_i}{\sum_k \phi_{ik}q_i}=\frac{\phi_{ij}\widetilde{q}_i}{\sum_k \phi_{ik}\widetilde{q}_i}\,\forall\, i,j,
$$
and therefore, letting $\w(\q')$ denote the eigenvector of the routing matrix $\{\phi_{ij}(q_i')/\sum_k\phi_{ik}(q_i')\}_{i,j\in[n]^2}$ (see Section \ref{ssec:point_pricing}), we obtain $w_i(\widetilde{\q})=w_i(\q)$. Define $\Gamma_m(\q)\triangleq G_m(\q)/G_{m-1}(\q)$. We now have that the ratio of the rates $f_{ij,m}(\q)/f_{ij,m}(\widetilde{\q})$ is equal to
$$
\frac{f_{ij,m}(\q)}{f_{ij,m}(\widetilde{\q})} = \frac{A_{i,m}(\q)q_i\phi_{ij}}{A_{i,m}(\widetilde{\q})\widetilde{q}_i\phi_{ij}}
=\frac{\Gamma_m(\q)r_i(\q)q_i}{\Gamma_m(\widetilde{\q})r_i(\widetilde{\q})\widetilde{q}_i}
=\frac{\Gamma_m(\q)\frac{w_i(\q)}{\sum_k\phi_{ik}q_i}q_i}{\Gamma_m(\widetilde{\q})\frac{w_i(\widetilde{\q})}{\sum_k\phi_{ik}\widetilde{q}_i}\widetilde{q}_i}=\frac{\Gamma_m(\q)}{\Gamma_m(\widetilde{\q})}.
$$
Note that the ratio of $f_{ij,m}(\q)$ and $f_{ij,m}(\widetilde{\q})$ does not depend on $i$ and $j$.
Moreover, from Lemma \ref{thm:chen} we have
$\sum_{i,j} f_{ij,m}(\q)\leq  \sum_{i,j}f_{ij,m}(\widetilde{\q})$. 
Combining the two, we get $f_{ij,m}(\q)\leq f_{ij,m}(\widetilde{\q})$.
\end{proof}

This allows us to prove the guarantee of Theorem \ref{thm:pricing_w_delays} for settings in which prices cannot be used to provide a lower on the maximum availability within the system.

\begin{algorithm}[!h]
\caption{The Rate-Limited Elevated Flow Relaxation Program for Redirection w/o Prices}
\label{alg:braverman}
\begin{algorithmic}[1]
\REQUIRE scaling paramter $\epsilon_m$, arrival rates $\phi_{ij}$, rewards $I_{ij}$, rerouting costs $c_{ij}$, travel-times $\tau_{ij}$. \\ 
\STATE Find $\{\widetilde{q}_{i},\widetilde{z}_{ij}\}$ that solve the following relaxation:
\[
\begin{array}{rcll}
\max_{\q,\z} &&  \sum_{i,j} \prn*{\phi_{ij}q_{i}I_{ij}-c_{ij}z_{ij}}\\
\sum_{i,j}\phi_{ij}\tau_{ij}q_i+z_{ij}  & \leq & m\\
\sum_k \prn*{\phi_{ki}q_{k} + z_{ki}}  & = & \sum_{j}\prn*{ \phi_{ij}q_{i} + z_{ij}} & \forall i\\
\sum_k z_{ik} & \leq & \sum_j \phi_{ji}q_{j} &\forall i\\
q_{i} & \in & [0,1] & \forall i
\end{array}
\]
\STATE Output redirection probabilities $\widetilde{r}_{ij} = \widetilde{z}_{ij}/\sum_{k}\phi_{ki}\widetilde{q}_{k}$
\end{algorithmic}
\end{algorithm}

\begin{theorem}\vspace{0.1in}
Consider any objective for the $m$-unit system. Let $\widetilde{\q}$ and $\widetilde{\r}$ be as in Algorithm~\ref{alg:braverman}  with $\varepsilon_m\defeq2\sqrt{\ln m/m}$, and $\opt_m$ be the objective of the optimal state-dependent policy in the $m$-unit system and $m\geq100$. \begin{equation*}
\frac{\obj_m(\widetilde{\r})}{\opt_m}\geq \prn*{1-\varepsilon_m}\prn*{\frac{\sqrt{m\ln m}}{\sqrt{m\ln m}+n-1} - \frac{3}{\sqrt{m\ln m}}}. 
\end{equation*}
\end{theorem}

\begin{proof}
The same proof as in Theorem \ref{thm:pricing_w_delays} guarantees that using point prices as given by $\q(1-\epsilon_m)$, where $\q$ comes from the solution of the relaxation in Algorithm \ref{alg:braverman} yields the required guarantee. Lemma \ref{corollary:chao} then guarantees that increasing all quantiles to one yields a solution no worse.
\end{proof}

\noindent We remark that with $m\to\infty$, the above theorem recovers and strengthens through a finite guarantee, and a rate of convergence, the result of Braverman et al \cite{braverman2016empty}. In fact, the same reasoning gives an analogue of Corollary \ref{cor_lin_rate} which (for some regimes) even gives a linear rate of convergence to the objective of the elevated flow relaxation.

Finally, we note that Lemma \ref{corollary:chao} also yields an alternate proof of Lemma \ref{lem:finite}: given quantiles $\q$ that do not induce balanced demand, we consider a system with rates $\widetilde{\phi}_{ij}=\phi_{ij}\frac{\max_k r_k(\q)}{r_i(\q)}$. We observe that (i) the objectives with rates $\widetilde{\phi}_{ij}$ and rates $\phi_{ij}$ are the same in an infinite unit system and (ii) that the system with rates $\widetilde{\phi}_{ij}$ obeys the balanced demand property. Thus, the counting argument of \cite{whitt1984} guarantees an objective within $m/(m+n-1)$  of the infinite unit system in a system with rates $\widetilde{\phi}_{ij}$. However, by (i) the latter was equal to the upper bound on $\opt_m$. Since Lemma \ref{corollary:chao} implies that the $m$-unit system with rates $\phi_{ij}$ has objective no worse than the $m$-unit system with rates $\widetilde{\phi}_{ij}$, the statement of the lemma follows.

\section{Auxiliary Lemma}\label{app:proofs_point}
We present a basic Chernoff tail bound for Poisson random variables, which we use in Section~\ref{ssec:delay}
\begin{lemma}
\label{lem:poissontail} \vspace{0.1in}
For $X\sim\,$Poisson$(\lambda)$, we have for any $0\leq x\leq \lambda$:
$$\PP[X>\lambda + x]\leq \exp\prn*{-\frac{x^2}{2\lambda}\prn*{1 - \frac{x}{\lambda}}}$$
\end{lemma}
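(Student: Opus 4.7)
The plan is to establish the Chernoff-type bound via the standard exponential Markov inequality applied to the Poisson moment generating function, and then verify that the classical Chernoff exponent dominates the claimed one via a short Taylor-type argument.

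First I would recall that for $X \sim \text{Poisson}(\lambda)$, the moment generating function is $\EE[e^{tX}] = \exp(\lambda(e^t - 1))$, so Markov's inequality gives, for any $t > 0$,
\begin{equation*}
\PP[X > \lambda + x] \;\leq\; \exp\prn*{\lambda(e^t - 1) - t(\lambda + x)}.
\end{equation*}
Optimizing the right-hand side over $t$, one obtains $t^* = \ln(1 + x/\lambda)$, which on substitution yields the classical bound
\begin{equation*}
\PP[X > \lambda + x] \;\leq\; \exp\prn*{x - (\lambda + x)\ln\prn*{1 + \tfrac{x}{\lambda}}}.
\end{equation*}

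Next I would reduce the statement to showing that, with $u \defeq x/\lambda \in [0,1]$, the inequality
\begin{equation*}
(1+u)\ln(1+u) - u \;\geq\; \tfrac{1}{2}u^2(1-u)
\end{equation*}
holds; after multiplying through by $\lambda$ this is exactly the comparison between the classical exponent and $-\tfrac{x^2}{2\lambda}(1 - x/\lambda)$ required by the lemma.

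The main (though still routine) work is verifying this one-variable inequality. I would do so by setting $f(u) \defeq (1+u)\ln(1+u) - u - \tfrac{1}{2}u^2(1-u)$ and checking $f(0) = 0$, $f'(0) = 0$, together with
\begin{equation*}
f''(u) \;=\; \frac{1}{1+u} - 1 + 3u \;=\; \frac{2u + 3u^2}{1+u} \;\geq\; 0 \quad \text{for } u \geq 0.
\end{equation*}
Hence $f'$ is nondecreasing with $f'(0)=0$, so $f' \geq 0$ on $[0,\infty)$, and then $f$ is nondecreasing with $f(0)=0$, so $f \geq 0$ on $[0,1]$. Combining this with the classical Chernoff exponent yields the stated bound. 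I do not expect any real obstacle here; the only delicate point is the second-derivative computation, and everything else is an invocation of standard Chernoff machinery.
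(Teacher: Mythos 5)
Your proposal is correct and follows essentially the same route as the paper: the standard exponential Markov inequality with the Poisson MGF, optimization at $\theta = \ln(1+x/\lambda)$, and then a comparison of the resulting exponent with the claimed one. The only (minor) difference is in the last step, where the paper substitutes the elementary bound $\ln(1+y)\geq y - y^2/2$ directly into the exponent, while you verify the equivalent one-variable inequality $(1+u)\ln(1+u)-u\geq \tfrac12 u^2(1-u)$ by a second-derivative argument; both are valid and your computation checks out.
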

\begin{proof}
Using a standard Chernoff bound argument, we have for any $\theta\geq 0$:
\begin{align*}
\PP[X>\lambda+x]
&= \PP[e^{\theta X}>e^{\theta(\lambda+x)}]
\leq
e^{-\theta(\lambda+x)}\EE\brk*{e^{\theta X}} = e^{-\theta(\lambda+x)}\cdot e^{\lambda(e^{\theta}-1)}
\end{align*}
Now, optimizing over the choice of $\theta$, we get
\begin{align*}
\PP[X>\lambda+x]
&\leq\exp\prn*{\inf_{\theta}\prn*{\lambda\prn*{e^{\theta}-1-\theta}-x\theta}}\\
&=\exp\prn*{x-(x+\lambda)\log\prn*{1+x/\lambda}}\quad\mbox{(Setting $\theta=\log\prn*{1+x/\lambda}$)}\\
&\leq \exp\prn*{x-(x+\lambda)\prn*{\frac{x}{\lambda} - \frac{x^2}{2\lambda^2}}}
=\exp\prn*{-\frac{x^2}{2\lambda}\prn*{1 - \frac{x}{\lambda}}} 
\end{align*}
\end{proof}

\section{Alternate Proof of  Lemma \ref{lem:finite}}\label{app:zahorjan_proof}
As we mentioned in the introduction, Lemma \ref{lem:finite} can also be proven using stochastic coupling arguments. One such argument was given by Zahorjan et al. \cite{zahorjan1982balanced} to obtain an inequality (Inequality 6 in \cite{zahorjan1982balanced}) that, translated into our context, guarantees
$$
\obj_{m}^{T}(\q) \geq \frac{m}{(m+n-1)r_{\max}(\q)}.
$$
Since the left-hand side is equal to $$\sum_{i=1}^n A_{i,m}(\q)\phi_{ij}q_{ij}=\sum_{i=1}^n \frac{G_{m-1}(\q)}{G_m(\q)}w_i(\q),$$
we can derive (normalizing with $\sum_i w_i(\q)=1$) that
$$
\frac{G_{m-1}(\q)}{G_m(\q)}\geq \frac{m}{(m+n-1)r_{\max}(\q)} ,
$$
which, by Lemma \ref{lem:max_availability}, implies the result.

\end{document}